\newcommand{\Eds}     [0] {\ensuremath{\mathcal{E}_\textsc{$r$DS}}\xspace}
\newcommand{\Cds}     [0] {\ensuremath{\mathcal{C}_\textsc{$r$DS}}\xspace}
\newcommand{\Lcds}    [0] {\ensuremath{L_{\Cds}}\xspace}
\newcommand{\Esc}     [0] {\ensuremath{\mathcal{E}_\textsc{$r$IS}}\xspace}
\newcommand{\Csc}     [0] {\ensuremath{\mathcal{C}_\textsc{$r$IS}}\xspace}
\newcommand{\Lcsc}    [0] {\ensuremath{L_{\Csc}}\xspace}
\newcommand{\EFD}     [0] {\ensuremath{\mathcal{E}_\textsc{$\cal{F}$D}}\xspace}
\newcommand{\CFD}     [0] {\ensuremath{\mathcal{C}_\textsc{$\cal{F}$D}}\xspace}
\newcommand{\LFD}    [0] {\ensuremath{L_{\CFD}}\xspace}
\newcommand{\EFDall} [0] {\ensuremath{\EFD=(\CFD,\LFD)}\xspace}
\newcommand{\rind}  [0] {\textsc{$r$-Independent Set}\xspace}
\newcommand{\eqec}  [0] {\ensuremath{\equiv_{\Eds,t}}\xspace}
\newcommand{\rdom}  [0] {\textsc{$r$-Dominating Set}\xspace}
\newcommand{\tw}{{\mathbf{tw}}}
\newcommand{\set}[1]{\ensuremath{\left\{#1\right\}}}
\newcommand{\poly}{\mathop{\rm poly}}
\newcommand{\rp}{{\bf rp}}
\renewcommand{\leq}{\leqslant}
\renewcommand{\geq}{\geqslant}
\renewcommand{\le}{\leqslant}
\newcommand{\mc}{\mathcal}
\newcommand{\ci}{~\cite}
\renewcommand{\sc}{\textsc}
\renewenvironment{proof}[1][]{\par \noindent {\bf Proof:#1}\ }{\hfill$\Box$\\}
\newenvironment{proofrDomSet}[1][]{\par \noindent {\bf Proof of Lemma~\ref{lem:rDomCharacterizes}:#1}\ }{\hfill$\Box$\\}
\newenvironment{proofDeletion}[1][]{\par \noindent {\bf Proof of Theorem~\ref{thm:KernelFDeletion}:#1}\ }{\hfill$\Box$\\}
\newenvironment{runningExample}{\par \medskip \noindent \emph{Running example:}}{\hfill$\diamond$\\}
\newcommand{\YES}{\textsc{Yes}}
\newtheorem{fact}{Fact}
\newcommand{\C}{\mathcal{C}}
\author{Valentin Garnero\inst{1} \and Christophe Paul\inst{1} \and Ignasi Sau\inst{1} \and  Dimitrios M. Thilikos\inst{1}$^,$\inst{2}}
\title{Explicit linear kernels via dynamic programming\thanks{A short conference version of this article appeared in the \emph{Proc. of the 31st Symposium on Theoretical Aspects of Computer Science (STACS)},  volume 25 of LIPIcs, pages 312-324, Lyon, France, March 2014.
This work was supported by the ANR project AGAPE (ANR-09-BLAN-0159) and the Languedoc-Roussillon Project ``Chercheur d'avenir'' KERNEL. The fourth author was co-financed by the E.U. (European Social Fund - ESF) and Greek national funds through the Operational Program ``Education and Lifelong Learning'' of the National Strategic Reference Framework (NSRF) - Research Funding Program: ``Thales. Investing in knowledge society through the European Social Fund''. Emails of authors: {\sf Valentin.Garnero@lirmm.fr},
{\sf Christophe.Paul@lirmm.fr},
{\sf Ignasi.Sau@lirmm.fr},
{\sf sedthilk@thilikos.info}.}
}
\titlerunning{Explicit linear kernels via dynamic programming}
\authorrunning{Valentin Garnero, Christophe Paul,  Ignasi Sau, and Dimitrios M. Thilikos}
\institute{AlGCo project-team, CNRS, LIRMM, Montpellier, France. \and Department of Mathematics, \\National and Kapodistrian University of Athens, Athens, Greece.\\ \vspace{.15cm}}
\begin{document}

\maketitle
\setcounter{footnote}{0}

\begin{abstract}
Several algorithmic meta-theorems on kernelization have appeared in the last years, starting with the result of Bodlaender \emph{et al}. [FOCS 2009] on graphs of bounded genus, then generalized by Fomin \emph{et al}. [SODA 2010] to graphs excluding a fixed minor, and by Kim \emph{et al}. [ICALP 2013] to graphs excluding a fixed topological minor.  Typically, these results guarantee the existence of linear or polynomial kernels on sparse graph classes for problems satisfying some generic conditions but, mainly due to their generality, it is not clear
how to derive from them constructive kernels with explicit constants.

In this paper we make a step toward a fully constructive meta-kernelization theory on sparse graphs. Our approach is based on a more explicit protrusion replacement machinery that, instead of expressibility in CMSO logic,  uses dynamic programming, which allows us to find an explicit upper bound on the size of the derived kernels. We demonstrate the usefulness of our techniques by providing the first explicit linear kernels for \textsc{$r$-Dominating Set} and  \textsc{$r$-Scattered Set} on apex-minor-free graphs, and  for \textsc{Planar-$\mc{F}$-Deletion} on graphs excluding a fixed (topological) minor in the case where all the graphs in $\mc{F}$ are connected.

\vspace{0.25cm} \textbf{Keywords:} parameterized complexity, linear kernels, dynamic programming, protrusion replacement, graph minors. \end{abstract}

%-------------------------------------------------------------------------------------------------------------------------------
\section{Introduction}
\label{sec:intro}
\textbf{Motivation.} Parameterized complexity deals with problems whose instances $I$ come equipped  with an additional integer parameter $k$, and the objective is to obtain algorithms whose running time is of the form $f(k) \cdot \poly(|I|)$, where $f$ is some computable function (see~\cite{FlGr06,DF99} for an introduction to the field). We will be only concerned with problems defined on graphs. A fundamental notion in parameterized complexity is that of \emph{kernelization}, which asks for the existence of polynomial-time preprocessing algorithms that produce equivalent instances whose size depends exclusively (preferably polynomially or event linearly)  on $k$. Finding kernels of size polynomial or linear in $k$ (called \emph{linear kernels}) is one of the major goals of this area.

An influential work in this direction was the linear kernel of  Alber \emph{et al}.~\cite{AFN04} for \textsc{Dominating Set} on planar graphs, which was generalized by Guo and Niedermeier~\cite{GuNi07} to a family of problems on planar graphs. Several algorithmic meta-theorems on kernelization have appeared in the last years, starting with the result of Bodlaender \emph{et al}.~\cite{BFL+09} on graphs of bounded genus. After that, similar results have been obtained on larger sparse graph classes, such as graphs excluding a minor~\cite{FLST10} or a topological minor~\cite{KLP+12}.

%See also~\cite{GHO+13,GSZ13} for some recent meta-kernelization results considering structural %parameters.

Typically, the above results guarantee the {\sl existence} of linear or polynomial kernels on sparse graph classes for a number of problems satisfying some generic conditions but, mainly due to their generality, it is hard to derive from them {\sl constructive} kernels with {\sl explicit} constants. The main reason behind this non-constructibility is that the proofs rely on a property of problems called \emph{Finite Integer Index} (FII) that, roughly speaking, allows to replace large ``protrusions'' (i.e., large  subgraphs  with small boundary to the rest of the graph) with ``equivalent'' subgraphs of constant size. This substitution procedure is known as {\em protrusion replacer}, and while
its {\sl existence} has been proved, so far, there is no generic  way to {\sl construct} it. Using the technology developed in~\cite{BFL+09}, there are cases where protrusion replacements
can become constructive given the expressibility
of the problem in Counting Monadic Second Order (CMSO) logic. This approach is essentially based on extensions of
Courcelle's theorem~\cite{Cou90} that, even when they offer constructibility, it is hard to extract from them any {\sl explicit
constant} that upper-bounds the size of the derived kernel.\vspace{.25cm}

%\medskip

\noindent \textbf{Results and techniques.}
In this article we  tackle the above issues and make  a step toward a fully constructive meta-kernelization theory on sparse graphs with explicit constants. For this, we essentially substitute the algorithmic power of CMSO logic
with that of dynamic programming on graphs of bounded decomposability (i.e., bounded treewidth).
Our approach provides a  dynamic programming framework able to construct a protrusion
replacer for a wide variety of problems.

Loosely speaking, the framework that we present can be summarized as follows. First of all, we
propose a general definition of a problem encoding for the tables of dynamic programming when solving
parameterized problems on graphs of bounded treewidth. Under this setting, we
provide general conditions on whether such an encoding can yield a protrusion replacer.
While our framework can also be seen as a possible
formalization of dynamic programming, our purpose
is to use it for constructing protrusion replacement algorithms
and  linear kernels whose size is explicitly determined.

In order to obtain an explicit linear kernel for a problem $\Pi$, the main ingredient
is to prove that when solving $\Pi$ on graphs of bounded treewidth via dynamic programming, we can use tables such that the maximum difference between all the values that need to be stored is bounded by a function of the treewidth.
For this,
we prove in Theorem~\ref{thm:protrusionReplacement} that when the input graph excludes a fixed graph $H$ as a (topological) minor, this condition is sufficient  for constructing an explicit protrusion replacer algorithm, i.e., a polynomial-time algorithm that  replaces a large protrusion with an equivalent one whose size can be bounded by an {\sl  explicit} constant.
Such a protrusion replacer
can then be used, for instance, whenever it is possible
to compute a linear protrusion decomposition of the input graph (that is, an algorithm that partitions the graph into a part of size linear in $O(k)$ and a set of $O(k)$ protrusions). As there is a wealth of
results for constructing such decompositions~\cite{BFL+09,FLST10,KLP+12,FLMS12}, we can use them as a starting point
and, by applying dynamic programming, obtain an explicit linear kernel for $\Pi$.

We demonstrate the usefulness of this general strategy by providing the first explicit linear kernels for three distinct families of problems on sparse graph classes. On the one hand,  for each integer $r\geq 1$, we provide a linear kernel for \textsc{$r$-Dominating Set} and  \textsc{$r$-Scattered Set} on graphs excluding a fixed apex graph $H$ as a minor. Moreover, for each finite family $\mc{F}$ of connected graphs containing at least one planar graph, we provide a linear kernel for \textsc{Planar-$\mc{F}$-Deletion} on graphs excluding a fixed graph $H$ as a (topological) minor\footnote{In an earlier version of this paper, we also described a linear kernel for \textsc{Planar-$\mc{F}$-Packing} on graphs excluding a fixed graph $H$ as a minor. Nevertheless, as this problem is not directly vertex-certifiable (see Definition~\ref{def:vertexCertifiable}), for presenting it we should restate and extend many of the definitions and results given in Section~\ref{sec:framework} in order to deal with more general families of problems. Therefore, we decided not to include this family of problems in this article.}.
 %\ig{and for \textsc{Planar-$\mc{F}$-Packing}} on graphs excluding a fixed graph $H$ as a minor.
We chose these families of problems as they are all tuned by a secondary parameter that is
either the constant $r$ or the size of the graphs in the family ${\cal F}$. That way, we not only
capture a wealth of parameterized problems, but we also make explicit the contribution
of the secondary parameter in the size of the derived kernels.  (We would like to note that the constants involved in the kernels for \textsc{$r$-Dominating Set} and  \textsc{$r$-Scattered Set} (resp. \textsc{Planar-$\mc{F}$-Deletion}) depend on the function $f_c$ (resp. $f_m$) defined in Proposition~\ref{prop:tw-contraction} (resp. Proposition~\ref{prop:tw-minor}) in Section~\ref{sec:prelim}.)

\vspace{.25cm}

\noindent \textbf{Organization of the paper.} For the reader not familiar with the background used in previous work on this topic~\cite{BFL+09,FLST10,KLP+12}, some preliminaries can be found in Section~\ref{sec:prelim}, including graph minors, parameterized problems, (rooted) tree decompositions, boundaried graphs, the canonical equivalence relation $\equiv_{\Pi,t}$ for a problem $\Pi$ and an integer $t$, FII, protrusions, and protrusion decompositions. In Section~\ref{sec:framework} we introduce the basic definitions of our framework and present an explicit protrusion replacer.  The next three sections are devoted to showing how to apply our methodology to various families of problems, Namely, we focus on \textsc{$r$-Dominating Set} in Section~\ref{sec:rDomSet}, on  \textsc{$r$-Scattered Set}  in Section~\ref{sec:rScatSet}, and on \textsc{Planar-$\mc{F}$-Deletion} in Section~\ref{sec:PlanarFDeletion}. Finally, we conclude with some directions for further research in Section~\ref{sec:further}.

 %have been moved to the appendix.

%-------------------------------------------------------------------------------------------------------------------------------
\section{Preliminaries}
\label{sec:prelim}
% !TEX root = KernelsViaDP_arXiv3.tex

\textbf{Graphs and minors.} We use standard graph-theoretic notation (see~\cite{Die05} for any undefined terminology). Given a graph $G$, we let $V(G)$
denote its vertex set and $E(G)$ its edge set.
%We let $|G|$ denote the number of vertices in the graph $G$.
%Subscripts and
%superscripts are omitted if it is clear which graph is being referred to.
For~$X \subseteq V(G)$, we let~$G[X]$ denote the graph $(X,E_X)$, where
$E_X := \set{xy \mid x,y \in X \ \text{and} \ xy \in E(G)}$, and we define $G-X:=G[V(G)
\setminus X]$. The open (resp. closed) \emph{neighborhood} of a vertex $v$ is denoted by $N(v)$ (resp. $N[v]$), and more generally, for an integer $r \geq 1$, we denote by $N_r(v)$ the set of vertices that are at distance at most $r$ from $v$. The neighborhoods of a set of vertices $S$ are defined analogously. The \emph{distance} between a vertex $v$ and a set of vertices $S$ is defined as $d(v,S)=\min_{u \in S}d(v,u)$, where $d(v,u)$ denotes the usual distance. A graph $G=(E,V)$ is an \emph{apex graph} if there exists $v \in V$ such that $G-v$ is planar. Given an edge~$e = xy$ of a graph~$G$, we let~$G/e$ denote the graph obtained from~$G$ by \emph{contracting} the edge~$e$, which amounts to deleting the
endpoints of~$e$, introducing a new vertex~$v_{xy}$, and making it adjacent to
all vertices in $(N(x) \cup N(y)) \setminus \{x,y\} $. A \emph{minor} (resp. \emph{contraction})
of~$G$ is a graph obtained from a subgraph of~$G$ (resp. from $G$) by contracting zero or more
edges. A \emph{topological minor} of~$G$ is a
graph obtained from a subgraph of~$G$ by contracting zero or more edges, such
that each edge that is contracted  has at least one endpoint with degree at
most two.
A graph $G$ is {\em $H$-(topological-)minor-free} if $G$ does not contain $H$ as a (topological) minor.

 %We write~$H \tminor G$ to denote that $H$ is a topological minor of
%$G$. Note that~$H \tminor G$ implies that $H \minor G$, but not vice-versa. A
%graph $G$ is \emph{$H$-topological-minor-free} if $H \nottminor G$.

\vspace{.4cm}

\noindent \textbf{Parameterized problems, kernels, and treewidth.} A \emph{parameterized graph problem}~$\Pi$ is a set
    of pairs $(G,k)$, where $G$ is a graph and $k \in \mathbb{Z}$,
    such that for any two instances $(G_1,k_1)$ and $(G_2,k_2)$ with $k_1,k_2 < 0$ it holds that $(G_1,k_1) \in \Pi$ if and only if $(G_2,k_2) \in \Pi$.
    %if $(G,k) \in \Pi$ then for all $G' \cong G$, $(G',k) \in \Pi$.
    If $\mathcal G$ is a graph class, we define $\Pi$ \emph{restricted to}
    $\mathcal G$ as $\Pi_{\mathcal G} = \set{(G,k) \mid (G,k) \in
    \Pi ~\textnormal{and}~ G \in \mathcal G}.$
%A parameterized graph problem~$\Pi$ is \emph{fixed-parameter tractable} (\FPT for short) if there exists an algorithm that decides instances~$(G,k)$ in time $f(k) \cdot
%\poly(|G|)$, where~$f$ is a function of~$k$ alone.
A \emph{kernelization algorithm}, or just \emph{kernel}, for a parameterized graph problem
    $\Pi$
    is an algorithm that given an instance $(G,k)$ outputs,
    in time polynomial in $|G| + k$, an instance $(G',k')$ of $\Pi$
    such that $(G,k) \in \Pi$ if and only if $(G',k') \in \Pi$ and $|G'|, k' \leq g(k)$,
    where~$g$ is some computable function. The function $g$ is called
    the \emph{size} of the kernel. If $g(k) = k^{O(1)}$ or $g(k) = O(k)$,
    we say that $\Pi$ admits a \emph{polynomial kernel} and a \emph{linear kernel}, respectively.

    Given a graph~$G=(V,E)$, a \emph{tree decomposition of $G$} is an ordered pair
    $(T, \mc{X}=\{ B_x \mid x \in V(T) \})$, where~$T$ is a tree and such that the following hold:
    \begin{enumerate}
        \item[(i)] $\bigcup_{x \in V(T)} B_x = V(G)$;
        \item[(ii)] for every edge~$e = uv$ in~$G$, there exists~$x \in V(T)$ such that~$u,v \in B_x$; and
        \item[(iii)] for each vertex~$u \in V(G)$, the set of nodes~$\{x \in V(T) \mid u \in B_x\}$         induces a subtree.
    \end{enumerate}
    The vertices of the tree~$T$ are usually referred to as \emph{nodes} and the sets~$B_x$
    are called \emph{bags}. The \emph{width} of a tree decomposition is the size of a largest
    bag minus one. The \emph{treewidth} of~$G$, denoted~$\tw(G)$, is the smallest width of a tree decomposition of~$G$. A \emph{rooted tree decomposition} is a tree decomposition $(T, \mc{X}=\{ B_x \mid x \in V(T) \})$ in which a distinguished node $r \in V(T)$ has been selected as the \emph{root}. The bag $B_r$ is called the \emph{root-bag}. Note that the root defines a child/parent relation between every pair of adjacent nodes in $T$, and ancestors/descendants in the usual way. A node without children is called a \emph{leaf}.

     For the definition of {\em nice tree decompositions}, we refer
to~\cite{Klo94}. A set of vertices $X$ of a graph $G$ is called a \emph{treewidth-modulator} if $\tw(G-X)\leq t$, where $t$ is some fixed constant.

Given a bag $B$ of a rooted
tree decomposition with tree $T$, we denote by $T_B$ the subtree rooted
at the node corresponding to bag $B$, and by $G_B := G[\bigcup_{x\in T_B}B_x]$
the subgraph of $G$ induced by the vertices appearing in the bags corresponding
to the nodes of $T_B$. If a bag $B$ is associated with a node $x$ of $T$, we may interchangeably use $G_B$ or $G_x$.

% A \emph{join bag} $B$ of a rooted tree decomposition is
%a bag such that the root of $T_B$ has degree at least two. If a graph $G$ is
%disconnected, a \emph{forest-decomposition} of $G$ is the union of
%tree decompositions of its connected components.

%We refer the reader to
%Diestel's book~\cite{Die05} for an introduction to the theory of treewidth.

%In this paper, we will refer to graph classes whose membership is definable in (Counting)
%Monadic Second Order Logic (CMSO). We refer
%to~\cite{FG06,CE12} for a more detailed presentation on (C)MSO logic.

%In a \pmincmso{} graph problem (respectively, \pmaxcmso{} or \peqcmso{}) $\Pi$,
%one has to decide the existence of a set $S$ of at most $k$ vertices/edges
%(respectively, at least $k$ or exactly $k$) in an input graph $G$ such that the
%CMSO expressible predicate $P_{\Pi}(G,S)$ is satisfied.

%\ig{We need to define \emph{nice tree decompositions}\ci{Die05,Klo94}}

\vspace{.4cm}

\noindent \textbf{Boundaried graphs and canonical equivalence relation.} The following two definitions are taken from~\cite{BFL+09}.

\begin{definition}[Boundaried graphs]\label{def:boundaried}
A \emph{boundaried graph} is a graph $G$ with
a set $B \subseteq V (G)$ of distinguished vertices and an injective labeling $\lambda: B \to \mathds{N}^+$ . The set $B$ is called the \emph{boundary} of $G$ and the vertices in $B$ are called \emph{boundary vertices}. Given a boundaried graph $G$, we denote
its boundary by $\partial(G)$, we denote its labeling by $\lambda_G$, and we define its label set by
$\Lambda(G) = \{\lambda_G(v) \mid v \in \partial(G) \}$.
%\red{Given a finite set $I \subseteq \mathds{Z}^+$, we define $\mathcal{F}_I$ to denote the
%class of all boundaried graphs whose label set is $I$. We also denote by $\mathcal{F}$ the class
%of all boundaried graphs.}
We say that a boundaried graph is a \emph{$t$-boundaried graph} if $\Lambda(G) \subseteq \{1, \ldots ,t\}$.  \end{definition}

%\red{OLD: We define $\mathcal{F}_t$ as the
%class of all $t$-boundaried graphs.}

Note that a $0$-boundaried graph is just a graph with no boundary.

\begin{definition}[Gluing operation]\label{def:gluing}
Let $G_1$ and $G_2$ be two boundaried graphs. We
denote by $G_1 \oplus G_2$ the graph obtained by taking the disjoint union of $G_1$ and $G_2$ and identifying  vertices with the same label of the boundaries of $G_1$ and $G_2$. In $G_1 \oplus G_2$ there is an edge between two labeled vertices if there is an edge between them in $G_1$ or in $G_2$.
\end{definition}

In the above definition, after identifying vertices with the same label, we may consider the resulting graph as a boundaried graph or not, depending on whether we need the labels for further gluing operations.
%\ig{We FORGET the labels or not after gluing??}

%\begin{definition}[\red{Ungluing/replacement operation}]\label{def:ungluing}
%\red{FINISH!!!! SEE METAKERNEL}
%\end{definition}

Following~\cite{BFL+09}, we introduce a canonical equivalence relation on boundaried graphs.

\begin{definition}[Canonical equivalence on boundaried graphs]\label{def:canonicalEquivRelation}
Let $\Pi$ be
a parameterized graph problem
%whose instances are pairs of the form $(G,k)$,
and let $t \in \mathds{N}^{+}$.  Given
two $t$-boundaried graphs $G_1$ and $G_2$, we say that $G_1 \equiv_{\Pi,t} G_2$ if $\Lambda(G_1) = \Lambda(G_2)$ and there
exists a transposition constant $\Delta_{\Pi,t}(G_1,G_2) \in \mathds{Z}$ such that for every $t$-boundaried graph $H \in \mc{G}$ and every $k \in \mathds{Z}$, it holds that $(G_1 \oplus H, k) \in \Pi$ if and only if $(G_2 \oplus H, k+\Delta_{\Pi,t}(G_1,G_2)) \in \Pi$.
\end{definition}

We define in Section~\ref{sec:framework} another equivalence relation on boundaried graphs that refines this canonical one (cf. Definitions~\ref{def:equivalenceRelation} and~\ref{def:equivalenceRelation2}), and that will allow us to perform a constructive protrusion replacement with explicit bounds.

The notion of \emph{Finite Integer Index} was originally defined by Bodlaender and van Antwerpen-de Fluiter\ci{BvF01,vanF97}. We would like to note that FII does not play any role in the framework that we present for constructing explicit kernels, but we present its definition for completeness, as we will sometimes refer to it throughout the article.% define it in order to be able to make some remarks.

\begin{definition}[Finite Integer Index (FII)]\label{def:FII}
A parameterized graph problem $\Pi$
%whose instances are pairs of the form $(G,k)$
has \emph{Finite Integer Index} (\emph{FII} for short) if for every positive integer $t$, the equivalence relation $\equiv_{\Pi,t}$ has finite index.
\end{definition}

\vspace{.15cm}

\noindent \textbf{Protrusions and protrusion decompositions.}
%We restate the main definitions of the
%protrusion machinery developed in~\cite{BFLPST09,FLST10}.
Given a graph~$G=(V,E)$ and a set~$W \subseteq V$, we define~$\mathbf{bd}(W)$
as the vertices in~$W$ that have a neighbor in~$V \setminus W$.
%For a
%set~$W \subseteq V$ the neighborhood of~$W$ is
%$N^G(W) = \partial_G(V \setminus W)$.
%Subscripts are omitted when it is clear which graph is being referred to.
A set~$W \subseteq V(G)$ is a \emph{$t$-protrusion}
     if $|\mathbf{bd}(W)| \leq t$ and $\tw(G[W]) \le t-1$.
%\red{If~$W$ is a $t$-protrusion, the vertex set~$W' = W \setminus \partial_G (W)$ is the %    \emph{restricted protrusion of~$W$}.}
% We call~$\partial_G (W)$ the \emph{boundary} and $|W|$ the \emph{size}
% of the $t$-protrusion $W$ of $G$.
% \red{Given a restricted $t$-protrusion $W'$, we
%    denote its \emph{extended protrusion} by $W'^+ = W' \cup N(W') = W$.}
%\end{definition}
We would like to note that a $t$-protrusion $W$ can be naturally seen as a $t$-boundaried graph by arbitrarily assigning labels to the vertices in $\mathbf{bd}(W)$. In this case, it clearly holds that $\partial(W)=\mathbf{bd}(W)$. Note also that if $G$ is a $t$-boundaried graph of treewidth at most $t-1$, we may assume that the boundary vertices are contained in any specified bag of a tree decomposition, by increasing the width of the given tree decomposition to at most $2t-1$.

An $(\alpha,t)${\em -protrusion decomposition} of a graph $G$ is a partition
    ${\cal P}=Y_{0}\uplus Y_{1}\uplus \cdots\uplus Y_{\ell}$ of $V(G)$ such
    that:
    \begin{enumerate}
    \item[(i)] for every $1\leqslant i\leqslant \ell$, $N(Y_{i})\subseteq Y_{0}$;
    \item[(ii)] $\max\{\ell, |Y_{0}|\}\leqslant \alpha$; and
    \item[(iii)] for every $1\leqslant i\leqslant \ell$, $Y_i\cup N_{Y_0}(Y_i)$  is a $t$-protrusion of $G$.
    \end{enumerate}
%    The set $Y_0$ is called the \emph{separating part} of $\mc{P}$.
%Hereafter, the value of $t$ will be fixed to some constant.
When $G$ is the
input of a parameterized graph problem with parameter $k$, we say that an
$(\alpha,t)$-protrusion decomposition of $G$ is \emph{linear} whenever $\alpha =O(k)$.

\vspace{.45cm}

\noindent \textbf{Large treewidth and grid minors.} In our applications in Sections~\ref{sec:rDomSet}, \ref{sec:rScatSet}, and \ref{sec:PlanarFDeletion} we will need the following results, which state a {\sl linear} relation between the treewidth and certain grid-like
graphs  that can be found as  minors or  contractions in a graph that excludes some fixed (apex) graph as a minor.
%
%\begin{theorem}[\ig{add references}]\label{thm:tw-minor} There exists a function $f_{\text{m}}: \mathds{N} \to \mathds{N}$ such that for any graph $H$ on $h$ vertices, any graph $G$ that excludes $H$ as a minor, and any positive integer $r$, if $\tw(G) \geq f_{\text{m}}(h) \cdot r$ then $G$ contains an $(r \times r)$-grid as a minor.
%\end{theorem}

%The following result was proved first by Demaine  and Hajiaghayi in~\cite{DemaineH08line} (see also~\ref{KaKo12}

\begin{proposition}[Demaine  and Hajiaghayi~\cite{DemaineH08line}]\label{prop:tw-minor}
There is a function $f_m:\mathds{N}\rightarrow\mathds{N}$ such that for every $h$-vertex graph $H$
and every positive integer $r$, every $H$-minor-free graph with treewidth at least $f_{m}(h)\cdot r$, contains
an $(r\times r)$-grid as a minor.
\end{proposition}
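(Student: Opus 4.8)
Since this is the linear grid-minor theorem of Demaine and Hajiaghayi~\cite{DemaineH08line}, the plan is to combine the Robertson--Seymour structure theorem for $H$-minor-free graphs with the \emph{planar} grid-minor theorem of Robertson, Seymour and Thomas (planar graphs of treewidth at least $c\cdot r$, for a small universal constant $c$, contain an $(r\times r)$-grid as a minor). The crucial feature that keeps the final bound \emph{linear} in $r$ is that every parameter produced by the structure theorem depends on $h$ only, so that the passage from $G$ to a bounded-genus piece costs only an additive term of the form $O_h(1)$, never a multiplicative one.

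First I would invoke the structure theorem to fix a constant $a=a(h)$ (taking the maximum over the finitely many $h$-vertex graphs $H$) such that every $H$-minor-free graph $G$ has a tree decomposition of adhesion at most $a$ in which every torso is $a$-almost-embeddable in a surface of Euler genus at most $a$, with at most $a$ apices and at most $a$ vortices each of width at most $a$. Using the standard relationship between the treewidth of a clique-sum and the treewidths of its parts (they differ by at most an additive term bounded by the adhesion), I would then argue that if $\tw(G)\geq f_m(h)\cdot r$ then some torso $G_t$ satisfies $\tw(G_t)\geq f_m(h)\cdot r-a$.

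The heart of the argument is to turn $G_t$ into an honestly surface-embedded graph while losing only $O_h(1)$ in treewidth. Deleting the at most $a$ apices costs at most $a$. Each vortex carries a path decomposition of width at most $a$ along its face, so deleting the vortex interiors --- or absorbing them into the surface part at the price of a bounded increase of the genus --- costs $O(a^2)$. What remains is a graph $G_t'$ embedded in a surface of Euler genus at most $a$ with $\tw(G_t')\geq f_m(h)\cdot r-O(a^2)$. If this embedding has small representativity, then there is a noncontractible curve through $O(1)$ vertices; cutting along it reduces the genus by at least one while lowering the treewidth by $O(1)$, and iterating $O(a)$ times planarizes the graph with a total loss of $O(a)$; if the embedding has large representativity, then a large grid minor is present directly. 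Either way, after an overall additive loss that is a function of $h$ alone, we reach a planar graph of treewidth at least $f_m(h)\cdot r-O_h(1)$, and the Robertson--Seymour--Thomas theorem, applied with a slightly inflated parameter $r'=r+O(a^2)$, yields an $(r'\times r')$-grid minor; restricting it gives an $(r\times r)$-grid minor of $G_t$.

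The step I expect to be the main obstacle is lifting this grid minor from the torso $G_t$ back to $G$. The torso contains fake clique edges on the adhesion sets, which a priori the grid model might use. The fix is that each such adhesion set has fewer than $a$ vertices and forms a separator of $G$, so a single fake clique can intersect the grid model in at most $O(a^2)$ rows and columns; since we built the grid with the inflated parameter $r'=r+O(a^2)$, deleting the affected rows and columns recovers a genuine $(r\times r)$-grid minor of $G$. Making this charging rigorous --- in particular bounding how many distinct adhesion cliques one grid model can meet, and fixing the exact constants in the various $O(a^2)$ losses --- is where the real work lies; but none of it changes the shape of the estimate, so the resulting $f_m$ is a function of $h$ only and the dependence on $r$ remains linear, as required.
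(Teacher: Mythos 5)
The paper does not prove Proposition~\ref{prop:tw-minor}: it is imported as a black box from Demaine and Hajiaghayi~\cite{DemaineH08line}, so there is no internal proof to compare your sketch against. Your outline does follow the broad strategy of the original argument (Robertson--Seymour structure theorem, peel off apices and vortices, reduce genus via representativity/cutting, apply the planar grid theorem of Robertson--Seymour--Thomas, lift the grid back to $G$), but two of the steps you gesture at are treated too loosely to count as a proof, and one of them as stated is simply wrong.

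The vortex step is the clearest problem. ``Absorbing a vortex into the surface part at the price of a bounded increase of the genus'' is false: a depth-$a$ vortex is only constrained to have pathwidth at most $a$, and bounded pathwidth does not imply bounded genus --- $K_{3,n}$ has pathwidth $3$ but genus $\Theta(n)$, and such a graph can be hung as a vortex of depth $3$ along a facial cycle of length $n$. The alternative you offer, ``deleting the vortex interiors costs only $O(a^2)$ in treewidth,'' is the statement one actually needs, but it is a nontrivial lemma --- one must show that a tree decomposition of the bounded-genus part can be extended by the vortex path decompositions along the facial cycles without blowing up the width, which uses the cyclic order of the vortex bags in an essential way --- and you neither prove it nor cite it. Separately, the lift-back from the torso to $G$ is acknowledged by you as the ``main obstacle'' but not resolved: the charging argument you propose (delete $O(a^2)$ rows and columns per fake clique met by the grid model) collapses if the grid model crosses $\omega(1)$ distinct adhesion cliques, which nothing in the sketch rules out. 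The standard way around this is not a charging argument at all but the observation that, in a suitable structural decomposition, each torso is itself a minor of $G$ (by contracting, for every adhesion set, a connected piece of $G$ on the far side onto it), so a grid minor of a torso is automatically a grid minor of $G$; establishing that the decomposition can be chosen with this property is part of the real work. So the route is right, but the two ingredients you wave at are genuine gaps rather than routine bookkeeping.
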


Before we state the next proposition, we need to define a grid-like graph that is suitable for a contraction counterpart of Proposition~\ref{prop:tw-minor}.
Let ${\rm \Gamma}_{r}$  ($r\geq 2$) be the graph obtained from the  $(r\times r)$-grid by
triangulating internal faces of the $(r\times r)$-grid such that all internal vertices become  of degree $6$,
all non-corner external vertices are of degree 4,
and  one corner of degree 2 is joined by edges with all vertices
of the external face (the {\em corners} are the vertices that in the underlying grid have  degree 2).
The graph $\Gamma_6$ is shown in Fig.~\ref{fig-gamma-k}.

\begin{figure}[ht]
\vspace{-.45cm}
\begin{center}
\scalebox{.8}{\includegraphics{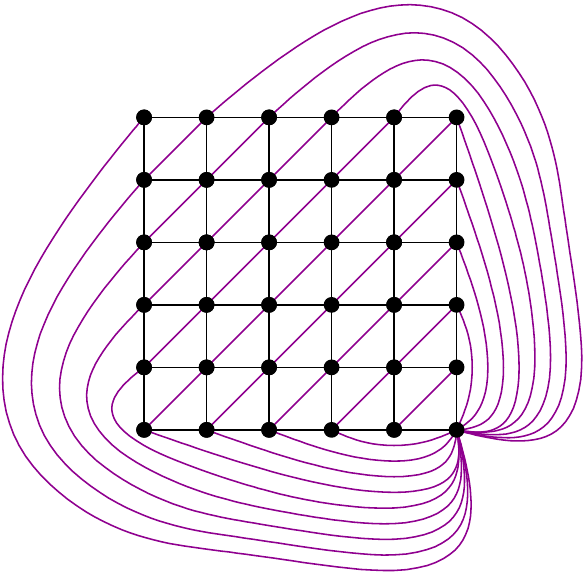}}
\end{center}
\caption{The graph $\Gamma_{6}$.}
\label{fig-gamma-k}
\end{figure}

\begin{proposition}[Fomin {\em et al.} \cite{FGT11}]\label{prop:tw-contraction}
There is a function $f_c:\mathds{N}\rightarrow\mathds{N}$ such that for every $h$-vertex apex graph $H$
and every positive integer $r$, every $H$-minor-free graph with treewidth at least $f_{c}(h)\cdot r$, contains
the graph  ${\rm \Gamma}_{r}$ as a contraction.
\end{proposition}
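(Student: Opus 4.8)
\begin{proofSketch}
The plan is to derive the statement from its minor counterpart, Proposition~\ref{prop:tw-minor}, together with the Flat Wall Theorem of Robertson and Seymour, the apex hypothesis on $H$ entering precisely at the point where a grid \emph{minor} must be upgraded to a $\Gamma_r$ \emph{contraction}. First I would reduce to the case that $G$ is connected, which is harmless since $\tw(G)$ equals the maximum of the treewidths of the connected components of $G$ and $\Gamma_r$ is connected. Fix such a $G$ with $\tw(G)\ge f_c(h)\cdot r$, where $f_c$ will be a composition of the functions appearing below, chosen large enough at the end. Applying Proposition~\ref{prop:tw-minor}, and absorbing into $f_c$ the constant loss of passing from a grid minor to a subdivided wall, yields in $G$ a wall $W_0$ of height $m_0$ linear in $\tw(G)/f_m(h)$; walls are the convenient object here since they are what the Flat Wall Theorem operates on.

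The heart of the argument is to show that, after passing to a subwall $W$ of height $m_1$ still linear in $m_0$ (with the factor depending only on $h$), $W$ can be taken \emph{flat in $G$}: there is a separation $(X,Y)$ of $G$ with $V(W)\subseteq Y$, with $X\cap Y$ contained in the perimeter of $W$, and with $G[Y]$ drawn in a closed disk with the perimeter of $W$ on the bounding circle. The Flat Wall Theorem gives this only after deleting a set $A$ of ``apex'' vertices of size bounded by a function of $h$, and it is here that the apex hypothesis makes $A$ harmless. A vertex $v$ adjacent to a ``spread-out'' portion of a huge wall would, together with the wall, realize $H$ as a minor: letting $u$ be an apex vertex of $H$, so that $H-u$ is planar, a huge wall contains $H-u$ as a minor, and one may arrange this minor model so that the branch sets corresponding to the neighbours of $u$ sit near the vertices where $v$ attaches to the wall, with $v$ playing the role of $u$. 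As $G$ is $H$-minor-free this cannot happen, so the wall-attachments of every vertex of $A$ are confined to a bounded number of bounded-radius regions; deleting a bounded number of bounded-radius balls from a wall of height $m_1$ still leaves a subwall of height $m_2$ linear in $m_1$ to which $A$ is not significantly attached, and one verifies, using the structural machinery behind the Flat Wall Theorem, that this subwall, again called $W$, is flat in $G$ with a \emph{plane} compass $C:=G[Y]$ --- internally nonplanar flaps, being attached to few vertices, can be contracted to single vertices without disturbing planarity of the drawing.

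Given this flat, plane picture the contraction is built by hand. Pick inside $W$ a central subwall $W'$ of height linear in $r$, chosen so that the annular plane region of $C$ lying between the perimeters of $W'$ and $W$ is connected and meets every vertex of the perimeter of $W'$. Let $v^*$ be the set of all vertices of $G$ lying outside the compass of $W'$; then $v^*$ is connected (using that $G$ is connected and that $X$ attaches to the perimeter of $W$) and adjacent to the whole perimeter of $W'$. Contract the compass of $W'$ onto a triangulated $(r\times r)$-grid $T$ in which all internal vertices have degree $6$, sending the perimeter of $W'$ onto the outer boundary of $T$; this is a routine planar contraction, since within a wall of height linear in $r$ one can group bricks into constant-size blocks and let each block additionally absorb one boundary vertex of a fixed neighbouring block, installing exactly the diagonal edges required by the definition of $\Gamma_r$. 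Finally contract $v^*$ into a corner of $T$: as $v^*$ is adjacent to the whole outer boundary of $T$, the result is $\Gamma_r$, which is therefore a contraction of $G$. Tracing back the successive size losses fixes the choice of $f_c$.

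I expect the main obstacle to be the flatness step: turning the heuristic ``apex attachments are confined to bounded regions'' into a genuine separation with a plane compass requires the full Robertson--Seymour flat-wall machinery together with the non-embeddability of apex graphs in the sphere, and the careful tracking of how the apex set and the depths of the flaps degrade the wall height --- exactly the degradations that make $f_c$ larger than $f_m$.
\end{proofSketch}
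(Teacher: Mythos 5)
The paper does not prove this statement: it is Proposition~\ref{prop:tw-contraction}, imported verbatim from Fomin, Golovach and Thilikos~\cite{FGT11} and used as a black box (its quantitative content appears only through the unspecified function $f_c$, which the paper explicitly says it has ``no reasonable estimation'' for). So there is no in-paper proof to compare your sketch against; what you have done is re-derive a cited result.

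That said, your route --- grid minor theorem $\Rightarrow$ wall, Flat Wall Theorem, apex-minor-freeness confining the apex attachments to boundedly many bounded-radius balls, then a hand-built $\Gamma_r$-contraction from a central subwall plus one absorbing ``outside'' vertex --- is the natural one and, to the best of my knowledge, matches the spirit of the argument in~\cite{FGT11}. The use of the apex hypothesis (a vertex with spread-out wall-attachments together with a huge wall realizes $H$ as a minor via a planar model of $H-u$) is exactly where apex-freeness must be invoked, and the final contraction scheme, including choosing the inner subwall so that the annulus is connected and hits all of its perimeter, is sound. Two things deserve flagging. First, your ``harmless'' reduction to connected $G$ is backwards: contractions preserve the number of components, so a disconnected $G$ can never contract to the connected graph $\Gamma_r$; this is really an imprecision already present in the statement (which should be read as applying to a connected $H$-minor-free graph, or to some component), but your justification as written inverts the logic. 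Second, you correctly identify the genuine gap --- upgrading ``flat in $G-A$ after the Flat Wall Theorem'' to ``flat in $G$ with a \emph{plane} compass'' so that the outside-set $v^*$ attaches only to the perimeter and the flaps do not create stray adjacencies in the quotient. This step, together with the bookkeeping that guarantees the quotient of the inner compass is \emph{exactly} a triangulated grid (no extra adjacencies from flaps straddling three branch sets, no missing diagonals), is where the real work lies; your sketch gestures at it but does not close it. None of this is a wrong turn, but it is not yet a proof, and one should cite~\cite{FGT11} rather than reproduce it.
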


Propositions~\ref{prop:tw-minor} and~\ref{prop:tw-contraction} have been the main tools for developing  Bidimensionality
theory for kernelization~\cite{FLST10}. The best known estimation for~function $f_m$ has been given by Kawarabayashi and
Kobayashi in~\cite{KaKo12} and is $f_{m}(r)=2^{O(r^2\cdot \log r)}$. To our knowledge, no reasonable estimation for the function
$f_{c}$ is known up to now. The two functions $f_{m}$ and $f_{c}$ will appear in the upper bounds
on the size of the kernels presented in Sections~\ref{sec:rDomSet}, \ref{sec:rScatSet}, and \ref{sec:PlanarFDeletion}.
Any  improvement on these functions will directly translate to the sizes of our kernels.

%
%
%\ig{$\bigstar$ Add comments on Theorem~\ref{thm:tw-minor}} In particular, Leaf and Seymour~\cite{LeSe12}, Chekuri and Chuzhoy\ci{ChCh13}...
%
%\begin{theorem}[\ig{Fomin \emph{et al}\cite{FGT11}??}]\label{thm:tw-contraction}
%There exists a function $f_{\text{c}}: \mathds{N} \to \mathds{N}$ such that for any apex graph $H$ on $h$ vertices, any graph $G$ that excludes $H$ as a minor, and any positive integer $r$, if $\tw(G) \geq f_{\text{c}}(h) \cdot r$ then $G$ contains an $(r \times r)$-\ig{extended triangulated grid} as a contraction.
%\end{theorem}
%
%\ig{$\bigstar$ Add comments on Theorem~\ref{thm:tw-contraction}} Wollan...? 

%-------------------------------------------------------------------------------------------------------------------------------
\section{An explicit protrusion replacer}
\label{sec:framework}
% !TEX root = KernelsViaDP_arXiv3.tex
In this section we present our strategy to construct an explicit protrusion replacer via dynamic programming. For a positive integer $t$, we define $\mathcal{B}_t$ as the class of all $t$-boundaried graphs, and we define $\mathcal{F}_t$ as the class of all $t$-boundaried graphs of treewidth at most $t-1$ that have a rooted tree decomposition with all boundary vertices contained in the root-bag\footnote{Note that the latter condition in the definition of $\mathcal{F}_t$ could be avoided by allowing the width of the tree decompositions of the graphs in $\mathcal{F}_t$ to be at most $2t-1$, such that all boundary vertices could be added to all bags of any tree decomposition.}. Note that it holds clearly that $\mathcal{F}_t \subseteq \mathcal{B}_t$. We will restrict ourselves to parameterized graph problems such that a solution can be certified by a subset of vertices.
%\ig{we can also deal with edges, right?}.

\begin{definition}[Vertex-certifiable problem]\label{def:vertexCertifiable}
A parameterized graph problem $\Pi$ is called \emph{vertex-certifiable} if there exists a
%\ig{CMSO?}
language $L_{\Pi}$ (called {\em certifying language for $\Pi$}) defined on pairs $(G,S)$, where $G$ is a graph and $S \subseteq V(G)$, such that $(G,k)$ is a \YES-instance of $\Pi$ if and only if there exists a subset $S \subseteq V(G)$ with $|S| \leq k$ (or $|S| \geq k$, depending on the problem) such that $(G,S) \in L_{\Pi}$.
\end{definition}

Many graph problems are vertex-certifiable, like \sc{$r$-Dominating Set}, \sc{Feedback Vertex Set}, or \sc{Treewidth-$t$ Vertex Deletion}. This section is structured as follows. In Subsection~\ref{sec:encoders} we define the notion of encoder, the main object that will allow us to formalize in an abstract way the tables of dynamic programming. In Subsection~\ref{sec:equivalencerelations} we use encoders to define an equivalence relation on graphs in $\mc{F}_t$ that,
%\footnote{\new{In fact, the equivalence relation will be defined on graphs in $\mc{B}_t$, but in our applications we will only consider graphs in $\mc{F}_t$.}}
  under some natural technical conditions, will be a \emph{refinement} of the canonical equivalence relation defined by a problem $\Pi$ (see Definition~\ref{def:canonicalEquivRelation} in Section~\ref{sec:prelim}). This refined equivalence relation allows us to provide an explicit upper bound on the size of its representatives (Lemma~\ref{lem:finiteSize}), as well as a linear-time algorithm to find them (Lemma~\ref{lem:compute}). In Subsection~\ref{sec:explicitprotrusionreplacer} we use the previous ingredients to present an explicit protrusion replacement rule (Theorem~\ref{thm:protrusionReplacement}), which replaces a large enough protrusion with a bounded-size representative from its equivalence class, in such a way that the parameter does not increase.

%\vspace{-.05cm}

\subsection{Encoders}
\label{sec:encoders}

The \textsc{Dominating Set} problem, as a vertex-certifiable problem, will be used hereafter as a running example to particularize our general framework and definitions. Let us start with a description of  dynamic programming tables for \textsc{Dominating Set} on graphs of bounded treewidth, which will illustrate the final purpose of the definitions stated below.
%In order to illustrate the definitions, we will use the \textsc{Dominating Set} problem as a running example. We start with describing a simple data structure that allows to solve \textsc{Dominating Set} by performing classical dynamic programming on graphs of bounded treewidth.

\begin{runningExample}\label{example:DominatingSet}
Let $B$ be a bag of a rooted tree decomposition $(T,\mathcal{X})$ of width $t-1$ of a graph $G \in \mc{F}_t$. The dynamic programming (DP) tables for \textsc{Dominating Set} can be defined as follows. The entries of the DP-table for $B$ are indexed by the set of tuples $R\in\{0,\uparrow 1,\downarrow 1\}^{|B|}$, so-called \emph{encodings}. As detailed below, the symbol 0 stands for vertices in the (partial) dominating set, the symbol $\downarrow 1$ for vertices that are already dominated, and $\downarrow 1$ for vertices with no constraints. More precisely, the coordinates of each $|B|$-tuple are in one-to-one correspondence with the vertices of $B$. For a vertex $v\in B$, we denote by $R(v)$ its corresponding coordinate in the encoding $R$. A subset $S \subseteq V(G_B)$ is a \emph{partial dominating set satisfying} $R$ if the following conditions are satisfied:
\begin{itemize}
\item[$\bullet$]  $\forall v\in V(G_B)\setminus B$, $d_{G_B}(v,S)\leqslant 1$; and
\item[$\bullet$]  $\forall v\in B$: $R(v)=0$ $\Rightarrow$ $v\in S$,  and  $R(v)=\downarrow 1$ $\Rightarrow$ $d_{G_B}(v,S)\leqslant 1$.
\end{itemize}
Observe that if $S$ is a partial dominating set satisfying $R$, then $\{v\in B\mid R(v)=0\}\subseteq S$, but $S$ may also contain vertices with $R(v)\neq 0$. Likewise, the vertices that are not (yet) dominated by $S$ are contained in the set $\{v\in B\mid R(v)=\uparrow 1\}$.
%\ig{HERE GOES THE EXAMPLE FOR \textsc{Dominating Set}}
\end{runningExample} \vspace{-.10cm}

The following definition considers the tables of dynamic programming in an abstract way.

%We now proceed to define an object called \emph{encoder}. In order to be able to build a general framework, the objective is to consider the tables of dynamic programming algorithms in an abstract way.

%\vspace{-.12cm}

\begin{definition}[Encoder]\label{def:encoder}
An \emph{encoder} $\mc{E}$ is a pair $(\mc{C},L_{\mc{C}})$ where
 \begin{itemize}%\vspace{-.2cm}
 \item[(i)] $\mathcal{C}$ is a function that, for each (possibly empty) finite subset $I \subseteq \mathds{N}^+$, outputs a (possibly empty) finite set $\mathcal{C}(I)$ of strings over some alphabet. Each $R\in\mathcal{C}(I)$ is called a {\em $\mathcal{C}$-encoding} of $I$; and
 %Typically, $R$ encodes some combinatorial structure, such as a collection of subsets of $I$ or functions   from $I$ to some particular set.

 \item[(ii)] $L_\mathcal{C}$ is a computable language whose strings encode
  triples $(G,S,R)$, where $G$ is a boundaried graph, $S \subseteq V(G)$, and $R \in \mathcal{C}(\Lambda(G))$. If $(G,S,R)\in L_{\mathcal{C}}$, we say that $S$ \emph{satisfies} the $\mathcal{C}$-encoding $R$.
 \end{itemize}
\end{definition}

%\vspace{-.15cm}

\noindent As it will become clear with the running example, the
set $I$ represents the labels from a bag, $\mathcal{C}(I)$ represents the possible configurations of the vertices in the bag, and $L_{\mathcal{C}}$ contains triples that correspond to solutions to these configurations.

\begin{runningExample}
Each rooted graph $G_B$ can be naturally viewed as a $|B|$-boundaried graph such that $B=\partial(G_B)$ with $I=\Lambda(G_B)$. Let $\mc{E}_{\textsc{DS}}=(\mathcal{C}_{\textsc{DS}},L_{\mathcal{C}_{\textsc{DS}}})$ be the encoder described above for \textsc{Dominating Set}. The tables of the dynamic programming algorithm to solve \textsc{Dominating Set} are obtained by assigning to every $\mathcal{C}_{\textsc{DS}}$-encoding (that is, DP-table entry) $R\in\mathcal{C}_{\textsc{DS}}(I)$,  the size of a minimum partial dominating set satisfying $R$, or $+\infty$ if such a set of vertices does not exist. This defines a function $f_{G}^{\mc{E}_{\textsc{DS}}}: \mathcal{C}_{\textsc{DS}}(I) \to \mathds{N} \cup \{+\infty\}$. Observe that if $B=\partial(G_B)=\emptyset$, then the value assigned to the encodings in $\mathcal{C}_{\textsc{DS}}(\emptyset)$ is indeed the size of a minimum dominating set of $G_B$.
\end{runningExample}

\vspace{-.05cm}

In the remainder of this subsection we will state several definitions for minimization problems, and we will restate them for maximization problems whenever some change is needed.  For a general minimization problem $\Pi$, we will only be interested in encoders that permit to solve $\Pi$  via dynamic programming. More formally, we define a $\Pi$-encoder and the values assigned to the encodings as follows.

%\ig{(Maximization problems are treated similarly, see Appendix~\ref{ap:maximization} for the corresponding definitions of the functions $f_{G}^{ \mathcal{E}}$ and $f_{G}^{ \mathcal{E},g}$ defined below. The other definitions of this section remain unchanged.)}

%\vspace{-.10cm}

\begin{definition}[$\Pi$-encoder and its associated function] \label{def:Piencoder}
 Let $\Pi$ be a vertex-certifiable minimization problem.  %and $\mc{E}=(\mc{C},L_{\mc{C}})$ be .
 \begin{itemize} %\vspace{-.15cm}
  \item[(i)] An encoder $\mc{E}=(\mc{C},L_{\mc{C}})$ is a \emph{$\Pi$-encoder} if $\mc{C}(\emptyset)$ consists of a single  $\mathcal{C}$-encoding, namely $R_{\emptyset}$, such that for every $0$-boundaried graph $G$ and every $S \subseteq V(G)$, $(G,S,R_{\emptyset}) \in L_{\mc{C}}$ if and only if $(G,S) \in L_{\Pi}$.
  \item[(ii)] Let $G$ be a $t$-boundaried graph with $\Lambda(G)=I$. We define the function $f_{G}^{ \mathcal{E}}: \mathcal{C}(I) \to \mathds{N} \cup \{+\infty\}$ as
\begin{equation}\label{eq:fEmin}
f_{G}^{ \mathcal{E}}(R) \ = \ \min \{k \ : \ \exists S \subseteq V(G), |S| \leq k, (G,S,R) \in L_{\mathcal{C}}\}.
\end{equation}

In Equation~(\ref{eq:fEmin}), if such a set $S$ does not exist, we set $f_{G}^{ \mathcal{E}}(R):= +\infty$. We define $\mathcal{C}_{G}^{*}(I):=\{ R \in \mathcal{C}(I) \mid f_{G}^{ \mathcal{E}}(R) \neq +\infty\}$.
\end{itemize}
\end{definition}

Condition~(i) in Definition~\ref{def:Piencoder} guarantees that, when the considered graph $G$ has no boundary, the language of the encoder is able to {\sl certify} a solution of  problem $\Pi$. In other words, we ask that the set $\{(G,S)\mid (G,S,R_{\emptyset})\in L_{\cal C})\}$ is a {\sl certifying language} for $\Pi$. Observe that for a $0$-boundaried graph $G$, the function $f_{G}^{ \mathcal{E}}(R_{\emptyset})$ outputs the minimum size of a set $S$ such that $(G,S)\in L_{\Pi}$.

% \ig{Do we need the following property? ``We restrict ourselves to encoders such that the property $(G,S,R) \in L_\mathcal{C}$ can be checked in \FPT-time parameterized by $|R|$'' (in all practical cases, it will be polynomial in $|V|$, or even linear)}

%BLUE
%\ig{This paragraph will probably be removed/rewritten} For all practical purposes, one shall think of an encoder $\mc{E}$ as a function that, for each graph $G \in \mathcal{F}_t$, generates the {\sl tables} used to perform dynamic programming to solve some problem $\Pi$ over a rooted tree decomposition $(T,\mathcal{X})$ of $G$. More precisely, for each bag $B$ of $(T,\mathcal{X})$, the graph $G_B$ can be naturally interpreted as $t$-boundaried graph, and each $\mc{C}$-encoding $R \in \mc{C}(\Lambda(G_B))$ corresponds to an entry of the table associated with bag $B$, which is filled by the value given by $f_{G}^{ \mathcal{E}}(R)$. Note that the labels of the boundary of such $t$-boundaried graph $G_B$ correspond to the sets of integers $I \subseteq \{1,\ldots,t\}$ on which the encoder is applied, and that the language $L_\mathcal{C}$ captures whether a subset $S \subseteq V(G_B)$, which corresponds to a (potential) {\sl partial solution} of $\Pi$, satisfies the conditions imposed by the $\mathcal{C}$-encoding $R$.

For encoders $\mc{E}'=(\mc{C}',L_{\mc{C}'})$ that will be associated with problems where the objective is to find a set of vertices of size {\sl at least} some value, the corresponding function $f_{G}^{ \mathcal{E}'}: \mathcal{C}'(I) \to \mathds{N} \cup \{-\infty\}$ is defined as
\begin{equation}\label{eq:fEmax}
f_{G}^{ \mathcal{E}'}(R) \ = \ \max \{k \ : \ \exists S \subseteq V(G), |S| \geq k, (G,S,R) \in L_{\mathcal{C}'}\}.
\end{equation}
Similarly, in Equation~(\ref{eq:fEmax}), if such a set $S$ does not exist, we set $f_{G}^{ \mathcal{E}}(R):= -\infty$.
We define $\mathcal{C}_{G}^{*}(I):=\{ R \in \mathcal{C}(I) \mid f_{G}^{ \mathcal{E}}(R) \neq -\infty\}$.

The following definition provides a way to control the number of possible distinct values assigned to encodings. This property will play a similar role to FII or \emph{monotonicity} in previous work~\cite{BFL+09,KLP+12,FLST10}.

\begin{definition}[Confined encoding]\label{def:thin}
An encoder $\mathcal{E}$ is \emph{$g$-confined} if there exists a function $g : \mathds{N} \to \mathds{N}$ such that for any $t$-boundaried graph $G$ with $\Lambda(G) = I$ it holds that either $\mathcal{C}_{G}^{*}(I)= \emptyset $ or \vspace{-.2cm}
\begin{equation}\label{eq:thin}
\max_{R \in \mathcal{C}_{G}^{*}(I)}f_{G}^{ \mathcal{E}}(R)\  -\   \min_{R \in \mathcal{C}_{G}^{*}(I)}f_{G}^{ \mathcal{E}}(R)  \  \leq \ g(t).
\end{equation}
\end{definition}\vspace{-.2cm}

See Fig.~\ref{fig:illustrationEncoding} for a schematic illustration of a confined encoder. In this figure, each column of the table corresponds to a $\mathcal{C}$-encoder $R$, which is filled with the value $f_{G}^{ \mathcal{E}}(R)$.

\begin{figure}[h!]
\centering %\vspace{-.2cm}
\includegraphics[width=0.8\textwidth]{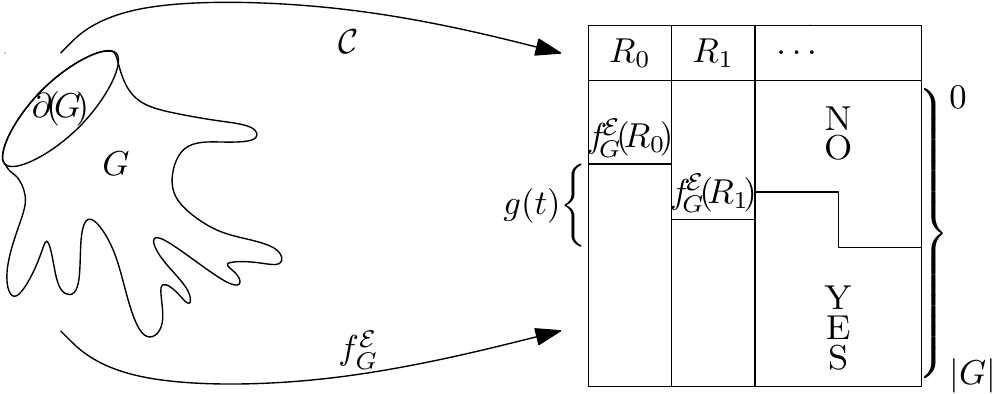}
%\vspace{-.1cm}
\caption{Schematic illustration of a $g$-confined encoding.\label{fig:illustrationEncoding}}
\end{figure}

\begin{runningExample} %\label{example:DominatingSet2}
It is easy to observe that the encoder $\mc{E}_{\textsc{DS}}$ described above is $g$-confined for $g(t)=t$. Indeed, let $G$ be a $t$-boundaried graph (corresponding to the graph $G_B$ considered before) with $\Lambda(G)=I$. Consider an arbitrary encoding $R\in\mc{C}(I)$ and the encoding $R_0\in\mc{C}(I)$ satisfying $R_0(v)=0$ for every $v\in \partial(G)$. Let $S_0\subseteq V(G)$ be a minimum-sized partial dominating set satisfying $R_0$, i.e.,  such that $(G,S_0,R_0)\in L_{\mc{C}_{\textsc{DS}}}$. Observe that $S_0$ also satisfies $R$, i.e., $(G,S_0,R)\in L_{\mc{C}_{\textsc{DS}}}$. It then follows
%from Equation~(\ref{rds:monotone})
that $f_{G}^{\mc{E}_{\textsc{DS}}}(R_{0})  = \max_{R} f_{G}^{\mc{E}_{\textsc{DS}}}(R)$.
Moreover, let $S\subseteq V(G)$ be a minimum-sized partial dominating set satisfying $R$, i.e., such that $(G,S,R) \in L_{\mc{C}_{DS}}$. Then note that $R_0$ is satisfied by the set $S \cup \partial(G)$, so we have that for every encoding $R$, $f_{G}^{\mc{E}_{\textsc{DS}}}(R)  +   |\partial(G)| \geq f_{G}^{\mc{E}_{\textsc{DS}}}(R_{0})$. It follows that $f_{G}^{\mc{E}_{\textsc{DS}}}(R_{0})  -   \min_{R}f_{G}^{\mc{E}_{\textsc{DS}}}(R)  \  \leqslant \ |\partial(G)| \leq t $, proving that the encoder is indeed $g$-confined. \end{runningExample}

For some problems and encoders, we may need to ``force'' the confinement of an encoder $\mc{E}$ that may not be confined according to Definition~\ref{def:thin}, while still preserving its usefulness for dynamic programming, in the sense that no relevant information is removed from the tables  (for example, see the encoder for \textsc{$r$-Scattered Set} in Subsection~\ref{sec:Description-encoder-rScatSet}). To this end, given a function $g: \mathds{N} \to \mathds{N}$, we define the function $f_{G}^{ \mathcal{E},g}: \mathcal{C}(I) \to \mathds{N} \cup \{+\infty\}$ as \vspace{-.5cm}

\begin{equation}\label{eq:fEmin-g}
f_{G}^{ \mathcal{E},g}(R)\ =\ \left\{\begin{array}{lll}
& +\infty,\ \ & \mbox{\ \ if $f_{G}^{ \mathcal{E}}(R) - g(t)> \min_{R \in \mathcal{C}(I)}f_{G}^{ \mathcal{E}}(R)$}\\
& f_{G}^{ \mathcal{E}}(R), &\ \ \mbox{otherwise.}\\
\end{array}\right.
\end{equation}

Intuitively, one shall think as the function $f_{G}^{ \mathcal{E},g}$ as a ``compressed'' version of the function $f_{G}^{ \mathcal{E}}$, which stores only the values that are useful for performing dynamic programming.

For encoders $\mc{E}'=(\mc{C}',L_{\mc{C}'})$  associated with maximization problems, given a function $g: \mathds{N} \to \mathds{N}$, we define the function $f_{G}^{ \mathcal{E}',g}: \mathcal{C}(I) \to \mathds{N} \cup \{-\infty\}$ as

\begin{equation}\label{eq:fEmin-g}
f_{G}^{ \mathcal{E}',g}(R)\ =\ \left\{\begin{array}{lll}
& -\infty,\ \ & \mbox{\ \ if $f_{G}^{ \mathcal{E}'}(R) + g(t) < \max_{R \in \mathcal{C}(I)}f_{G}^{ \mathcal{E}'}(R)$}\\
& f_{G}^{ \mathcal{E}'}(R), &\ \ \mbox{otherwise.}\\
\end{array}\right.
\end{equation}

% \vspace{-.05cm}

%------------------------------------------------------------------------
\subsection{Equivalence relations and representatives}
\label{sec:equivalencerelations}

An encoder $\mathcal{E}$ together with a function $g: \mathds{N} \to \mathds{N}$
define an equivalence relation $\sim^*_{\mathcal{E},g,t}$ on $t$-boundaried graphs as follows. (In fact, in our applications we will use only this equivalence relation on graphs in $\mathcal{F}_t$, but for technical reasons we need to define it on general $t$-boundaried graphs.)

\begin{definition}[Equivalence relations $\sim^*_{\mathcal{E},g,t}$ and $\sim_{\mathcal{E},g,t}$]\label{def:equivalenceRelation}
Let $\mathcal{E}$ be an encoder, let $g:\mathds{N} \to \mathds{N}$, and let $G_1,G_2 \in \mathcal{B}_t$. We say that $G_1 \sim^*_{\mathcal{E},g,t} G_2$ if and only if $\Lambda(G_1)=\Lambda(G_2)=: I$ and  there exists an integer $c$, depending only on $G_1$ and $G_2$,  such that for every $\mathcal{C}$-encoding $R \in \mathcal{C}(I)$ it holds that
\begin{equation}\label{eq:equivTruncation}
f_{G_1}^{ \mathcal{E},g}(R)\ =\ f_{G_2}^{ \mathcal{E},g}(R) + c.
\end{equation}
If we restrict the graphs $G_1,G_2$ to belong to $\mathcal{F}_t$, then the corresponding equivalence relation, which is a restriction of $\sim^*_{\mathcal{E},g,t}$, is denoted  by $\sim_{\mathcal{E},g,t}$.
\end{definition}

Note that if there exists $R \in \mathcal{C}(I)$ such that $f_{G_1}^{ \mathcal{E},g}(R) \notin  \{-\infty, + \infty\}$, then  the integer $c$ satisfying Equation~(\ref{eq:equivTruncation}) is unique, otherwise every integer $c$ satisfies Equation~(\ref{eq:equivTruncation}).
%Let $G_1, G_2$ be two  graphs in $\mathcal{F}_t$ such that $G_1 \sim_{\mathcal{E},g,t} G_2$.
 We define the following function $\Delta_{\mathcal{E},g,t}: \mathcal{B}_t \times \mathcal{B}_t \to \mathds{Z}$, which is called, following the terminology from Bodlaender \emph{et al}.\ci{BFL+09},  the \emph{transposition function} for the equivalence relation $\sim^*_{\mathcal{E},g,t}$.
\begin{equation}
\Delta_{\mathcal{E},g,t}(G_1,G_2) \ =\ \left\{
\begin{array}{lll}
& c, \ \ & \mbox{if}\  G_1\sim^*_{\mathcal{E},g,t} G_2 \ \mbox{and Eq.~(\ref{eq:equivTruncation}) holds for a unique integer $c$;}\\
& 0, \ \ & \mbox{if}\  G_1\sim^*_{\mathcal{E},g,t} G_2 \ \mbox{and Eq.~(\ref{eq:equivTruncation}) holds for every integer; and}\\
&  & \mbox{undefined otherwise}
\end{array}
\right.
\end{equation}

Note that we can consider the restriction of the function $\Delta_{\mathcal{E},g,t}$ to couples of graphs in $\mc{F}_t$, defined by using the restricted equivalence relation $\sim_{\mathcal{E},g,t}$.

If we are dealing with a problem defined on a graph class $\mathcal{G}$, the protrusion replacement rule has to preserve the class $\mc{G}$, as otherwise we would obtain a \emph{bikernel} instead of a kernel. That is, we need to make sure that, when replacing a graph in $\mc{B}_t \cap \mc{G}$ or in $\mc{F}_t \cap \mc{G}$ with one of its representatives, we do not produce a graph that does not belong to $\mathcal{G}$ anymore. To this end, we define an equivalence relation $\sim^*_{\mathcal{E},g,t,\mathcal{G}}$ (resp. $\sim_{\mathcal{E},g,t,\mathcal{G}}$) on graphs in $\mc{B}_t \cap \mc{G}$ (resp. $\mc{F}_t \cap \mc{G}$), which refines the equivalence relation $\sim^*_{\mathcal{E},g,t}$ (resp. $\sim_{\mathcal{E},g,t}$) of Definition~\ref{def:equivalenceRelation}.

\begin{definition}[Equivalence relations $\sim^*_{\mathcal{E},g,t,\mathcal{G}}$ and $\sim_{\mathcal{E},g,t,\mathcal{G}}$]\label{def:equivalenceRelation2}
Let $\mc{G}$ be a class of graphs and let $G_1,G_2 \in \mathcal{B}_t \cap \mc{G}$.
\begin{itemize}
\item[(i)] %Let  $\sim_{\mc{G},t}$ be the equivalence relation on \red{$t$-boundaried graphs} such that
$G_1 \sim_{\mc{G},t} G_2$ if and only if for any graph $H \in \mathcal{B}_t$, $G_1 \oplus H \in \mathcal{G}$ if and only if $G_2 \oplus H \in \mathcal{G}$.
\item[(ii)] $G_1 \sim^*_{\mathcal{E},g,t,\mathcal{G}} G_2$ if and only if $G_1 \sim^*_{\mathcal{E},g,t} G_2$ and $G_1 \sim_{\mc{G},t} G_2$.
\end{itemize}
If we restrict the graphs $G_1,G_2$ to belong to $\mathcal{F}_t$ (but still $H \in \mathcal{B}_t$), then the corresponding equivalence relation, which is a restriction of $\sim^*_{\mathcal{E},g,t,\mathcal{G}}$, is denoted  by $\sim_{\mathcal{E},g,t,\mathcal{G}}$.
\end{definition}

It is well-known by B\"{u}chi's theorem that regular languages are precisely those definable in Monadic Second Order logic (MSO logic). By Myhill-Nerode's theorem, it follows that if the membership in a graph class $\mc{G}$ can be expressed in MSO logic, then the equivalence relation $\sim_{\mc{G},t}$ has a finite number of equivalence classes (see for instance~\cite{FlGr06,DF99}). However, we do not have in general an explicit upper bound on the number of equivalence classes of $\sim_{\mc{G},t}$, henceforth denoted by $r_{\mc{G},t}$. Fortunately, in the context of our applications in Sections~\ref{sec:rDomSet},~\ref{sec:rScatSet}, and~\ref{sec:PlanarFDeletion}, where $\mc{G}$ will be a class of graphs that exclude some fixed graph on $h$ vertices as a (topological) minor\footnote{A particular case of the classes of graphs whose membership can be expressed in MSO logic. We would like to stress here that we rely on the expressibility of the \emph{graph class} $\mc{G}$ in MSO logic, whereas in previous work~\cite{BFL+09,FLST10,KLP+12} what is used in the expressibility in CMSO logic of the \emph{problems} defined on a graph class.}, this will always be possible, and in this case it holds that  $r_{\mc{G},t} \leq 2^{t \log t} \cdot h^t \cdot 2^{h^2}$.

%(see Appendix~\ref{ap:encoderFdeletion} for more details).

For an encoder $\mc{E}=(\mc{C},L_{\mc{C}})$, we let $s_{\mc{E}}(t)  :=  \max_{I \subseteq \{1,\ldots,t\}} |\mc{C}(I)|$, where $|\mc{C}(I)|$ denotes the number of  $\mathcal{C}$-encodings in $\mc{C}(I)$. The following lemma gives an upper bound on the number of equivalence classes of $\sim^*_{\mathcal{E},g,t,\mathcal{G}}$, which depends also on  $r_{\mc{G},t}$.

%\begin{equation}\label{eq:sizeEncoder}
% s_{\mc{E}}(t) \ := \ \max_{I \subseteq \{1,\ldots,t\}} |\mc{C}(I)|\ ,
%\end{equation} \vspace{-.35cm}

\begin{lemma}\label{lem:finite}  Let $\mathcal{G}$ be a graph class whose membership can be expressed in MSO logic. For any encoder $\mathcal{E}$, any function $g:\mathds{N} \to \mathds{N}$, and any positive integer $t$, the equivalence relation $\sim^*_{\mathcal{E},g,t,\mathcal{G}}$ has finite index. More precisely, the number of equivalence classes of $\sim^*_{\mathcal{E},g,t,\mathcal{G}}$ is at most $r(\mathcal{E},g,t,\mathcal{G}):={(g(t)+2)^{s_{\mc{E}}(t)}} \cdot 2^t \cdot r_{\mc{G},t}$. %, where $r_{\mc{G},t}$ is the number of representatives of the equivalence relation $\sim_{\mc{G},t}$.
In particular, the number of equivalence classes of $\sim_{\mathcal{E},g,t,\mathcal{G}}$ is at most $r(\mathcal{E},g,t,\mathcal{G})$ as well.
\end{lemma}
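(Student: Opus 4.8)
The plan is to bound the number of equivalence classes of $\sim^*_{\mathcal{E},g,t,\mathcal{G}}$ by analyzing, separately, the information that determines the class of a $t$-boundaried graph $G$ with $\Lambda(G) = I$. By Definition~\ref{def:equivalenceRelation2}(ii), $G_1 \sim^*_{\mathcal{E},g,t,\mathcal{G}} G_2$ if and only if $G_1 \sim^*_{\mathcal{E},g,t} G_2$ and $G_1 \sim_{\mc{G},t} G_2$, so the index of $\sim^*_{\mathcal{E},g,t,\mathcal{G}}$ is at most the product of the indices of $\sim^*_{\mathcal{E},g,t}$ and $\sim_{\mc{G},t}$. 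The latter is $r_{\mc{G},t}$ by hypothesis (finite, since membership in $\mc{G}$ is MSO-expressible, by B\"uchi and Myhill--Nerode as recalled before the lemma). It therefore suffices to show that $\sim^*_{\mathcal{E},g,t}$ has at most $(g(t)+2)^{s_{\mc{E}}(t)}\cdot 2^t$ classes.

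To bound the index of $\sim^*_{\mathcal{E},g,t}$, first I would note that two graphs with different label sets are never equivalent, so we may fix $I \subseteq \{1,\ldots,t\}$; there are at most $2^t$ choices of $I$, which accounts for the factor $2^t$. For a fixed $I$, I claim the class of $G$ (with $\Lambda(G)=I$) under $\sim^*_{\mathcal{E},g,t}$ is determined by the ``normalized profile'' of the function $f_{G}^{\mathcal{E},g}$ restricted to $\mathcal{C}(I)$: namely, if $\mathcal{C}_{G}^{*}(I)\neq\emptyset$, subtract $m_G := \min_{R\in\mathcal{C}(I)} f_G^{\mathcal{E}}(R)$ from every finite value of $f_G^{\mathcal{E},g}$, and record for each $R\in\mathcal{C}(I)$ the resulting value, which by the definition of $f_G^{\mathcal{E},g}$ in Equation~(\ref{eq:fEmin-g}) lies in $\{0,1,\ldots,g(t)\}\cup\{+\infty\}$ — exactly $g(t)+2$ possibilities per coordinate. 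Indeed, if $G_1$ and $G_2$ have the same normalized profile over $\mathcal{C}(I)$, then taking $c := m_{G_1} - m_{G_2}$ one checks that $f_{G_1}^{\mathcal{E},g}(R) = f_{G_2}^{\mathcal{E},g}(R) + c$ for all $R$ (treating $+\infty$ consistently on both sides), so $G_1 \sim^*_{\mathcal{E},g,t} G_2$; and the degenerate case where $\mathcal{C}_{G}^{*}(I)=\emptyset$ (so $f_G^{\mathcal{E},g}\equiv +\infty$ on $\mathcal{C}(I)$) is itself a single class, subsumed by the all-$+\infty$ profile. Since $|\mathcal{C}(I)| \leq s_{\mc{E}}(t)$, the number of normalized profiles, hence of classes for this fixed $I$, is at most $(g(t)+2)^{s_{\mc{E}}(t)}$.

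Combining, the index of $\sim^*_{\mathcal{E},g,t}$ is at most $(g(t)+2)^{s_{\mc{E}}(t)}\cdot 2^t$, and multiplying by $r_{\mc{G},t}$ gives the claimed bound $r(\mathcal{E},g,t,\mathcal{G}) = (g(t)+2)^{s_{\mc{E}}(t)}\cdot 2^t \cdot r_{\mc{G},t}$. The final sentence of the lemma is immediate: $\sim_{\mathcal{E},g,t,\mathcal{G}}$ is the restriction of $\sim^*_{\mathcal{E},g,t,\mathcal{G}}$ to $\mathcal{F}_t \cap \mc{G} \subseteq \mathcal{B}_t \cap \mc{G}$, so its index is at most that of $\sim^*_{\mathcal{E},g,t,\mathcal{G}}$. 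The only subtle point — and the one I would be most careful about — is the bookkeeping around $\pm\infty$ values in Equation~(\ref{eq:equivTruncation}): one must verify that when a normalized profile is entirely $+\infty$ the constant $c$ is irrelevant (any $c$ works, matching the second case in the definition of $\Delta_{\mathcal{E},g,t}$), and that when at least one coordinate is finite the shift $c$ is forced and consistent across all coordinates, so that graphs with identical normalized profiles genuinely fall in one class. This is routine but is where an off-by-one or a mishandled infinite value would slip in.
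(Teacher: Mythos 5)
Your proof is correct and follows essentially the same route as the paper's: decompose $\sim^*_{\mathcal{E},g,t,\mathcal{G}}$ as the common refinement of $\sim^*_{\mathcal{E},g,t}$ and $\sim_{\mc{G},t}$, fix the label set $I$ (accounting for the $2^t$ factor), and count profiles of $f_G^{\mathcal{E},g}$ over $\mathcal{C}(I)$. The only genuine difference is one of exposition: you make explicit the normalization step (subtracting $m_G$ so that profiles live in $\{0,\ldots,g(t)\}\cup\{+\infty\}$ and determine the class up to the shift $c$), and you handle the all-$+\infty$ degenerate case explicitly; the paper's proof states only that $f_G^{\mathcal{E},g}$ takes at most $g(t)+2$ distinct values and jumps to the bound $(g(t)+2)^{|\mathcal{C}(I)|}$, leaving the normalization implicit. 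Your version fills that small gap cleanly; both arrive at the same bound by the same mechanism.
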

\begin{proof}
Let us first show that the equivalence relation $\sim^*_{\mathcal{E},g,t}$ has finite index. Indeed, let $I \subseteq \{1,\ldots,t\}$. By definition, we have that for any graph $G \in \mc{B}_t$ with $\Lambda(G)=I$, the function $f_{G}^{ \mathcal{E},g}$ can take at most $g(t)+2$ distinct values ($g(t)+1$ finite values and possibly the value $+\infty$).  Therefore, it follows that the number of equivalence classes of  $\sim^*_{\mathcal{E},g,t}$ containing all graphs $G$ in $\mc{B}_t$ with $\Lambda(G)=I$ is at most  ${(g(t)+2)^{|\mathcal{C}(I)|}}$. As the number of subsets of  $\{1,\ldots,t\}$ is $2^t$, we deduce that the overall number of equivalence classes of $\sim^*_{\mathcal{E},g,t}$ is at most ${(g(t)+2)^{s_{\mc{E}}(t)}} \cdot 2^t$. Finally, since the equivalence relation $\sim^*_{\mathcal{E},g,t,\mathcal{G}}$ is the Cartesian product of the equivalence relations $\sim^*_{\mathcal{E},g,t}$ and $\sim_{\mc{G},t}$, the result follows from the fact that $\mathcal{G}$ can be expressed in MSO logic. \end{proof}

%We will need the following lemma about the number of nodes in a nice tree decomposition.
%
%\begin{lemma}[Kloks~\cite{Klo94}]
%\label{lem:nicetree}
%Every graph $G$ on $n$ vertices has an optimal nice tree decomposition $(T,\mc{X})$ such that $|V(T)| \leq 3(n-1)$.
%\end{lemma}

In order for an encoding $\mathcal{E}$ and a function $g$ to be useful for performing dynamic programming on graphs in $\mathcal{F}_t$ that belong to a graph class $\mc{G}$ (recall that this is our final objective), we introduce the following definition, which captures the natural fact that the tables of a dynamic programming algorithm should depend exclusively on the tables of the descendants in a rooted tree decomposition.  Before moving to the definition, we note that given a graph $G \in \mathcal{F}_t$ and a rooted tree decomposition $(T,\mathcal{X})$ of $G$  of width at most $t-1$ such that $\partial(G)$ is contained in the root-bag of  $(T,\mathcal{X})$, the labels of $\partial(G)$ can be propagated in a natural way to all bags of $(T,\mathcal{X})$ by introducing, removing, and shifting labels appropriately. Therefore, for any node $x$ of $T$, the graph $G_x$ can be naturally seen as a graph in $\mc{F}_t$. (A brief discussion can be found in the proof of Lemma~\ref{lem:compute}, and we refer to~\cite{BFL+09} for more details.)

Again, for technical reasons (namely, for the proof of Lemma~\ref{lem:characterize}), we need to state the definition below for graphs in $\mc{B}_t$, even if we will only use it for graphs in $\mc{F}_t$.

\begin{definition}[DP-friendly equivalence relation]\label{def:DPfriendly}
 An equivalence relation $\sim^*_{\mathcal{E},g,t,\mathcal{G}}$ is \emph{DP-friendly} if for any graph $G \in \mc{B}_t$ with $\partial(G)=A$ and any separator $B \subseteq V(G)$ with $|B| \leq t$, the following holds: let $G_B$ be any collection of connected components of $G - B$ such that $A \cap V(G_B) \subseteq B$. Considering $G_B$ as a $t$-boundaried graph with boundary $B$, let $G'$ be the $t$-boundaried graph with $\partial(G')=A$ obtained from $G$ by replacing the subgraph $G_B$ with a $t$-boundaried graph $G_B'$ such that $G_B \sim^*_{\mathcal{E},g,t,\mathcal{G}} G_B'$. Then $G'$ satisfies the following conditions: %\vspace{-.1cm}
\begin{itemize}
 \item[(i)] $G \sim^*_{\mathcal{E},g,t,\mc{G}} G'$; and
 \item[(ii)] $\Delta_{\mathcal{E},g,t}(G,G') = \Delta_{\mathcal{E},g,t}(G_B,G_B')$.
\end{itemize}\vspace{-.1cm}
\end{definition}

Note that if an equivalence relation $\sim^*_{\mathcal{E},g,t,\mathcal{G}}$ is DP-friendly, then by definition its restriction $\sim_{\mathcal{E},g,t,\mathcal{G}}$ to graphs in $\mc{F}_t$ is DP-friendly as well.

We would like to note that in the above definition we have used the notation $G_B$ because in all applications the subgraph to be replaced will correspond to a rooted subtree in a tree decomposition of a graph $G$. With this in mind, the condition $A \cap V(G_B) \subseteq B$ in Definition~\ref{def:DPfriendly} corresponds to the fact that the boundary $A$ will correspond in the applications to the vertices in the root-bag of a rooted tree decomposition of $G$.

In Definition~\ref{def:DPfriendly}, as well as in the remainder of the article, when we \emph{replace} the graph  $G_B$ with the graph $G_B'$,  we do \emph{not} remove from $G$ any of the edges with both endvertices in the boundary of $G_B$. That is, $G'= (G - (V(G_x)-\partial(V(G_B))))\oplus G_B'$.

Recall that for the protrusion replacement to be valid for a problem $\Pi$, the equivalence relation $\sim_{\mathcal{E},g,t,\mathcal{G}}$ needs to be a refinement of the canonical equivalence relation $\equiv_{\Pi,t}$ (note that this implies, in particular, that if $\sim_{\mathcal{E},g,t,\mathcal{G}}$ has finite index, then $\Pi$ has FII). The next lemma states a sufficient condition for this property, and furthermore it gives the value of the transposition constant $\Delta_{\Pi,t}(G_1,G_2)$, which will be needed in order to update the parameter after the replacement.

\begin{lemma}\label{lem:characterize}  Let $\Pi$ be a vertex-certifiable problem.  If $\mathcal{E}$ is a $\Pi$-encoder and $\sim^*_{\mathcal{E},g,t,\mathcal{G}}$ is a DP-friendly equivalence relation, then for any two graphs $G_1,G_2 \in \mc{B}_t$ such that $G_1 \sim^*_{\mathcal{E},g,t,\mc{G}} G_2$, it holds that $G_1 \equiv_{\Pi,t} G_2$ and $\Delta_{\Pi,t}(G_1,G_2) = \Delta_{\mathcal{E},g,t}(G_1,G_2)$. In particular, if $\mathcal{E}$ is a $\Pi$-encoder and $\sim^*_{\mathcal{E},g,t,\mathcal{G}}$ is DP-friendly, then for any two graphs $G_1,G_2 \in \mc{F}_t$ such that $G_1 \sim_{\mathcal{E},g,t,\mc{G}} G_2$, it holds that $G_1 \equiv_{\Pi,t} G_2$ and $\Delta_{\Pi,t}(G_1,G_2) = \Delta_{\mathcal{E},g,t}(G_1,G_2)$.
\end{lemma}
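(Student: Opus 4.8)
The plan is to show that, under the hypotheses, the relation $\sim^*_{\mathcal{E},g,t,\mathcal{G}}$ is a refinement of the canonical relation $\equiv_{\Pi,t}$, and simultaneously to track the transposition constants. So fix $G_1,G_2\in\mathcal{B}_t$ with $G_1\sim^*_{\mathcal{E},g,t,\mathcal{G}}G_2$; by definition this gives $\Lambda(G_1)=\Lambda(G_2)=:I$, the equation $f^{\mathcal{E},g}_{G_1}(R)=f^{\mathcal{E},g}_{G_2}(R)+c$ for all $R\in\mathcal{C}(I)$ with $c=\Delta_{\mathcal{E},g,t}(G_1,G_2)$, and also $G_1\sim_{\mathcal{G},t}G_2$. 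We must produce a transposition constant $\Delta_{\Pi,t}(G_1,G_2)$ such that for every $t$-boundaried $H\in\mathcal{G}$ and every $k$, $(G_1\oplus H,k)\in\Pi\iff(G_2\oplus H,k+\Delta_{\Pi,t}(G_1,G_2))\in\Pi$. I would set $\Delta_{\Pi,t}(G_1,G_2):=c$.

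The heart of the argument is to connect membership in $\Pi$ of $G_i\oplus H$ to the encoder value $f^{\mathcal{E}}_{G_i\oplus H}(R_\emptyset)$. Since $\mathcal{E}$ is a $\Pi$-encoder, Condition (i) of Definition~\ref{def:Piencoder} tells us that $\mathcal{C}(\emptyset)=\{R_\emptyset\}$ and $(G,S,R_\emptyset)\in L_{\mathcal{C}}\iff(G,S)\in L_{\Pi}$ for any $0$-boundaried $G$; hence for a minimization problem $(G,k)\in\Pi\iff f^{\mathcal{E}}_{G}(R_\emptyset)\leq k$ (and the analogous statement with $\geq$ for maximization). So it suffices to show $f^{\mathcal{E}}_{G_1\oplus H}(R_\emptyset)=f^{\mathcal{E}}_{G_2\oplus H}(R_\emptyset)+c$. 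This is where DP-friendliness enters: view $G_1\oplus H$ as a $0$-boundaried graph, and take the separator $B:=\partial(G_1)=\partial(G_2)$ inside it. The collection of connected components of $(G_1\oplus H)-B$ lying on the $G_1$-side is exactly $G_1-B$ (seen as a $t$-boundaried graph with boundary $B$), and since $G_1\oplus H$ has empty boundary the condition $A\cap V(G_B)\subseteq B$ with $A=\emptyset$ is vacuous. Applying Definition~\ref{def:DPfriendly} with $G_B=G_1$, $G_B'=G_2$ gives $(G_1\oplus H)\sim^*_{\mathcal{E},g,t,\mathcal{G}}(G_2\oplus H)$ — note both are genuinely in $\mathcal{G}$ since $G_1\sim_{\mathcal{G},t}G_2$ and $H\in\mathcal{G}$ — and moreover $\Delta_{\mathcal{E},g,t}(G_1\oplus H,G_2\oplus H)=\Delta_{\mathcal{E},g,t}(G_1,G_2)=c$. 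Since $G_1\oplus H$ and $G_2\oplus H$ have empty boundary, $\mathcal{C}(\emptyset)=\{R_\emptyset\}$, so the equivalence relation on these two graphs reduces to the single equation $f^{\mathcal{E},g}_{G_1\oplus H}(R_\emptyset)=f^{\mathcal{E},g}_{G_2\oplus H}(R_\emptyset)+c$.

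The remaining subtlety is passing from $f^{\mathcal{E},g}$ back to $f^{\mathcal{E}}$: the truncation in Equation~(\ref{eq:fEmin-g}) could in principle turn a finite value into $+\infty$. Here I would use that for a $0$-boundaried graph the minimum $\min_{R\in\mathcal{C}(\emptyset)}f^{\mathcal{E}}_{G}(R)$ is taken over the single encoding $R_\emptyset$, so the truncation condition $f^{\mathcal{E}}_{G}(R_\emptyset)-g(t)>\min_R f^{\mathcal{E}}_{G}(R)=f^{\mathcal{E}}_{G}(R_\emptyset)$ is never satisfied, hence $f^{\mathcal{E},g}_{G}(R_\emptyset)=f^{\mathcal{E}}_{G}(R_\emptyset)$ for any $0$-boundaried $G$ (and symmetrically for maximization). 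Combining, $f^{\mathcal{E}}_{G_1\oplus H}(R_\emptyset)=f^{\mathcal{E}}_{G_2\oplus H}(R_\emptyset)+c$, and therefore $f^{\mathcal{E}}_{G_1\oplus H}(R_\emptyset)\leq k\iff f^{\mathcal{E}}_{G_2\oplus H}(R_\emptyset)\leq k+c$ wait — this reads $\iff f^{\mathcal{E}}_{G_2\oplus H}(R_\emptyset)+c\leq k\iff f^{\mathcal{E}}_{G_2\oplus H}(R_\emptyset)\leq k-c$, so one must be careful with the sign and I would take $\Delta_{\Pi,t}(G_1,G_2)=-c$ or adjust the sign convention to match Definition~\ref{def:canonicalEquivRelation} exactly; in any case the constant is $\pm\Delta_{\mathcal{E},g,t}(G_1,G_2)$ and one fixes the sign by direct substitution. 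The case $f^{\mathcal{E}}=+\infty$ (no solution of any size) is handled uniformly since then both sides are non-instances for all $k$. This yields $G_1\equiv_{\Pi,t}G_2$ with the claimed transposition constant, and the statement for $G_1,G_2\in\mathcal{F}_t$ with $\sim_{\mathcal{E},g,t,\mathcal{G}}$ follows immediately because $\mathcal{F}_t\subseteq\mathcal{B}_t$ and $\sim_{\mathcal{E},g,t,\mathcal{G}}$ is the restriction of $\sim^*_{\mathcal{E},g,t,\mathcal{G}}$. The main obstacle I anticipate is being scrupulous about (a) the direction/sign of the transposition constant relative to the convention in Definition~\ref{def:canonicalEquivRelation}, and (b) verifying that the hypotheses of Definition~\ref{def:DPfriendly} are correctly instantiated — in particular that $G_1$, as a collection of connected components of $(G_1\oplus H)-B$, legitimately plays the role of $G_B$ with the right boundary, and that the edges inside $B$ are not disturbed by the replacement (which matches the remark following Definition~\ref{def:DPfriendly}).
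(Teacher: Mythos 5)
Your proposal follows the same route as the paper: decompose $G_i \oplus H$ at the separator $\partial(G_i)$, apply DP-friendliness (with $G_B = G_1$, $G_B' = G_2$, and empty outer boundary $A$) to get $(G_1\oplus H)\sim^*_{\mathcal{E},g,t,\mathcal{G}}(G_2\oplus H)$ together with $\Delta_{\mathcal{E},g,t}(G_1\oplus H, G_2\oplus H)=\Delta_{\mathcal{E},g,t}(G_1,G_2)$, and then convert the resulting relation on $f^{\mathcal{E}}_{G_i\oplus H}(R_\emptyset)$ into membership in $\Pi$ using the $\Pi$-encoder condition on the unique encoding $R_\emptyset$. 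Your explicit remark that $f^{\mathcal{E},g}_{G}(R_\emptyset)=f^{\mathcal{E}}_{G}(R_\emptyset)$ for any $0$-boundaried $G$ (the truncation over the singleton $\mathcal{C}(\emptyset)$ never fires) is a step the paper uses silently, and it is good to have it spelled out.

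Your sign worry is well founded and you should commit to it rather than hedge. Substituting Definition~\ref{def:equivalenceRelation}'s convention $f^{\mathcal{E},g}_{G_1}(R)=f^{\mathcal{E},g}_{G_2}(R)+\Delta_{\mathcal{E},g,t}(G_1,G_2)$ into Definition~\ref{def:canonicalEquivRelation} gives, exactly as you computed, $\Delta_{\Pi,t}(G_1,G_2)=-\Delta_{\mathcal{E},g,t}(G_1,G_2)$; the paper's Equation~(\ref{eq:LemmaCharacterize2}) writes a $+$ where a $-$ belongs, and the lemma correspondingly asserts $\Delta_{\Pi,t}(G_1,G_2)=\Delta_{\mathcal{E},g,t}(G_1,G_2)$ without the negation. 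This slip is compensated downstream in Theorem~\ref{thm:protrusionReplacement}, where the parameter is updated as $k'=k+\Delta_{\Pi,t}(Y',Y)$ rather than $k+\Delta_{\Pi,t}(Y,Y')$ (a swap of arguments that flips the sign back), so the final kernel bound is unaffected; but the signed identity in the lemma as stated does not match the stated conventions. One minor imprecision in your write-up: $G_1\sim_{\mathcal{G},t}G_2$ only guarantees that $G_1\oplus H\in\mathcal{G}$ iff $G_2\oplus H\in\mathcal{G}$, not that either lies in $\mathcal{G}$; this is immaterial here since Definition~\ref{def:DPfriendly} is stated for $G\in\mathcal{B}_t$.
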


\begin{proof}
Assume without loss of generality that $\Pi$ is a minimization problem, and let $\mathcal{E} = (\mc{C},L_{\mc{C}})$.
%Since $\sim_{\mathcal{E},g,t,\mathcal{G}}$ is DP-friendly and $G_1 \sim_{\mathcal{E},g,t} G_2$, by Fact~\ref{fact:DPfriendly} we have that $G_1 \sim_{\mathcal{E},g,t,\mc{G}} G_2$.
We need to prove that for any $t$-boundaried graph $H$ and any integer $k$, $(G_1 \oplus H, k ) \in \Pi$ if and only if $(G_2 \oplus H, k + \Delta_{\mathcal{E},g,t}(G_1,G_2)) \in \Pi$. Suppose that  $(G_1 \oplus H, k ) \in \Pi$ (by symmetry the same arguments apply starting with $G_2$). This means that there exists $S_1 \subseteq V(G_1 \oplus H)$ with $|S_1| \leq k$ such that $(G_1 \oplus H,S_1) \in L_{\Pi}$. And since $G_1 \oplus H$ is a $0$-boundaried graph and $\mathcal{E}$ is a $\Pi$-encoder, we have that $(G_1 \oplus H,S_1,R_{\emptyset}) \in L_{\mc{C}}$, where $\mc{C}(\emptyset) = \{R_{\emptyset} \}$.  This implies that
\begin{equation}\label{eq:LemmaCharacterize}
f_{G_1 \oplus H}^{ \mathcal{E}}(R_{\emptyset})\ \leq\ |S_1|\ \leq\ k.
\end{equation}
As $\sim^*_{\mathcal{E},g,t,\mathcal{G}}$ is DP-friendly and $G_1 \sim^*_{\mathcal{E},g,t,\mc{G}} G_2$, it follows that  $(G_1 \oplus H) \sim^*_{\mathcal{E},g,t,\mc{G}}(G_2 \oplus H)$
%\ig{FALSE!! We cannot say anything about $G_1 \oplus H$ and $G_2 \oplus H$, as the graph $H$ does not necessarily have bounded treewidth}
and that
 $\Delta_{\mathcal{E},g,t}(G_1 \oplus H,G_2 \oplus H) = \Delta_{\mathcal{E},g,t}(G_1,G_2)$. Since $G_2 \oplus H$ is also a $0$-boundaried graph, the latter property and Equation~(\ref{eq:LemmaCharacterize}) imply that
 \begin{equation}\label{eq:LemmaCharacterize2}
 f_{G_2 \oplus H}^{ \mathcal{E}}(R_{\emptyset})\ =\ f_{G_1 \oplus H}^{ \mathcal{E}}(R_{\emptyset}) + \Delta_{\mathcal{E},g,t}(G_1,G_2)\  \leq\ k + \Delta_{\mathcal{E},g,t}(G_1,G_2).
 \end{equation}
 From Equation~(\ref{eq:LemmaCharacterize2}) it follows that there exists  $S_2 \subseteq V(G_2 \oplus H)$ with $|S_2| \leq k + \Delta_{\mathcal{E},g,t}(G_1,G_2)$ such that $(G_2 \oplus H,S_2,R_{\emptyset}) \in L_{\mathcal{C}}$. Since $G_2 \oplus H$ is a $0$-boundaried graph and $\mathcal{E}$ is a $\Pi$-encoder, this implies that $(G_2 \oplus H, S_2) \in L_{\Pi}$, which in turn implies that $(G_2 \oplus H, k + \Delta_{\mathcal{E},g,t}(G_1,G_2)) \in \Pi$, as we wanted to prove.
 \end{proof}

Note that, in particular, Lemma~\ref{lem:characterize} implies that under the same hypothesis, for graphs in $\mc{F}_t$ the equivalence relation $\sim_{\mathcal{E},g,t,\mathcal{G}}$ refines the canonical equivalence relation $\equiv_{\Pi,t}$.

In the following, we will only deal with equivalence relations $\sim_{\mathcal{E},g,t,\mathcal{G}}$ defined on graphs in $\mc{F}_t$, and therefore we will only use this particular case of Lemma~\ref{lem:characterize}. The reason why we restrict ourselves to graphs in $\mc{F}_t$ is that, while a DP-friendly equivalence relation refines the canonical one for all graphs in $\mc{B}_t$ (Lemma~\ref{lem:characterize}), we need {\sl bounded treewidth} in order to bound the {\sl size} of the progressive representatives (Lemma~\ref{lem:finiteSize}) and  to explicitly {\sl compute} these representatives for performing the replacement (Lemma~\ref{lem:compute}).

The following definition will be important to guarantee that, when applying our protrusion replacement rule, the parameter of the problem under consideration does not increase.

\begin{definition}[Progressive representatives of  $\sim_{\mathcal{E},g,t,\mathcal{G}}$]\label{def:progressive}
Let $\mathfrak{C}$ be some equivalence class of $\sim_{\mathcal{E},g,t,\mathcal{G}}$ and let $G \in \mathfrak{C}$. We say that $G$ is a \emph{progressive representative} of $\mathfrak{C}$ if for any graph $G' \in \mathfrak{C}$ it holds that $\Delta_{\mathcal{E},g,t}(G,G') \leq 0$.
\end{definition}

In the next lemma we provide an upper bound on the size of a smallest {\sl progressive} representative of any equivalence class of $\sim_{\mathcal{E},g,t,\mathcal{G}}$.

%\footnote{We could have defined $\sim_{\mathcal{E},g,t}$ and $\sim_{\mathcal{E},g,t,\mathcal{G}}$ on $t$-boundaried graphs, but restricting them to graphs in $\mc{F}_t$ allows to give an explicit upper bound on the size of the {\sl progressive} representatives (cf. proof of Lemma~\ref{lem:finiteSize}).}

 %We note that Lemma~\ref{lem:finite} would still hold if the equivalence %relations are defined on $t$-boundaried graphs.}

\begin{lemma}\label{lem:finiteSize}
Let $\mathcal{G}$ be a graph class whose membership can be expressed in MSO logic. For any encoder $\mathcal{E}$, any function $g:\mathds{N} \to \mathds{N}$, and any $t \in \mathds{N}$ such that $\sim^*_{\mathcal{E},g,t,\mathcal{G}}$ is DP-friendly, every equivalence class of $\sim_{\mathcal{E},g,t,\mathcal{G}}$ has a progressive representative of size at most $b(\mathcal{E},g,t,\mathcal{G}):= 2^{r(\mathcal{E},g,t,\mathcal{G})+1} \cdot t$, where $r(\mathcal{E},g,t,\mathcal{G})$ is the function defined in Lemma~\ref{lem:finite}.
\end{lemma}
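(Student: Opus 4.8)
The plan is to mimic the classical argument of Bodlaender \emph{et al}.~\cite{BFL+09} for bounding the size of representatives, but keeping track of explicit constants via the bound $r(\mathcal{E},g,t,\mathcal{G})$ from Lemma~\ref{lem:finite}. Fix an equivalence class $\mathfrak{C}$ of $\sim_{\mathcal{E},g,t,\mathcal{G}}$ and let $G \in \mathfrak{C}$ be a member of \emph{smallest} size; I would first argue that such a smallest-size member is automatically a progressive representative. Indeed, if $G'\in\mathfrak{C}$ had $\Delta_{\mathcal{E},g,t}(G,G')>0$, then the DP-friendliness of $\sim^*_{\mathcal{E},g,t,\mathcal{G}}$ together with Lemma~\ref{lem:characterize} would let us iterate the replacement of $G$ by $G'$ inside itself (or more precisely compose replacements) to contradict minimality; actually the cleaner route is: among all minimum-size members of $\mathfrak{C}$, pick one; I claim it is progressive, because a positive transposition constant to some $G'$ would, after composing with the reverse direction, force $G'$ to be strictly smaller than $G$ once we bound its size by the same argument below. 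So it suffices to show any class has \emph{some} representative (not necessarily progressive) of size at most $b(\mathcal{E},g,t,\mathcal{G})$, and then note the smallest one is progressive.

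The core is the standard ``small representative via a tree-decomposition pruning'' argument. Take $G\in\mathfrak{C}\subseteq\mathcal{F}_t$ together with a rooted tree decomposition of width at most $t-1$ whose root-bag contains $\partial(G)$, and make it nice, so each node $x$ has a bag $B_x$ of size at most $t$ and the associated $t$-boundaried graph $G_x$ (with boundary $B_x$, after propagating labels as discussed before Definition~\ref{def:DPfriendly}) lies in $\mathcal{F}_t$. Each node $x$ is thus assigned the equivalence class of $G_x$ under $\sim_{\mathcal{E},g,t,\mathcal{G}}$; by Lemma~\ref{lem:finite} there are at most $r:=r(\mathcal{E},g,t,\mathcal{G})$ such classes. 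Now, if $|V(G)|$ exceeds the claimed bound $b=2^{r+1}\cdot t$, then the tree decomposition has more than $2^{r+1}$ nodes (each bag contributing at most $t$ vertices), hence a root-to-leaf path contains more than $\ldots$ — more carefully, a standard counting argument on the nice tree decomposition (a binary-ish tree with more than $2^{r+1}$ nodes has a path with more than $r+1$ nodes, or alternatively we use that $2^{r+1}$ nodes force two nodes $x$ a proper ancestor of $y$ on a common path with the \emph{same} assigned class, using that the tree has bounded ``branching depth'') yields two nodes $x$, $y$ with $y$ a descendant of $x$, lying on a common path, such that $G_x$ and $G_y$ have the same class in $\sim_{\mathcal{E},g,t,\mathcal{G}}$. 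Then replacing, inside $G$, the subgraph $G_x$ by $G_y$ — both seen as $t$-boundaried graphs with boundary $B_x\cong B_y$ — produces, by DP-friendliness (Definition~\ref{def:DPfriendly}), a graph $G'$ with $G'\sim^*_{\mathcal{E},g,t,\mathcal{G}}G$ hence $G'\in\mathfrak{C}$, and $|V(G')|<|V(G)|$. Iterating, we reach a member of $\mathfrak{C}$ of size at most $b$.

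I would then close the loop on progressiveness: apply the above to the \emph{smallest} member $G^\star$ of $\mathfrak{C}$; it already has size at most $b$, and if it were not progressive, i.e.\ $\Delta_{\mathcal{E},g,t}(G^\star,G')>0$ for some $G'\in\mathfrak{C}$, then by DP-friendliness one can substitute $G^\star$ by $G'$ in a context that witnesses $G'\in\mathfrak{C}$ but with strictly smaller value, and pushing the size of $G'$ down to at most $b$ by the pruning argument contradicts minimality of the value $\Delta$ over the (finite, by Lemma~\ref{lem:finite}) class; the cleanest phrasing is to choose $G^\star$ minimizing the pair (size, then $\Delta$ to a fixed reference member) lexicographically and derive a contradiction in each case. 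The main obstacle I anticipate is the combinatorial counting step: extracting two same-class nodes in an ancestor relation from a bound on $|V(G)|$ requires care with nice tree decompositions (branching nodes, join nodes), and one must be sure that the replacement of $G_x$ by $G_y$ keeps the root-bag — and hence $\partial(G)$ — intact and keeps the resulting graph in $\mathcal{F}_t$; this is exactly where the hypothesis ``boundary contained in the root-bag'' and the convention ``we do not delete edges inside the boundary of $G_x$'' (stated after Definition~\ref{def:DPfriendly}) are used, and where DP-friendliness is invoked to transfer the equivalence from the replaced piece to the whole graph.
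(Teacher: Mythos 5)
Your pruning step for producing \emph{some} representative of $\mathfrak{C}$ of size at most $b$ is essentially right (and is the same tree-decomposition argument the paper uses), but the part you sketch for progressiveness has a genuine gap, and it is exactly the part where the paper's proof does real work. Progressiveness is a statement about the transposition constants $\Delta_{\mathcal{E},g,t}$, not about $|V(G)|$: a smallest member of $\mathfrak{C}$ need not be progressive, and ``minimizing $\Delta$ over the finite class'' does not save you because Lemma~\ref{lem:finite} bounds the \emph{number of classes}, not the size of each class — each class is typically infinite, so the minimum is not obviously attained. The paper handles this by a separate well-foundedness argument: it builds a directed graph on $\mathfrak{C}$ with an arc $G_1 \to G_2$ whenever $\Delta_{\mathcal{E},g,t}(G_1,G_2)>0$, notes that there are no cycles (transitivity) and no infinite outgoing paths (the values $f_G^{\mathcal{E}}$ are non-negative, so $\Delta$ is bounded below), and concludes a sink, i.e.\ a progressive representative, exists. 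Only \emph{after} that does it take a minimum-size \emph{progressive} representative and run the tree-pruning argument.

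There is a second, subtler gap that forces this order of quantifiers. When you find two same-class nodes $x$ with descendant $y$ and replace $G_x$ by $G_y$, DP-friendliness gives $G'\sim_{\mathcal{E},g,t,\mathcal{G}}G$ with $\Delta_{\mathcal{E},g,t}(G',G)=\Delta_{\mathcal{E},g,t}(G_y,G_x)$, and this last quantity need not be $\le 0$: the smaller subgraph $G_y$ can have a strictly larger $f$-value, so shrinking can \emph{spoil} a favourable $\Delta$. Thus ``first shrink, then optimize $\Delta$'' is not sound — after shrinking you may have lost the graphs realizing small $\Delta$. The paper avoids this by proving, before any replacement, that in a \emph{minimum-size progressive} representative $G$ every rooted subgraph $G_x$ is itself a progressive representative of its own class (otherwise replacing $G_x$ by a progressive representative of its class would, via DP-friendliness, yield $\Delta(G_H,G)<0$, contradicting that $G$ is progressive). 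Consequently, when two same-class nodes $x,y$ on a path are found, both $G_x$ and $G_y$ are progressive representatives of the same class, so $\Delta_{\mathcal{E},g,t}(G_y,G_x)=0$, the replacement preserves progressiveness, and the bound follows. This subgraph-progressiveness claim is the piece your proposal is missing; with it, the counting/pigeonhole on the nice tree decomposition and the bound $2^{r+1}\cdot t$ go through as you outline.
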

\begin{proof} Let $\mathfrak{C}$ be an arbitrary equivalence class of $\sim_{\mathcal{E},g,t,\mathcal{G}}$, and we want to prove that there exists in $\mathfrak{C}$  a progressive representative of the desired size.
Let us first argue that $\mathfrak{C}$ contains some progressive representative. We construct an (infinite) directed graph $D_{\mathfrak{C}}$ as follows. There is a vertex in $D_{\mathfrak{C}}$ for every graph in  $\mathfrak{C}$, and for any two vertices $v_1,v_2 \in V(D_{\mathfrak{C}})$, corresponding to two graphs $G_1,G_2 \in \mathfrak{C}$ respectively, there is an arc from $v_1$ to $v_2$ if and only if $\Delta_{\mathcal{E},g,t}(G_1,G_2) > 0$.  We want to prove that $D_{\mathfrak{C}}$ has a sink, that is, a vertex with no outgoing arc, which by construction is equivalent to the existence of a progressive representative in $\mathfrak{C}$. Indeed, let $v$ be an arbitrary vertex of $D_{\mathfrak{C}}$, and grow greedily a directed path $P$ starting from $v$. Because of the transitivity of the equivalence relation $\sim_{\mathcal{E},g,t,\mathcal{G}}$ and by construction of $D_{\mathfrak{C}}$, it follows that $D_{\mathfrak{C}}$ does not contain any finite cycle, so $P$ cannot visit vertex $v$ again. On the other hand, since the function $f_{G}^{ \mathcal{E}}$ takes only positive values (except possibly for the value $-\infty$), it follows that there are no arbitrarily long directed paths in $D_{\mathfrak{C}}$ starting from any fixed vertex, so in particular the path $P$ must be finite, and therefore the last vertex in $P$ is necessarily a sink. (Note that for any two graphs $G_1,G_2 \in \mathfrak{C}$ such that their corresponding vertices $v_1,v_2 \in V(D_{\mathfrak{C}})$ are sinks, it holds by construction of $D_{\mathfrak{C}}$ that $\Delta_{\mathcal{E},g,t}(G_1,G_2) = 0$.)

Now let $G \in \mc{F}_t \cap \mc{G}$ be a progressive representative of $\mathfrak{C}$ with minimum number of vertices. We claim that $G$ has size at most $2^{r(\mathcal{E},g,t,\mathcal{G})+1} \cdot t$. (We would like to stress that at this stage we only need to care about the {\sl existence} of such representative $G$, and not about how to {\sl compute} it.) Indeed, let $(T,\mc{X})$ be a nice rooted tree decomposition of $G$ of width at most $t-1$ such that $\partial(G)$ is contained in the root-bag (such a nice tree decomposition exists by~\cite{Klo94}), and let $r$ be the root of $T$.

We first claim that for any node $x$ of $T$, the graph $G_x$ is a progressive representative of its equivalence class with respect to $\sim_{\mathcal{E},g,t,\mathcal{G}}$, namely $\mathfrak{A}$. Indeed, assume that it is not the case, and let $H$ be a progressive representative of $\mathfrak{A}$, which exists by the discussion in the first paragraph of the proof. Since $H$ is progressive and $G_x$ is not, $\Delta_{\mathcal{E},g,t}(H,G_x) < 0$. Let $G_H$ be the graph obtained from $G$ by replacing $G_x$ with $H$. Since  $\sim^*_{\mathcal{E},g,t,\mathcal{G}}$ is DP-friendly, it follows that  $G \sim_{\mathcal{E},g,t,\mc{G}} G_H$ and that $\Delta_{\mathcal{E},g,t}(G_H,G) = \Delta_{\mathcal{E},g,t}(H,G_x) < 0$, contradicting the fact that $G$ is a progressive representative of the equivalence class $\mathfrak{C}$.

%\red{(Note that this replacement clearly preserves the treewidth of the %resulting graph.)}

We now claim that for any two nodes $x,y \in V(T)$ lying on a path from $r$ to a leaf of $T$, it holds that $G_x \nsim_{\mathcal{E},g,t,\mathcal{G}} G_y$. Indeed, assume for contradiction that there are two nodes $x,y \in V(T)$ lying on a path from $r$ to a leaf of $T$ such that $G_x \sim_{\mathcal{E},g,t,\mc{G}} G_y$. Let $\mathfrak{A}$ be the equivalence class of $G_x$ and $G_y$ with respect to $\sim_{\mathcal{E},g,t,\mathcal{G}}$. By the previous claim, it follows that both $G_x$ and $G_y$ are progressive representatives of $\mathfrak{A}$, and therefore it holds that $\Delta_{\mathcal{E},g,t}(G_y,G_x) = 0$. Suppose without loss of generality that $G_y \subsetneq G_x$ (that is, $G_y$ is a strict subgraph of $G_x$), and let $G'$ be the graph obtained from $G$  by replacing $G_x$ with $G_y$. Again, since $\sim^*_{\mathcal{E},g,t,\mathcal{G}}$ is DP-friendly, it follows that  $G \sim_{\mathcal{E},g,t,\mc{G}} G'$ and that $\Delta_{\mathcal{E},g,t}(G',G) = \Delta_{\mathcal{E},g,t}(G_y,G_x) = 0$. Therefore, $G'$ is a progressive representative of $\mathfrak{C}$ with $|V(G')| < |V(G)|$, contradicting the minimality of $|V(G)|$.

Finally, since for any two nodes $x,y \in V(T)$ lying on a path from $r$ to a leaf of $T$ we have that $G_x \nsim_{\mathcal{E},g,t,\mathcal{G}} G_y$, it follows that the height of $T$ is at most the number of equivalence classes of $\sim_{\mathcal{E},g,t,\mathcal{G}}$, which is at most $r(\mathcal{E},g,t,\mathcal{G})$ by Lemma~\ref{lem:finite}. Since $T$ is a binary tree, we have that $|V(T)| \leq 2^{r(\mathcal{E},g,t,\mathcal{G})+1} - 1$. Finally, since $|V(G)| \leq |V(T)| \cdot t$, it follows that $|V(G)| \leq 2^{r(\mathcal{E},g,t,\mathcal{G})+1} \cdot t$, as we wanted to prove. \end{proof}

%\ig{Dimitrios, which was the lower bound about the pumping lemma that you found in %Christophe's office?}

The next lemma states that if one is given an upper bound on the size of the progressive representatives of an equivalence relation defined on $t$-protrusions  (that is, on graphs in $\mc{F}_t$)\footnote{Note that we slightly abuse notation when identifying $t$-protrusions and graphs  in $\mc{F}_t$, as protrusions are defined as subsets of vertices of a graph. Nevertheless, this will not cause any confusion.}, then a {\sl small} progressive representative of a $t$-protrusion can be explicitly calculated in linear time. In other words, it provides a generic and constructive way to perform a dynamic programming procedure to replace protrusions, without needing to deal with the particularities of each encoder in order to compute the tables. Its proof uses some ideas taken from~\cite{BFL+09,FLST10}.

\begin{lemma}
\label{lem:compute}
Let $\mathcal{G}$ be a graph class, let $\mathcal{E}$ be an encoder, let $g:\mathds{N} \to \mathds{N}$, and let $t \in \mathds{N}$ such that $\sim^*_{\mathcal{E},g,t,\mathcal{G}}$ is DP-friendly. Assume that we are given an upper bound $b$ on the size of a smallest progressive representative
of any equivalence class of $\sim_{\mathcal{E},g,t,\mathcal{G}}$, with $b \geq t$.
Then,  given an $n$-vertex $t$-protrusion $G$ inside some graph, we can output in time $O(n)$ a $t$-protrusion $H$ inside the same graph of size at most $b$ such that $G \sim_{\mathcal{E},g,t,\mathcal{G}} H$ and the corresponding transposition constant $\Delta_{\mathcal{E},g,t}(H,G)$ with $\Delta_{\mathcal{E},g,t}(H,G) \leq 0$, where the hidden constant in the ``$O$'' notation depends only on $\mathcal{E},g,b,\mathcal{G}$, and $t$.
\end{lemma}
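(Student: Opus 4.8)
The plan is to process the given $t$-protrusion $G$ bottom-up along a rooted tree decomposition, replacing each rooted subtree that becomes ``large enough'' by a small progressive representative, so that after one linear-time sweep the whole protrusion has size at most $b$. First I would use Bodlaender's linear-time algorithm to compute a nice rooted tree decomposition $(T,\mathcal{X})$ of $G$ of width at most $t-1$ whose root-bag contains $\partial(G)$; since $G \in \mc{F}_t$ such a decomposition exists, and $|V(T)| = O(n)$. Along the way the labels of $\partial(G)$ are propagated to all bags by introducing, removing and shifting labels, so that each $G_x$ is naturally a $t$-boundaried graph in $\mc{F}_t$, exactly as discussed before Definition~\ref{def:DPfriendly} and in~\cite{BFL+09}.

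The core of the algorithm is a standard DP sweep that maintains, for each node $x$ processed so far, a \emph{representative} $H_x \in \mc{F}_t$ of the equivalence class of $G_x$ under $\sim_{\mathcal{E},g,t,\mathcal{G}}$, together with the transposition constant $\Delta_{\mathcal{E},g,t}(H_x,G_x) \leq 0$. The invariant is that $|V(H_x)| \leq b$. For a leaf, $G_x$ already has size at most $t \leq b$, so we take $H_x := G_x$ and $\Delta := 0$. For an internal node $x$ with children $y$ (and $z$), we first form the graph $G_x' := G_x[H_y]$ (resp. with $H_z$ glued in as well), i.e. we glue the already-computed small representatives in place of the processed subtrees; since $\sim^*_{\mathcal{E},g,t,\mathcal{G}}$ is DP-friendly, $G_x' \sim_{\mathcal{E},g,t,\mathcal{G}} G_x$ and $\Delta_{\mathcal{E},g,t}(G_x',G_x)$ equals the sum of the children's constants, hence is $\leq 0$. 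The glued graph $G_x'$ has size at most $2b + t = O(1)$, so we may afford to \emph{brute-force} over all $t$-boundaried graphs of size at most $b$: for each such candidate $H$, decide whether $H \sim_{\mathcal{E},g,t,\mathcal{G}} G_x'$ by computing, for each of the at most $s_{\mc{E}}(t)$ encodings $R \in \mc{C}(I)$, the values $f_{H}^{\mathcal{E},g}(R)$ and $f_{G_x'}^{\mathcal{E},g}(R)$ (these are computable since $L_{\mc{C}}$ is a computable language and both graphs have bounded size), checking that the differences are all equal to a common constant $c$, and also checking $H \oplus W \in \mathcal{G} \iff G_x' \oplus W \in \mathcal{G}$ for all $W \in \mc{B}_t$ of bounded size — which suffices because membership in $\mathcal{G}$ is MSO-definable and thus $\sim_{\mc{G},t}$ has finitely many classes, each of which contains a small representative. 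Among all candidates equivalent to $G_x'$ we keep the one that is \emph{progressive}; by hypothesis $b$ bounds the size of a smallest progressive representative, so at least one such $H$ of size $\leq b$ is found. We set $H_x := H$ and $\Delta_{\mathcal{E},g,t}(H_x,G_x) := \Delta_{\mathcal{E},g,t}(H,G_x') + \Delta_{\mathcal{E},g,t}(G_x',G_x) \leq 0$, using transitivity of the transposition function (which follows from transitivity of $\sim_{\mathcal{E},g,t,\mathcal{G}}$ and the additivity coming from DP-friendliness). Re-checking the invariant $|V(H_x)| \leq b$ closes the induction. When $x$ is the root $r$, output $H := H_r$ together with $\Delta_{\mathcal{E},g,t}(H,G)$; DP-friendliness applied at the root ensures $H$ can be substituted for $G$ inside the ambient graph as a $t$-protrusion, and $G \sim_{\mathcal{E},g,t,\mathcal{G}} H$ with $\Delta_{\mathcal{E},g,t}(H,G) \leq 0$ as required.

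For the running-time analysis, computing the nice tree decomposition takes $O(n)$ time (linear, since $t$ is a constant), and the sweep visits each of the $O(n)$ nodes of $T$ once. At each node the work — gluing representatives of total size $O(1)$, enumerating the finitely many boundaried graphs of size $\leq b$, and evaluating $f^{\mathcal{E},g}$ and the MSO property $\mathcal{G}$ on these bounded-size graphs — costs only $O(1)$ time, where the constant depends on $\mathcal{E}$, $g$, $b$, $\mathcal{G}$, and $t$ but not on $n$. Hence the total time is $O(n)$, as claimed.

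I expect the main obstacle to be the correctness bookkeeping around the transposition constants: one must verify that the local constants $\Delta_{\mathcal{E},g,t}$ add up correctly along the subtree replacements (which is exactly what condition~(ii) of Definition~\ref{def:DPfriendly} buys us, but it has to be invoked carefully at every internal node and once more at the root), and that the replaced object is genuinely a $t$-protrusion \emph{inside the same graph}, i.e. the gluing $G' = (G - (V(G_x) - \partial(V(G_B)))) \oplus G_B'$ does not disturb edges with both endpoints in the boundary. A secondary technical point is justifying that the brute-force search for a small progressive representative is legitimate: this rests on Lemma~\ref{lem:finiteSize} (existence of a progressive representative of size $\leq b$) together with the decidability of $\sim_{\mathcal{E},g,t,\mathcal{G}}$ on bounded-size graphs, which itself decomposes into decidability of $\sim^*_{\mathcal{E},g,t}$ (finitely many encodings, $L_{\mc{C}}$ computable) and of $\sim_{\mc{G},t}$ (MSO-definability of $\mathcal{G}$ plus Myhill--Nerode). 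None of these steps is deep, but assembling them into a clean inductive statement about $H_x$ is where the care is needed.
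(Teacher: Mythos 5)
Your proof is correct and follows essentially the same strategy as the paper's: process a nice rooted tree decomposition bottom-up and iteratively replace rooted subtrees by small progressive representatives, using DP-friendliness to certify equivalence and to propagate the transposition constant additively. The only structural difference is an implementation detail. The paper precomputes once and for all a repository $\mathfrak{R}$ of all graphs in $\mathcal{F}_t$ with at most $b+1$ vertices, partitions it into $\sim_{\mathcal{E},g,t,\mathcal{G}}$-classes, and then replaces \emph{lazily}: scanning the nice tree decomposition bottom-up (so that at most one vertex enters at each step), it waits until some $G_x$ first reaches exactly $b+1$ vertices, at which point $G_x$ is found verbatim in $\mathfrak{R}$ and swapped for a progressive representative of size $\leq b$ by constant-time lookup; this repeats $O(n)$ times. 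You instead replace \emph{eagerly} at every internal node, gluing the already-shrunk children $H_y, H_z$ into the bag of $x$ to get a graph of size $O(b)$, and then brute-forcing over all candidates of size $\leq b$. Both sweeps cost $O(1)$ work per node (with the constant depending on $\mathcal{E}, g, b, t, \mathcal{G}$) and hence $O(n)$ overall, and both correctly accumulate the transposition constant via condition (ii) of Definition~\ref{def:DPfriendly} together with the additivity $\Delta(A,C)=\Delta(A,B)+\Delta(B,C)$; the paper's lazy+repository variant just avoids redundant brute-forcing. One remark: your side concern about deciding $\sim_{\mathcal{G},t}$ by testing finitely many gluings $W$ is exactly as informal as the paper's own treatment — the lemma as stated only says ``let $\mathcal{G}$ be a graph class,'' and both proofs tacitly rely on $\sim_{\mathcal{G},t}$ being decidable on bounded-size inputs, which is made concrete later in the paper only for (topological-)minor-closed $\mathcal{G}$ via rooted packings.
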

\begin{proof} Let $\mc{E} = (\mc{C},L_{\mc{C}})$ be the given encoder. We start by generating a repository $\mathfrak{R}$ containing all the graphs in $\mc{F}_t$ with at most $b+1$ vertices. Such a set of graphs, as well as a rooted nice tree decomposition of width at most $t-1$ of each of them, can be clearly generated in time depending only on $b$ and $t$. By assumption, the size of a smallest progressive representative
of any equivalence class of $\sim_{\mathcal{E},g,t,\mathcal{G}}$ is at most $b$, so  $\mathfrak{R}$ contains a progressive representative of any equivalence class of $\sim_{\mathcal{E},g,t,\mathcal{G}}$ with at most $b$ vertices. We now partition the graphs in $\mathfrak{R}$ into equivalence classes of $\sim_{\mathcal{E},g,t,\mathcal{G}}$ as follows. For each graph $H \in \mathfrak{R}$ and each $\mc{C}$-encoding $R \in \mathcal{C}(\Lambda(G))$, as $L_\mathcal{C}$ is a computable language, we can
compute the value $f_{G}^{ \mathcal{E},g}(R)$ in time depending only on $\mc{E},g,t,$ and $b$. Therefore, for any two graphs $H_1,H_2 \in \mathfrak{R}$, we can decide in time depending only on $\mc{E},g,t,b$, and $\mc{G}$ whether $H_1 \sim_{\mathcal{E},g,t,\mathcal{G}} H_2$, and if this is the case, we can compute the transposition constant $\Delta_{\mathcal{E},g,t}(H_1,H_2)$ within the same running time.

Given a $t$-protrusion $G$ on $n$ vertices with boundary $\partial(G)$, we first compute a rooted nice tree decomposition $(T,\mc{X})$ of $G$ such that $\partial(G)$ is contained in the root bag
%\ig{in the worst case, we just have to multiply the treewidth by 2...}
in time $f(t) \cdot n$, by using the linear-time algorithm of Bodlaender~\cite{Bod96,Klo94}. Such a $t$-protrusion $G$ equipped with a tree decomposition can be naturally seen as a graph in $\mc{F}_t$ by assigning distinct labels from $\{1,\ldots,t\}$ to the vertices in the root-bag. These labels from $\{1,\ldots,t\}$ can be transferred to the vertices in all the bags of $(T,\mc{X})$ by performing a standard shifting procedure when a vertex is introduced or removed from the nice tree decomposition (see~\cite{BFL+09} for more details). Therefore, each node $x \in V(T)$ defines in a natural way a graph  $G_x \subseteq G$ in $\mc{F}_t$ with its associated rooted nice tree decomposition. Let us now proceed to the description of the replacement algorithm.

We process the bags of $(T,\mc{X})$ in a bottom-up way until we encounter the first node $x$ in $V(T)$ such that $|V(G_x)|=b+1$.  (Note that as $(T,\mc{X})$ is a nice tree decomposition, when processing the bags in a bottom-up way, at most one new vertex is introduced at every step, and recall that by hypothesis $t \leq b$.) Let $\mathfrak{C}$ be the equivalence class of $G_x$ according to $\sim_{\mathcal{E},g,t,\mathcal{G}}$. As $|V(G_x)|=b+1$, the graph $G_x$ is contained in the repository $\mathfrak{R}$, so in constant time we can find in  $\mathfrak{R}$ a progressive representative $F$ of $\mathfrak{C}$ with at most $b$ vertices and the corresponding transposition constant $\Delta_{\mathcal{E},g,t}(F,G_x) \leq 0$, where the inequality holds because $F$ is a progressive representative. Let $G'$ be the graph obtained from $G$ by replacing $G_x$ with $F$, so we have that $|V(G')| < |V(G)|$. (Note that this replacement operation directly yields a rooted nice tree decomposition of width at most $t-1$ of $G'$.) Since $\sim^*_{\mathcal{E},g,t,\mathcal{G}}$ is DP-friendly, it follows that  $G \sim_{\mathcal{E},g,t,\mc{G}} G'$ and that $\Delta_{\mathcal{E},g,t}(G',G) = \Delta_{\mathcal{E},g,t}(F,G_x) \leq 0$.

We recursively apply this replacement procedure on the resulting graph until we eventually obtain a $t$-protrusion $H$ with at most $b$ vertices such that $G \sim_{\mathcal{E},g,t,\mathcal{G}} H$. The corresponding transposition constant $\Delta_{\mathcal{E},g,t}(H,G)$ can be easily computed by summing up all the transposition constants given by each of the performed replacements. Since each of these replacements introduces a progressive representative, we have that $\Delta_{\mathcal{E},g,t}(H,G) \leq 0$. As we can assume that the total number of nodes in a nice tree decomposition of $G$ is $O(n)$~\cite[Lemma 13.1.2]{Klo94}, the overall running time of the algorithm is $O(n)$, where the constant hidden in the ``$O$'' notation depends indeed exclusively on $\mathcal{E},g,b,\mathcal{G}$, and $t$. \end{proof}

%\vspace{-.4cm}
%--------------------------------------------------------------
\subsection{Explicit protrusion replacer}
\label{sec:explicitprotrusionreplacer}

We are now ready to piece everything together and state our main technical result, which can be interpreted as a generic {\sl constructive} way of performing protrusion replacement with {\sl explicit} size bounds. For our algorithms to be fully constructive, we restrict $\mathcal{G}$ to be the class of graphs that exclude some fixed graph $H$ as a (topological) minor.

%\ig{we need to state our reduction rule clearly...}

\begin{theorem}\label{thm:protrusionReplacement}
Let $H$ be a fixed graph and let $\mathcal{G}$ be the class of graphs that exclude $H$ as a (topological) minor.
%whose membership is expressible in \MSOL.
Let $\Pi$ be a vertex-certifiable parameterized graph problem defined on $\mathcal{G}$, and suppose that we are given a $\Pi$-encoder $\mathcal{E}$, a function $g: \mathds{N} \to \mathds{N}$, and an integer $t \in \mathds{N}$ such that $\sim^*_{\mathcal{E},g,t,\mathcal{G}}$ is DP-friendly. Then, given an input graph $(G,k)$ and a $t$-protrusion $Y$ in $G$, we can compute in time $O(|Y|)$  an equivalent instance $((G - (Y-\partial(Y)))\oplus Y',k')$, where $k' \leq k$ and $Y'$ is a $t$-protrusion with  $|Y'| \leq b(\mathcal{E},g,t,\mathcal{G})$, where $b(\mathcal{E},g,t,\mathcal{G})$ is the function defined in Lemma~\ref{lem:finiteSize}.
\end{theorem}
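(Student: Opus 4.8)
The plan is to combine the three lemmas of Subsection~\ref{sec:equivalencerelations} (Lemma~\ref{lem:finite}, Lemma~\ref{lem:finiteSize} and Lemma~\ref{lem:compute}) with the characterization provided by Lemma~\ref{lem:characterize}. First I would verify that all the hypotheses of Lemma~\ref{lem:compute} are met: $\mathcal{E}$ is an encoder, $g$ is a function, $t$ is an integer, and $\sim^*_{\mathcal{E},g,t,\mathcal{G}}$ is DP-friendly by assumption. Since $\mathcal{G}$ is the class of graphs excluding $H$ as a (topological) minor, its membership is expressible in MSO logic, so Lemma~\ref{lem:finite} applies and the number of equivalence classes of $\sim^*_{\mathcal{E},g,t,\mathcal{G}}$ is at most $r(\mathcal{E},g,t,\mathcal{G})$; then Lemma~\ref{lem:finiteSize} gives that every equivalence class of $\sim_{\mathcal{E},g,t,\mathcal{G}}$ has a progressive representative of size at most $b(\mathcal{E},g,t,\mathcal{G})=2^{r(\mathcal{E},g,t,\mathcal{G})+1}\cdot t\geq t$. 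So the quantity $b:=b(\mathcal{E},g,t,\mathcal{G})$ is a valid bound to feed into Lemma~\ref{lem:compute}.

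Next I would apply Lemma~\ref{lem:compute} to the given $t$-protrusion $Y$ in $G$: in time $O(|Y|)$ it outputs a $t$-protrusion $Y'$ inside the same graph with $|Y'|\leq b(\mathcal{E},g,t,\mathcal{G})$, together with the transposition constant $\Delta_{\mathcal{E},g,t}(Y',Y)\leq 0$. Viewing $Y$ and $Y'$ as $t$-boundaried graphs in $\mathcal{F}_t$ with $\partial(Y)=\partial(Y')=\mathbf{bd}(Y)$, we have $Y\sim_{\mathcal{E},g,t,\mathcal{G}}Y'$. Since $\mathcal{E}$ is a $\Pi$-encoder and $\sim^*_{\mathcal{E},g,t,\mathcal{G}}$ is DP-friendly, Lemma~\ref{lem:characterize} gives $Y\equiv_{\Pi,t}Y'$ with $\Delta_{\Pi,t}(Y,Y')=\Delta_{\mathcal{E},g,t}(Y,Y')=-\Delta_{\mathcal{E},g,t}(Y',Y)\geq 0$. (Here one should double-check the sign convention: the transposition constant returned by Lemma~\ref{lem:compute} is $\Delta_{\mathcal{E},g,t}(Y',Y)\leq 0$, so the constant needed for the replacement of $Y$ by $Y'$, namely $\Delta_{\Pi,t}(Y,Y')$, satisfies $\Delta_{\Pi,t}(Y,Y')\leq 0$ — this is exactly what is needed for the parameter not to increase.)

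Finally I would translate this into the instance transformation. Write $G=(G-(Y-\partial(Y)))\oplus Y$, with $H_0:=G-(Y-\partial(Y))$ playing the role of the ``rest of the graph'' viewed as a $t$-boundaried graph on the boundary $\mathbf{bd}(Y)$; note that the edges with both endpoints in $\mathbf{bd}(Y)$ are kept, in accordance with the convention fixed after Definition~\ref{def:DPfriendly}. By Definition~\ref{def:canonicalEquivRelation} of $\equiv_{\Pi,t}$, since $Y\equiv_{\Pi,t}Y'$, for every integer $k$ we have $(Y\oplus H_0,k)\in\Pi$ if and only if $(Y'\oplus H_0,k+\Delta_{\Pi,t}(Y,Y'))\in\Pi$. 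Setting $k':=k+\Delta_{\Pi,t}(Y,Y')$, we get $k'\leq k$ because $\Delta_{\Pi,t}(Y,Y')\leq 0$, and the new instance is $((G-(Y-\partial(Y)))\oplus Y',k')$, which is equivalent to $(G,k)$ and has $|Y'|\leq b(\mathcal{E},g,t,\mathcal{G})$. The running time is $O(|Y|)$ by Lemma~\ref{lem:compute}, the hidden constant depending only on $\mathcal{E},g,b(\mathcal{E},g,t,\mathcal{G}),\mathcal{G}$, and $t$ — equivalently only on $\mathcal{E},g,t$, and $\mathcal{G}$ (hence on $H$).

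The main obstacle, or at least the only delicate point, is purely bookkeeping: making sure the sign conventions for the transposition constants are consistent across Definitions~\ref{def:canonicalEquivRelation}, \ref{def:equivalenceRelation}, and the statements of Lemmas~\ref{lem:characterize} and~\ref{lem:compute}, so that the inequality $k'\leq k$ comes out in the right direction, and making sure that the gluing $G=(G-(Y-\partial(Y)))\oplus Y$ genuinely recovers $G$ (in particular that no edges inside $\mathbf{bd}(Y)$ are lost or doubled). Everything substantive has already been done in the preceding lemmas, so the proof is essentially an assembly of those pieces.
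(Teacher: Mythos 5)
Your approach is exactly the paper's: chain Lemma~\ref{lem:finite} and Lemma~\ref{lem:finiteSize} to get the bound $b(\mathcal{E},g,t,\mathcal{G})$ on progressive representatives, feed it to Lemma~\ref{lem:compute} to produce $Y'$ and a transposition constant, and invoke Lemma~\ref{lem:characterize} to transfer from $\sim_{\mathcal{E},g,t,\mathcal{G}}$ to $\equiv_{\Pi,t}$ and update the parameter. That is the same assembly the paper performs.

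The one slip is precisely in the sign bookkeeping you flagged, and it is not a matter of doubt — your own text contradicts itself. You first correctly derive $\Delta_{\Pi,t}(Y,Y')=\Delta_{\mathcal{E},g,t}(Y,Y')=-\Delta_{\mathcal{E},g,t}(Y',Y)\geq 0$, and then in the parenthetical assert $\Delta_{\Pi,t}(Y,Y')\leq 0$ and set $k':=k+\Delta_{\Pi,t}(Y,Y')$; those two statements cannot both hold (unless the constant is zero). The clean way, and what the paper does, is to use directly the quantity that Lemma~\ref{lem:compute} hands you, namely $\Delta_{\mathcal{E},g,t}(Y',Y)\leq 0$: by Lemma~\ref{lem:characterize}, $\Delta_{\Pi,t}(Y',Y)=\Delta_{\mathcal{E},g,t}(Y',Y)\leq 0$, and one sets $k':=k+\Delta_{\Pi,t}(Y',Y)\leq k$. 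With that correction (replace $\Delta_{\Pi,t}(Y,Y')$ by $\Delta_{\Pi,t}(Y',Y)$ in the last step, and delete the waffling parenthetical) your proof coincides with the paper's.
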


\begin{proof}
By Lemma~\ref{lem:finite}, the number of equivalence classes of the equivalence relation $\sim_{\mathcal{E},g,t,\mathcal{G}}$ is finite, and by Lemma~\ref{lem:finiteSize} the size of a smallest progressive representative of any equivalence class of $\sim_{\mathcal{E},g,t,\mathcal{G}}$ is at most $b(\mathcal{E},g,t,\mathcal{G})$. Therefore, we can apply Lemma~\ref{lem:compute} and deduce that, in time $O(|Y|)$, we can find a $t$-protrusion $Y'$ of size at most $b(\mathcal{E},g,t,\mathcal{G})$ such that $Y \sim_{\mathcal{E},g,t,\mathcal{G}} Y'$, and the corresponding transposition constant $\Delta_{\mathcal{E},g,t}(Y',Y)$ with $\Delta_{\mathcal{E},g,t}(Y',Y) \leq 0$. Since $\mathcal{E}$ is a  $\Pi$-encoder and $\sim^*_{\mathcal{E},g,t,\mathcal{G}}$ is DP-friendly, it follows from Lemma~\ref{lem:characterize} that $Y \equiv_{\Pi,t} Y'$ and that $\Delta_{\Pi,t}(Y',Y) = \Delta_{\mathcal{E},g,t}(Y',Y) \leq 0$. Therefore, if we set $k' := k +\Delta_{\Pi,t}(Y',Y)$, it follows that $(G,k)$ and $((G - (Y-\partial(Y)))\oplus Y',k')$ are indeed equivalent instances of $\Pi$ with $k' \leq k$ and   $|Y'| \leq b(\mathcal{E},g,t,\mathcal{G})$.
\end{proof}

%\subsection{Consequences of Theorem~\ref{thm:protrusionReplacement}}
%\label{sec:consequences}

The general recipe to use our framework on a parameterized problem $\Pi$ defined on a class of graphs $\mc{G}$ is as follows: one has just to define the tables to solve $\Pi$ via dynamic programming on graphs of bounded treewidth  (that is, the encoder $\mathcal{E}$ and the function $g$), check that $\mathcal{E}$ is a $\Pi$-encoder and that $\sim^*_{\mathcal{E},g,t,\mathcal{G}}$ is DP-friendly, and then Theorem~\ref{thm:protrusionReplacement} provides a linear-time algorithm that replaces large protrusions with graphs whose size is bounded by an explicit constant, and that updates the parameter of $\Pi$ accordingly. This protrusion replacer can then be used, for instance, whenever one is able to find a linear protrusion decomposition of the input graphs of $\Pi$ on some sparse graph class $\mc{G}$. In particular, Theorem~\ref{thm:protrusionReplacement} yields the following corollary.

\begin{corollary}\label{corollary:withProtrusionDecomposition}
Let $H$ be a fixed graph, and let $\mathcal{G}$ be the class of graphs that exclude $H$ as a (topological) minor. Let $\Pi$ be a vertex-certifiable parameterized graph problem on $\mathcal{G}$, and suppose that we are given a $\Pi$-encoder $\mathcal{E}$, a function $g: \mathds{N} \to \mathds{N}$, and an integer $t \in \mathds{N}$ such that $\sim^*_{\mathcal{E},g,t,\mathcal{G}}$ is DP-friendly. Then, given an instance $(G,k)$ of $\Pi$ together with an $(\alpha \cdot k, t)$-protrusion decomposition of $G$, we can construct a linear kernel for $\Pi$ of size at most $(1+b(\mathcal{E},g,t,\mathcal{G}))\cdot \alpha \cdot k$, where $b(\mathcal{E},g,t,\mathcal{G})$ is the function defined in Lemma~\ref{lem:finiteSize}.
\end{corollary}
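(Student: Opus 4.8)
The plan is to shrink each of the $\ell$ protrusions supplied by the decomposition using the explicit replacer of Theorem~\ref{thm:protrusionReplacement}, and then to bound the size of the resulting graph by a counting argument. Write the given $(\alpha k,t)$-protrusion decomposition as ${\cal P}=Y_{0}\uplus Y_{1}\uplus\cdots\uplus Y_{\ell}$, so that $\max\{\ell,|Y_{0}|\}\le\alpha k$, and for $i\in\{1,\ldots,\ell\}$ put $W_{i}:=Y_{i}\cup N_{Y_0}(Y_i)$; by condition~(iii) in the definition of a protrusion decomposition each $W_{i}$ is a $t$-protrusion of $G$, and one checks that $\mathbf{bd}(W_{i})\subseteq Y_{0}$. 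For each $i$ I would apply Theorem~\ref{thm:protrusionReplacement} to $(G,k)$ with the $t$-protrusion $Y=W_{i}$; this produces, in time $O(|W_{i}|)$, a $t$-protrusion $W_{i}'$ with $|W_{i}'|\le b(\mathcal{E},g,t,\mathcal{G})$ and a transposition constant $\Delta_{\Pi,t}(W_{i}',W_{i})\le 0$ such that replacing $W_{i}\setminus\partial(W_{i})$ by $W_{i}'$ turns $(G,k)$ into an equivalent instance with parameter $k+\Delta_{\Pi,t}(W_{i}',W_{i})$; moreover $W_{i}\sim_{\mathcal{G},t}W_{i}'$, since the relation $\sim_{\mathcal{E},g,t,\mathcal{G}}$ produced by the replacer refines $\sim_{\mathcal{G},t}$.

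The next step is to perform all $\ell$ replacements together. The crucial observation is that the modified regions are pairwise disjoint: for $i\neq j$, any vertex of $W_{i}\cap W_{j}$ lies in $N_{Y_0}(Y_i)\cap N_{Y_0}(Y_j)$, hence has a neighbour in $Y_{j}$, which lies outside $W_{i}$, so $W_{i}\cap W_{j}\subseteq\partial(W_{i})$ and therefore $W_{i}\setminus\partial(W_{i})$ is disjoint from $W_{j}$. Since the relations $W_{i}\equiv_{\Pi,t}W_{i}'$ (Lemma~\ref{lem:characterize}) and $W_{i}\sim_{\mathcal{G},t}W_{i}'$ are intrinsic---they depend only on the pair $W_{i},W_{i}'$ and not on the surrounding graph---and since $\equiv_{\Pi,t}$ behaves well under the gluing operation $\oplus$, replacing all the $W_{i}\setminus\partial(W_{i})$ by the respective $W_{i}'$ yields a single instance $(G',k')$ with $G'\in\mathcal{G}$, equivalent to $(G,k)$ for $\Pi$, and with $k'$ obtained from $k$ by adding the individual (non-positive) transposition constants, so $k'\le k$.

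It then remains to read off the bounds. The only vertices of $G$ that lie in no set $W_{i}\setminus\partial(W_{i})$ belong to $Y_{0}$, so the vertex set of $G'$ is contained in $Y_{0}\cup\bigcup_{i=1}^{\ell}W_{i}'$, whence
\[
|V(G')|\ \le\ |Y_{0}|+\sum_{i=1}^{\ell}|W_{i}'|\ \le\ \alpha k+\ell\cdot b(\mathcal{E},g,t,\mathcal{G})\ \le\ \bigl(1+b(\mathcal{E},g,t,\mathcal{G})\bigr)\cdot\alpha k ,
\]
using $|Y_{0}|,\ell\le\alpha k$; combined with $k'\le k\le\bigl(1+b(\mathcal{E},g,t,\mathcal{G})\bigr)\cdot\alpha k$ this shows that $(G',k')$ is a kernel of the claimed size. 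For the running time, the $Y_{i}$ are pairwise disjoint and disjoint from $Y_{0}$, so $\sum_{i}|W_{i}|\le|V(G)|+\ell\,|Y_{0}|$; with the $O(|W_{i}|)$ cost of each call to Theorem~\ref{thm:protrusionReplacement} this keeps the whole procedure polynomial (in fact essentially linear) in $|G|+k$.

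I expect the delicate step to be the second paragraph: making rigorous that the $\ell$ replacements may be applied simultaneously, that is, that the modified interiors are pairwise disjoint and that after substituting several progressive representatives the resulting graph still lies in $\mathcal{G}$, is still equivalent to $(G,k)$ for $\Pi$, and has parameter $k$ plus the sum of the individual transposition constants. (A naive sequential re-application of Theorem~\ref{thm:protrusionReplacement} is not immediate, since substituting one progressive representative may create boundary edges that push the treewidth of a neighbouring protrusion above $t-1$; arguing through the intrinsic relations $\equiv_{\Pi,t}$ and $\sim_{\mathcal{G},t}$ sidesteps this.) Everything else---the size count and the running-time bound---is routine.
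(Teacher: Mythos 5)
Your proof is correct and follows the same route as the paper's: apply the protrusion replacer of Theorem~\ref{thm:protrusionReplacement} to each $W_i=Y_i\cup N_{Y_0}(Y_i)$, sum the bounds, and update the parameter. The paper's own proof is a two-line version of exactly this, with the same size count $|Y_0|+\ell\cdot b(\mathcal{E},g,t,\mathcal{G})\le(1+b(\mathcal{E},g,t,\mathcal{G}))\alpha k$.

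One remark on the ``delicate step'' you flag. Your worry that a sequential application might fail because a replacement could raise the treewidth of a neighbouring protrusion is in fact unfounded, and the more elaborate ``do all replacements at once via intrinsic relations'' detour is not needed. Replacing $W_i$ with $W_i'$ only touches vertices in $W_i\setminus\partial(W_i)=Y_i$, keeps all boundary edges of $W_i$ (this is stipulated just after Definition~\ref{def:DPfriendly}), and $N(Y_i)\subseteq Y_0$, so for $j\ne i$ both the induced subgraph on $W_j$ and $\mathbf{bd}(W_j)$ are literally unchanged in the intermediate graph: $W_j$ remains a $t$-protrusion at every stage. Hence the naive sequential use of Theorem~\ref{thm:protrusionReplacement}, as the paper does, is already sound; your disjointness observation $W_i\cap W_j\subseteq\partial(W_i)$ is the right kernel of that verification. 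Everything else in your write-up (equivalence, $k'\le k$, membership in $\mathcal{G}$, size and time bounds) matches the paper.
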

\begin{proof} For $1 \leq i \leq \ell$, we apply the polynomial-time algorithm given by Theorem~\ref{thm:protrusionReplacement} to replace each $t$-protrusion $Y_i$ with a graph $Y_i'$ of size at most $b(\mathcal{E},g,t,\mathcal{G})$, and to update the parameter accordingly. In this way we obtain an equivalent instance $(G',k')$ such that $G' \in \mc{G}$, $k' \leq k$, and $|V(G')| \leq |Y_0| + \ell \cdot b(\mathcal{E},g,t,\mathcal{G}) \leq (1+b(\mathcal{E},g,t,\mathcal{G}))\alpha \cdot k$ . %That is, we have obtained a kernel of the desired size.
\end{proof}

Notice that once we fix the problem $\Pi$ and the class of graphs $\mathcal{G}$ where Corollary~\ref{corollary:withProtrusionDecomposition}
is applied, a kernel of size $c\cdot  k$ can be derived with a concrete upper bound for the value of $c$. Notice that such a bound depends on the problem $\Pi$ and the excluded (topological) minor $H$. In general, the bound
can be quite big as it depends on
the bound of Lemma~\ref{lem:finiteSize}, and this, in turn, depends on the bound of Lemma~\ref{lem:finite}.
However, as we see in the next section, more moderate estimations can be extracted for particular families of parameterized problems.

Before demonstrating the applicability of our framework by providing linear kernels for several families of problems on graphs excluding a fixed graph as a (topological) minor, we need another ingredient. Namely, the following result will be fundamental in order to find linear protrusion decompositions when a treewidth-modulator $X$ of the input graph $G$ is given, with $|X|=O(k)$. It is a consequence of~\cite[Lemma~3, Proposition~1, and Theorem~1]{KLP+12} and, loosely speaking, the algorithm consists in marking the bags of a tree decomposition of $G-X$ according to the number of neighbors in the set $X$. When the graph $G$ is restricted to exclude a fixed graph $H$ as a topological minor, it can be proved that the obtained protrusion decomposition is linear. All the details can be found in the full version of~\cite{KLP+12}.

\begin{theorem}[Kim \emph{et al}.~\cite{KLP+12}]\label{thm:protDec}
Let $c,t$ be two positive integers, let $H$ be an $h$-vertex graph, let $G$ be an $n$-vertex $H$-topological-minor-free graph, and let $k$ be a positive integer (typically corresponding to the parameter of a parameterized problem). If we are given a set $X \subseteq V(G)$ with $|X| \leq c \cdot k$ such that $\tw(G-X) \leq t$, then we can compute in time $O(n)$ an $((\alpha_{H} \cdot t \cdot c)\cdot k, 2t + h)$-protrusion decomposition of $G$, where $\alpha_{H}$ is a constant depending only on $H$, which is upper-bounded by $40 h^2 2 ^{5 h \log h}$. %\ig{update this bound?}
\end{theorem}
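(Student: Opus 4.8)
The plan is to reduce the statement to the combination of results from~\cite{KLP+12} that is explicitly cited, namely \cite[Lemma~3, Proposition~1, and Theorem~1]{KLP+12}, and to track the constants through that reduction. First I would recall the high-level structure of the algorithm: given the treewidth-modulator $X$ with $|X|\le c\cdot k$, compute in linear time (via Bodlaender's algorithm~\cite{Bod96}) a tree decomposition $(T,\mathcal{X})$ of $G-X$ of width at most $t$, and make it nice. The key combinatorial step, which is \cite[Lemma~3]{KLP+12}, is to \emph{mark} a carefully chosen set of $O(|X|)$ bags of $T$: roughly, for each vertex $x\in X$ we mark the bags that are ``branching points'' with respect to the bags intersecting $N(x)\setminus X$, so that after marking, the forest obtained from $T$ by removing the marked bags decomposes into $O(|X|)$ subtrees, each of whose union of bags has neighbourhood into $X$ of bounded size.

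Next I would argue that this marking yields the desired protrusion decomposition $Y_0\uplus Y_1\uplus\cdots\uplus Y_\ell$: set $Y_0 := X\cup(\bigcup_{B\text{ marked}}B)$, and let the $Y_i$ be the vertex sets of $G-X$ associated to the connected pieces of $T$ minus the marked bags. Condition~(i) of a protrusion decomposition (that $N(Y_i)\subseteq Y_0$) holds because any neighbour of $Y_i$ outside $Y_i$ is either in $X\subseteq Y_0$ or lies in a bag adjacent (in $T$) to the piece, hence in a marked bag $\subseteq Y_0$. Condition~(iii) ($Y_i\cup N_{Y_0}(Y_i)$ is a $(2t+h)$-protrusion) holds because $Y_i$ together with its (at most two) adjacent marked bags induces a subgraph of treewidth at most $2t+1$, and its boundary has at most $2(t+1)+(\text{bounded number of }X\text{-neighbours})$ vertices — here one uses that the marking scheme guarantees each piece sees only $O(1)$ vertices of $X$, and that at most $h$ of these can be ``relevant'' once we are $H$-topological-minor-free (this is where \cite[Proposition~1 and Theorem~1]{KLP+12} enter, bounding the number of vertices of $X$ a bag can usefully be adjacent to in an $H$-topological-minor-free graph, via the fact that a vertex with many neighbours in a bag of bounded size forces a large topological-minor model). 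Condition~(ii), the bound $\max\{\ell,|Y_0|\}\le \alpha_H\cdot t\cdot c\cdot k$, follows by counting: $\ell=O(|X|)=O(ck)$ and $|Y_0|\le |X|+(\text{number of marked bags})\cdot(t+1)=O(t\cdot c\cdot k)$, and the constant $\alpha_H$ bundling all the hidden $H$-dependent factors is exactly the quantity estimated as $40h^2 2^{5h\log h}$ in~\cite{KLP+12}.

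The main obstacle, and the only genuinely nontrivial point, is the bound on how many vertices of $X$ a single protrusion can be adjacent to — that is, showing the boundary size is $2t+h$ rather than something depending on $|X|$. This is precisely the content of the $H$-topological-minor-free hypothesis: if a small subgraph (a bag of size $\le t+1$) had too many distinct vertices of $X$ in its neighbourhood, one could route disjoint paths through $G-X$ to realize $H$ as a topological minor, contradicting the assumption. I would not reprove this here but cite \cite[Lemma~3, Proposition~1, and Theorem~1]{KLP+12} and the full version for the details, since the statement of the present theorem explicitly attributes it to them; the role of the proof in this paper is only to record the explicit protrusion-decomposition parameters $(\alpha_H\cdot t\cdot c)\cdot k$ and $2t+h$, together with the explicit estimate $\alpha_H\le 40h^2 2^{5h\log h}$, in the form needed for Corollary~\ref{corollary:withProtrusionDecomposition} and the later applications. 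The linear running time is immediate: Bodlaender's algorithm is linear, the marking procedure scans $T$ a constant number of times, and constructing the $Y_i$ from the pieces is linear in $|V(T)|+|V(G)|=O(n)$.
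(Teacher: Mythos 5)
Theorem~\ref{thm:protDec} is an imported result: the paper does not prove it, but states it with a one-sentence description of the marking algorithm and defers entirely to \cite[Lemma~3, Proposition~1, and Theorem~1]{KLP+12} and the full version of that paper for details. Your proposal takes essentially the same approach — it sketches the bag-marking scheme, correctly identifies the $H$-topological-minor-free hypothesis as the crux controlling boundary size, and then, like the paper, cites~\cite{KLP+12} rather than reproving; this matches the paper's treatment.
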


As mentioned in Subsection~\ref{sec:equivalencerelations}, if $\mc{G}$ is a graph class whose membership  can be expressed in MSO logic, then $\sim_{\mc{G},t}$ has a finite number of equivalence classes, namely $r_{\mc{G},t}$. In our applications, we will be only concerned with families of graphs $\mc{G}$ that exclude some fixed $h$-vertex graph $H$ as a (topological) minor. In this case, using standard dynamic programming techniques, it can be shown that $r_{\mc{G},t} \leq 2^{t \log t} \cdot h^t \cdot 2^{h^2}$. The details can be found in the encoder described in Subsection~\ref{sec:Description-encoder-PlanarFDeletion} for the \textsc{$\mc{F}$-Deletion} problem.

%As this issue is not the main topic of this paper, the details can be found in Appendix~\ref{ap:encoderFdeletion}, and are strongly based on the techniques introduced by Adler \emph{et al}.~\cite{ADF+11}. The basic idea is to use objects called \emph{rooted packings}, which are essentially a version of the \emph{folio} (defined by Robertson and Seymour in~\cite{RS95}) adapted to dynamic programming. 

%-------------------------------------------------------------------------------------------------------------------------------
\section{An explicit linear kernel for \textsc{$r$-Dominating Set}}
\label{sec:rDomSet}
% !TEX root = KernelsViaDP_arXiv3.tex
Let $r \geq 1$ be a fixed integer. We define the \textsc{$r$-Dominating Set} problem as follows.

\vspace{.4cm}
\begin{boxedminipage}{.95\textwidth}
\textsc{$r$-Dominating Set}\vspace{.1cm}

\begin{tabular}{ r l }
\textbf{~~~~Instance:} & A graph $G=(V,E)$  and a non-negative integer $k$. \\
\textbf{Parameter:} & The integer $k$.\\
\textbf{Question:} & Does $G$ have a set $S \subseteq V$ with $|S| \leq k$ and such that every vertex \\ &  in $V \setminus S$ is within distance at most $r$ from some vertex in $S$?\\
\end{tabular}
\end{boxedminipage}\vspace{.4cm}

%\paragraph{\emph{\textbf{Description of the encoder}}.}

\noindent For $r=1$, the \textsc{$r$-Dominating Set} problem corresponds to \textsc{Dominating Set}. Our encoder for \textsc{$r$-Dominating Set} is strongly inspired by the work of Demaine \emph{et al}.~\cite{DFHT05}, and it generalizes to one given for \textsc{Dominating Set} in the running example of Section~\ref{sec:framework}. The encoder for \textsc{$r$-Dominating Set}, which we call $\Eds =(\Cds,\Lcds )$, is described in Subsection~\ref{sec:Description-encoder-rDomSet}, and we show how to construct the linear kernel in Subsection~\ref{sec:Construction-kernel-rDomSet}.

%It can be found in Appendix~\ref{ap:DescriptionencoderrDomSet}, and we call it $\Eds =(\Cds,\Lcds )$.

%We provide here an encoder $\Eds =(\Cds,\Lcds )$ which characterizes \rdom and which is $g$-confined for $g(t)=t$. %where $\mathbf{id}$ is the identity function.
%\paragraph{Construction of the encoder.}

\subsection{Description of the encoder}
\label{sec:Description-encoder-rDomSet}

Let $G$ be a boundaried graph with boundary $\partial(G)$ and let $I=\Lambda(G)$. The function \Cds maps $I$ to a set $\Cds(I)$ of \Cds-encodings. Each $R\in \Cds(I)$ maps $I$ to an $|I|$-tuple in $\{0,\uparrow 1, \downarrow 1, \dots, \uparrow r, \downarrow r \}^{|I|}$, and thus the coordinates of the tuple are in  one-to-one correspondence with the vertices of $\partial(G)$.  For a vertex $v\in\partial(G)$ we denote by $R(v)$ its coordinate in the $|I|$-tuple. For a subset $S$ of vertices of $G$, we say that $(G,S,R)$ belongs to the language $\Lcds$ (or that $S$ is a \emph{partial $r$-dominating set satisfying $R$}) if :

\begin{itemize}
\item[$\bullet$] for every vertex $v\in V(G)\setminus \partial(G)$, either $d_G(v,S)\leqslant r$ or there exists $ w \in \partial(G)$ such that $R(w) = \uparrow j$ and $d_G(v,w)+j\leqslant r$; and
%\item and for every vertex $v\in \partial(G)$:  $R(v)=0$ implies that $v\in S$;  if $R(v)=\downarrow i$, then either $\exists w\in S$ such that $d(v,w)\leqslant i$ or there exists $w\in partial(G)$ such that $R(w) = \uparrow j$ and $d(v,w)= j$ with $i+j \leqslant r$.
\item[$\bullet$]  for every vertex $v\in \partial(G)$: $R(v)=0$ implies that $v\in S$, and  if $R(v)=\downarrow i$ for $1 \leq i \leq r$, then  there exists either $w\in S$ such that $d_G(v,w)\leqslant i$ or  $w\in \partial(G)$ such that $R(w) = \uparrow j$ and $d_G(v,w)+j\leqslant i$.%; if $R(v)=\uparrow i$, then
\end{itemize}

%Observe that there is no constraint on vertices $v\in\partial(G)$ such that $R(v)=\uparrow i$ (for $1\leqslant i\leqslant r$).
Observe that if $S$ is a partial $r$-dominating set satisfying $R$, then $S\cap\partial(G)$ contains the set of vertices $\{v\in\partial(G)\mid R(v)=0\}$, but it may also contain other vertices of $\partial(G)$. As the optimization version of \rdom is a minimization problem, by Equation~(\ref{eq:fEmin}) the function $f_G^{\Cds}(R)$ associates with a \Cds-encoding $R$ the minimum size of a partial $r$-dominating set $S$ satisfying $R$. By definition of $\Eds$, it is clear that %\vspace{-.25cm}
\begin{equation}\label{eq:sizeEncoderrDomSet}
s_{\Eds }(t) \ \leq \ (2r+1)^t.
\end{equation}%\vspace{-1.4cm}

%\ig{We do not need Lemma~\ref{lem:rDomCharacterizes} anymore! Instead, we just need to prove that $\Eds =(\Cds,\Lcds )$ is a $r\textsc{DS}$-encoder}

\begin{lemma}\label{lem:rDomCharacterizes}
The encoder $\Eds$ is a $r\textsc{DS}$-encoder. Furthermore, if $\mc{G}$ is an arbitrary class of graphs and $g(t)=t$, then the equivalence relation $\sim^*_{\Eds,g,t,\mc{G}}$ is DP-friendly.
\end{lemma}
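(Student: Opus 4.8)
The plan is to verify the two claims of Lemma~\ref{lem:rDomCharacterizes} separately. For the first claim, that $\Eds$ is a $r\textsc{DS}$-encoder, I would check Definition~\ref{def:Piencoder}(i) directly: when $I=\emptyset$ there is a single encoding $R_\emptyset$ (the empty tuple), and for a $0$-boundaried graph $G$ the membership $(G,S,R_\emptyset)\in\Lcds$ reduces, by the first bullet in the description of the encoder (the second bullet being vacuous since $\partial(G)=\emptyset$), exactly to the condition that every vertex of $G$ is within distance $r$ of $S$, i.e. that $S$ is an $r$-dominating set of $G$. Hence $\{(G,S)\mid (G,S,R_\emptyset)\in\Lcds\}$ is a certifying language for \rdom, which is what is required.

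For the second and main claim, I would prove that $\sim^*_{\Eds,g,t,\mc{G}}$ is DP-friendly in the sense of Definition~\ref{def:DPfriendly}, with $g(t)=t$. Fix $G\in\mc{B}_t$ with $\partial(G)=A$, a separator $B$ with $|B|\le t$, a union $G_B$ of connected components of $G-B$ with $A\cap V(G_B)\subseteq B$, and a replacement $G_B'$ with $G_B\sim^*_{\Eds,g,t,\mc{G}} G_B'$; let $G'$ be the resulting graph. The heart of the argument is a \emph{decomposition lemma} for partial $r$-dominating sets across the separator $B$: given any $\Cds$-encoding $R_A\in\Cds(\Lambda(A))$, a partial $r$-dominating set $S$ of $G$ satisfying $R_A$ decomposes as $S=S_{\text{out}}\uplus S_{\text{in}}$ with $S_{\text{in}}\subseteq V(G_B)\setminus B$ and $S_{\text{out}}\subseteq V(G)\setminus(V(G_B)\setminus B)$, and there is a ``profile'' encoding $R_B\in\Cds(\Lambda(B))$ on the separator such that (a) $S_{\text{in}}$ is a partial $r$-dominating set of $G_B$ satisfying $R_B$, and (b) $S_{\text{out}}$ together with the information in $R_B$ witnesses that the rest of $G$ is correctly $r$-dominated and that $R_A$ is satisfied. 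The symbols $0$, $\downarrow i$, $\uparrow j$ are exactly designed to carry this profile information across $B$: $\downarrow i$ records that a boundary vertex is already $r$-dominated using a vertex at distance $\le i$ on the $G_B$ side, and $\uparrow j$ records that a boundary vertex may help $r$-dominate the $G_B$ side from a solution vertex at distance $j$ on the outside; the distance bookkeeping ($d+j\le r$, $d+j\le i$) is precisely additive along shortest paths through $B$. Once this decomposition is established, both conditions (i) and (ii) of Definition~\ref{def:DPfriendly} follow: for a fixed outer part, the optimal inner contribution for a given profile $R_B$ is exactly $f_{G_B}^{\Eds,g}(R_B)$ (this is where $g$-confinement of $\Eds$, verified via $g(t)=t$ analogously to the running example for \textsc{Dominating Set}, guarantees that truncating to $g(t)$ does not discard any profile that could be part of an optimal global solution, since the relevant values differ by at most $|B|\le t$), so replacing $G_B$ by $G_B'$ changes $f^{\Eds}_{G\oplus H}(R_A)$ by exactly the additive constant $\Delta_{\Eds,g,t}(G_B,G_B')$ uniformly over all $R_A$ and all $H\in\mc{B}_t$; moreover $G\sim_{\mc{G},t}G'$ follows from $G_B\sim_{\mc{G},t}G_B'$ together with the fact that (topological-)minor containment, like any MSO-expressible property, respects gluing. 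This gives $G\sim^*_{\Eds,g,t,\mc{G}}G'$ and $\Delta_{\Eds,g,t}(G,G')=\Delta_{\Eds,g,t}(G_B,G_B')$, as required.

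The main obstacle I anticipate is the careful verification of the decomposition lemma, in particular checking that the profile alphabet $\{0,\uparrow 1,\downarrow 1,\ldots,\uparrow r,\downarrow r\}$ is rich \emph{enough} to make the gluing work in both directions: one must argue that every shortest path realizing an $r$-domination relation either stays entirely inside $G_B$, stays entirely outside, or crosses $B$ exactly once in a way captured by some $\uparrow j$/$\downarrow i$ label (using that $B$ is a separator), and conversely that any consistent pair of inner/outer partial solutions with matching profile on $B$ glues to a genuine partial $r$-dominating set of $G$ — the subtlety being potential ``double counting'' or circular justification, e.g. a boundary vertex claiming to be dominated via an $\uparrow$ label that is itself only justified through $G_B$. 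Handling $\uparrow$ labels on boundary vertices that are themselves not solution vertices, and ensuring the distance estimates compose correctly without creating spurious cyclic dependencies, is the delicate point; I would resolve it by treating the $\uparrow j$ annotation as a \emph{promise supplied from outside} that is fulfilled independently by $S_{\text{out}}$, so that the inner side may use it as a black box, exactly mirroring how Demaine \emph{et al.}~\cite{DFHT05} handle distance-$r$ domination tables in tree decompositions.
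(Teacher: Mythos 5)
Your high-level architecture matches the paper's: check Definition~\ref{def:Piencoder}(i) for the first claim, verify $g$-confinement with $g(t)=t$, use Fact~\ref{fact:DPfriendly} to reduce to $\sim^*_{\Eds,g,t}$, take an optimal partial $r$-dominating set $S$ for $R_A$ in $G$, project it to a profile $R_B$ on $B$, replace the inner part with an optimal solution for $R_B$ in $G_B'$, and re-verify. The decompositional conventions (the paper puts $B\cap S$ on the inner side, you put it on the outer side) are interchangeable.

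The genuine gap is in how you propose to re-verify that $S'=D'\cup D_H$ is a valid partial $r$-dominating set of $G'$ for $R_A$. You assert that ``every shortest path realizing an $r$-domination relation either stays entirely inside $G_B$, stays entirely outside, or crosses $B$ exactly once,'' and accordingly propose to treat the $\uparrow j$ labels as promises that are ``fulfilled independently by $S_{\text{out}}$.'' Neither of these holds. A witness path from an uncovered vertex to a dominator can cross $B$ arbitrarily often, and, correspondingly, the justification for an $\uparrow i$ label at a boundary vertex $w$ is in general \emph{not} self-contained in $H$: the shortest route from $w$ to a dominator may itself dive back into $G_B$, whose label at the re-entry point is $\downarrow$, which in turn may point back out, and so on. This alternation is precisely what the paper's proof handles with an explicit iterative construction: it builds a chain $v_0,v_1,\ldots$ of boundary vertices, following a $G_B'$-segment at a $\downarrow$ label (using that $D'$ satisfies $R_B$) and an $H$-segment at an $\uparrow$ label (using that $S$ satisfies $R_A$), maintaining the invariant $l_j+i_j\leq r$ (resp. $\leq r-i$), and terminates because the budget $r-(l_j+i_j)$ strictly decreases at each step. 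Without this iteration and its termination argument you cannot conclude that $S'$ satisfies $R_A$ in $G'$, so the ``black-box promise'' shortcut does not close the proof. To repair it, replace the ``crosses $B$ at most once'' claim with the alternating chain argument, and add the strict-decrease termination invariant; the rest of your outline then carries through.
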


Before providing the proof of Lemma~\ref{lem:rDomCharacterizes}, we will first state a general  fact, which will be useful in order to prove that an encoder is DP-friendly.

\begin{fact}\label{fact:DPfriendly} To verify that an equivalence relation $\sim^*_{\mathcal{E},g,t,\mc{G}}$ satisfies Definition~\ref{def:DPfriendly}, property $(i)$ can be replaced with $G \sim^*_{\mathcal{E},g,t} G'$. That is, if  $G \sim_{\mathcal{E},g,t} G'$, then $G \sim_{\mathcal{E},g,t,\mc{G}} G'$ as well.
%Let $\sim_{\mathcal{E},g,t,\mathcal{G}}$ be a DP-friendly equivalence relation and let $G,G'$ be as stated in %Definition~\ref{def:DPfriendly}. If$G \sim_{\mathcal{E},g,t,\mathcal{G}} G'$.
\end{fact}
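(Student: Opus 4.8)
The statement to prove is Fact~\ref{fact:DPfriendly}: that in verifying DP-friendliness of $\sim^*_{\mathcal{E},g,t,\mc{G}}$, condition $(i)$ of Definition~\ref{def:DPfriendly} (namely $G \sim^*_{\mathcal{E},g,t,\mc{G}} G'$) may be weakened to $G \sim^*_{\mathcal{E},g,t} G'$. Recalling Definition~\ref{def:equivalenceRelation2}, we have $G \sim^*_{\mathcal{E},g,t,\mc{G}} G'$ if and only if both $G \sim^*_{\mathcal{E},g,t} G'$ and $G \sim_{\mc{G},t} G'$. Hence the content of the fact is really: in the setup of Definition~\ref{def:DPfriendly}, whenever $G_B \sim^*_{\mathcal{E},g,t,\mc{G}} G_B'$ and $G'$ is obtained from $G$ by replacing $G_B$ with $G_B'$, the relation $G \sim_{\mc{G},t} G'$ holds automatically, so only the ``encoder part'' $G \sim^*_{\mathcal{E},g,t} G'$ needs to be checked by hand. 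So the plan is to isolate this claim and prove it directly.

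\textbf{Key steps.} First I would unpack the hypothesis: since $G_B \sim^*_{\mathcal{E},g,t,\mc{G}} G_B'$, by Definition~\ref{def:equivalenceRelation2}(ii) we get $G_B \sim_{\mc{G},t} G_B'$, i.e.\ for every $t$-boundaried graph $K \in \mathcal{B}_t$, $G_B \oplus K \in \mc{G} \iff G_B' \oplus K \in \mc{G}$. Next, I would observe that $G$ (viewed as a $t$-boundaried graph with boundary $B$, forgetting the outer boundary $A$ for a moment) decomposes as $G = G_B \oplus G^-$, where $G^- := G - (V(G_B) - B)$ is the ``rest'' of $G$, carrying boundary $B$ (here the condition $A \cap V(G_B) \subseteq B$ from Definition~\ref{def:DPfriendly} ensures the gluing along $B$ is well defined and that no edges incident to $A \setminus B$ are disturbed), and correspondingly $G' = G_B' \oplus G^-$. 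The crucial step is then: to test $G \sim_{\mc{G},t} G'$ we must show that for every $H \in \mathcal{B}_t$ (with boundary $A$), $G \oplus H \in \mc{G} \iff G' \oplus H \in \mc{G}$. Using associativity of the gluing operation $\oplus$ on boundaried graphs, $G \oplus H = (G_B \oplus G^-) \oplus H = G_B \oplus (G^- \oplus H)$ and likewise $G' \oplus H = G_B' \oplus (G^- \oplus H)$; setting $K := G^- \oplus H \in \mathcal{B}_t$ (this is a $t$-boundaried graph with boundary $B$), the equivalence $G_B \sim_{\mc{G},t} G_B'$ gives exactly $G_B \oplus K \in \mc{G} \iff G_B' \oplus K \in \mc{G}$, i.e.\ $G \oplus H \in \mc{G} \iff G' \oplus H \in \mc{G}$. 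Since $H$ was arbitrary, $G \sim_{\mc{G},t} G'$. Combining this with the assumed $G \sim^*_{\mathcal{E},g,t} G'$ yields $G \sim^*_{\mathcal{E},g,t,\mc{G}} G'$, which is condition $(i)$ of Definition~\ref{def:DPfriendly}; condition $(ii)$ is untouched. This establishes the fact.

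\textbf{Main obstacle.} The delicate point is purely bookkeeping about the gluing operation: one must be careful that the two boundaries involved in Definition~\ref{def:DPfriendly} — the ``inner'' separator $B$ along which $G_B$ is glued, and the ``outer'' boundary $A$ along which $H$ will later be glued — do not interfere, and that the labels on $B$ used to glue $G_B$ into $G$ coincide with those used to glue $G_B'$ into $G'$. This is precisely where the hypothesis $A \cap V(G_B) \subseteq B$ and the convention recorded in the paragraph after Definition~\ref{def:DPfriendly} (that replacing $G_B$ by $G_B'$ does not delete edges with both endpoints in $\partial(G_B)$, so $G' = (G - (V(G_B) - \partial(G_B))) \oplus G_B'$) are needed, and one should invoke the associativity of $\oplus$ — which is standard and can be attributed to~\cite{BFL+09} — to rearrange the triple gluing. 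Once the associativity identity $G_B \oplus (G^- \oplus H) = (G_B \oplus G^-) \oplus H$ is granted, the argument is a one-line reduction to the definition of $\sim_{\mc{G},t}$, so there is no genuine difficulty beyond making these identifications precise.
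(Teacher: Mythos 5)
Your proof is correct and follows essentially the same route as the paper's: both decompose $G = G_B \oplus G^-$ and $G' = G_B' \oplus G^-$, invoke associativity of $\oplus$ to write $G \oplus H = G_B \oplus (G^- \oplus H)$, and then apply $G_B \sim_{\mc{G},t} G_B'$ to conclude. Your write-up is a bit more explicit about the bookkeeping of the two boundaries $A$ and $B$, but the argument is the same.
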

\begin{proof}
Assume that $G \sim^*_{\mathcal{E},g,t} G'$, and we want to deduce that $G \sim^*_{\mathcal{E},g,t,\mc{G}} G'$, that is,  we just have to prove that $G \sim_{\mathcal{G},t} G'$. Let $H$ be a $t$-boundaried graph, and we need to prove that $G \oplus H \in \mathcal{G}$ if and only if $G' \oplus H \in \mathcal{G}$. Let $G^-$ such that $G = G_x \oplus  G^-$, and note that $G' = G'_x \oplus G^-$. We have that $G \oplus H = (G_x \oplus  G^-)  \oplus H = G_x \oplus  (G^-  \oplus H)$, and similarly we have that $G'_x \oplus  (G^-  \oplus H) = (G'_x \oplus  G^- ) \oplus H = G' \oplus H$. Since $G_x \sim_{\mathcal{G},t} G'_x$, it follows that  $G \oplus H = G_x \oplus  (G^-  \oplus H) \in \mc{G}$ if and only if $G'_x \oplus  (G^-  \oplus H)= G' \oplus H \in \mc{G}$.
\end{proof}

We will use the shortcut \textsc{$r$DS} for \textsc{$r$-Dominating Set}.\vspace{.4cm}

\begin{proofrDomSet} Let us first prove that $\Eds =(\Cds,\Lcds )$ is a $r$DS-encoder. Note that there is a unique $0$-tuple $ R_{\emptyset} \in \Cds(\emptyset) $, and by definition of $\Lcds$, $(G,S,R_{\emptyset}) \in \Lcds$ if and only if $S$ is an $r$-dominating set of $G$. Let us now prove that the equivalence relation $\sim^*_{\Eds,g,t,\mc{G}}$ is DP-friendly for $g(t)=t$.

As in Definition~\ref{def:DPfriendly}, let $G$ be a $t$-boundaried graph with boundary $A$, let $B$ be any separator $B \subseteq V(G)$ with $|B| \leq t$, and let $G_B$ be any collection of connected components of $G - B$ such that $A \cap V(G_B) \subseteq B$, which we consider as a $t$-boundaried graph with boundary $B$. We define $H$ to be the $t$-boundaried graph induced by $V(G) \setminus (V(G_B) \setminus B)$, and with boundary $B$ (that is, we forget boundary $A$) labeled as in $G_B$.
Let $G_B'$ be a $t$-boundaried graph such that $G_B \sim^*_{\Eds,g,t,\mc{G}} G_B'$. Let $G' := H \oplus G_B'$ with boundary $A$. See Fig.~\ref{fig:unglueGlue} for an illustration.

\begin{figure}[h!]
\centering %\vspace{-.2cm}
\includegraphics[width=1.13\textwidth]{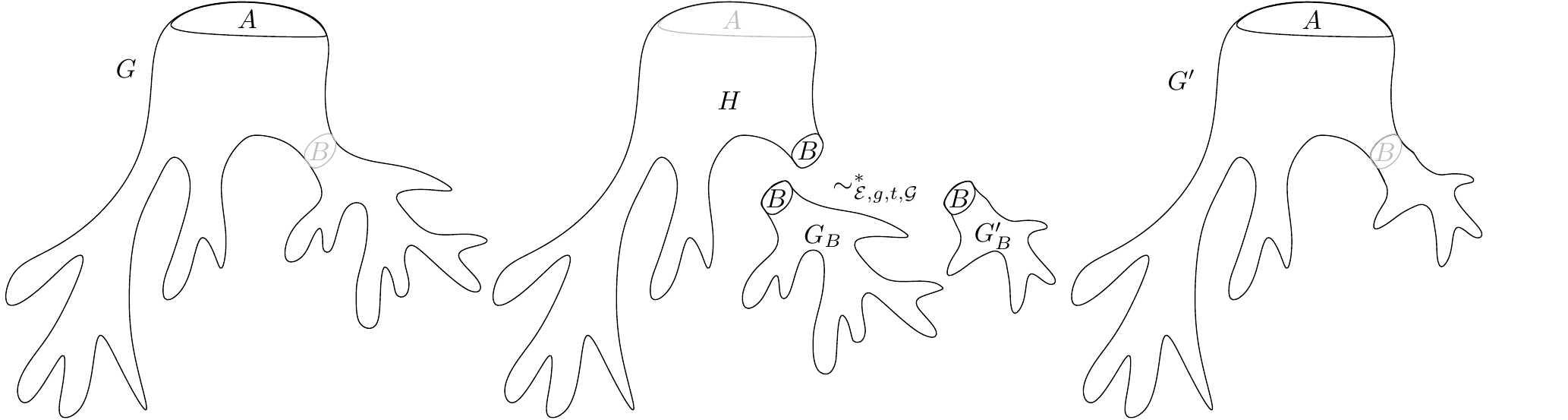}
\vspace{-.1cm}
\caption{Graphs $G$ and $G'$ in the proof of Lemma~\ref{lem:rDomCharacterizes}. \label{fig:unglueGlue}}
\end{figure}

We claim that the encoder $\Eds$ is $g$-confined for $g(t)=t$. Indeed, consider an arbitrary encoding $R_A \in \Cds(\Lambda(G))$ and the encoding $R_0$ satisfying $R_0(v)=0$ for every $v\in A$. Let $S_0\subseteq V(G)$ be a minimum-sized partial $r$-dominating set satisfying $R_0$, i.e., such that $(G,S_0,R_0)\in \Lcds$. Observe that $S_0$ also satisfies $R_A$, i.e., $(G,S_0,R_A)\in\Lcds$. It then follows that $f_{G}^{\Eds}(R_{0})  = \max_{R_A} f_{G}^{\Eds}(R_A)$.
Moreover, let $S\subseteq V(G)$ be a minimum-sized partial $r$-dominating set satisfying $R_A$, i.e., such that  $(G,S,R_A) \in \Lcds$. Then $R_0$ is also satisfied by $S \cup A$. It follows that $f_{G}^{\Eds}(R_0)  - \min_{R_A}f_{G}^{\Eds}(R_A)  \  \leqslant \ |A| \leq t $, proving that the encoder is indeed $g$-confined.

We want to show that $G \sim^*_{\Eds,g,t, \mc{G}} G'$ and that $\Delta_{\Eds,g,t}(G,G') =  \Delta_{\Eds,g,t}(G_B,G_B')$. According to Fact~\ref{fact:DPfriendly}, we can consider the relation $\sim^*_{\Eds,g,t}$ (that is, we do not need to consider the refinement with respect to the class of graphs $\mc{G}$), and due to the $g$-confinement it holds that $f_G^{\Eds,g} = f_G^{\Eds}$ for $g(t) =t$.
Hence it suffices to prove that $ f_G^{\Eds} (R_A) = f_{G'}^{\Eds} (R_A) + \Delta_{\Eds,g,t}(G_B,G_B')$ for all $R_A \in \Cds(\Lambda(G))$.

Let $R_A \in \Cds(\Lambda(G))$ be a $\Cds$-encoding defined on $A$. First assume that $f_{G}^{\Eds}(R_A)\neq + \infty$, that is, $R_A \in  \mathcal{C}_{\textsc{$r$DS},G}^{*}$. Let $S = D \cup D_H$ be a partial $r$-dominating set of size $f_{G}^{\Eds}(R_A)$ of $G$ satisfying $R_A$, with $D \subseteq V(G_B)$ and $D_H \subseteq V(H) \setminus B$.
We use $S$ to construct a $\Cds$-encoding $R_B\in \Cds(\Lambda(G_B))$ defined on $B$, satisfied by $D$ as follows. Let $v\in B$:
\begin{itemize}
\item[$\bullet$] if $v\in S$, then $R_B(v)=0$;
\item[$\bullet$] otherwise, if there is either a shortest path from $v$ to $S$ of length $i$ or a path from $v$ to any $a \in A$ such that $R_A(a) = \uparrow j$ of length $i-j$, in both cases with its first edge in $G_B$, then $R_B(v)=\downarrow i$;
\item[$\bullet$] otherwise, $R_B(v)=\uparrow i$ where $i=d_{G}(v,S)$ or $i=d_{G}(v,a) + j $ such that $R_A(a) = \uparrow j$ (the first edge of any shortest path from $v$ to $S$ is not in $G_B$).
\end{itemize}

\begin{figure}[h!]
\centering %\vspace{-.2cm}
\includegraphics[width=1.05\textwidth]{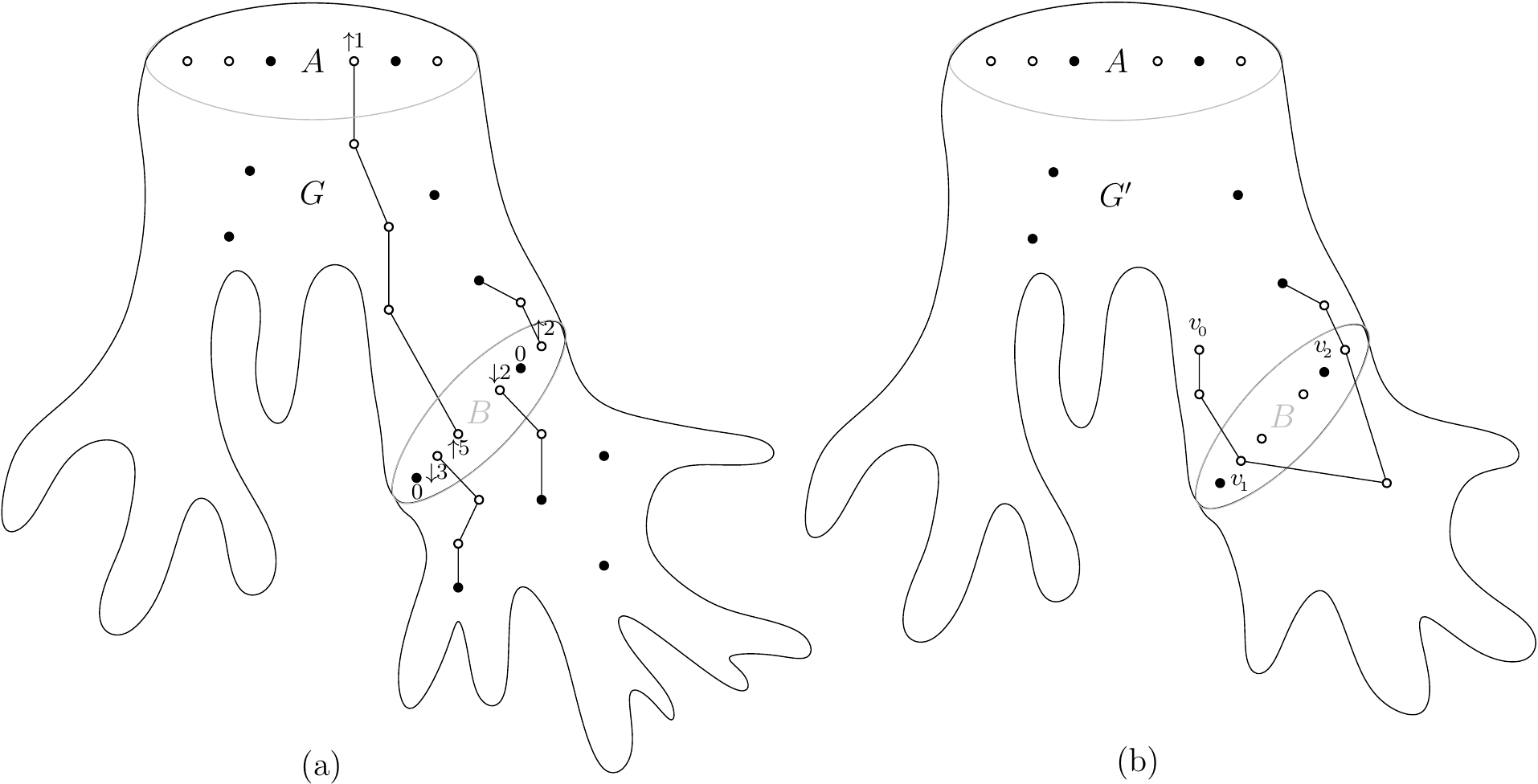}
%\vspace{-.1cm}
\caption{Illustration of the proof of Lemma~\ref{lem:rDomCharacterizes}. Black vertices belong to the solution: (a) construction of the $\Cds$-encoding $R_B\in \Cds(\Lambda(G_B))$; and (b) construction of the corresponding paths.\label{fig:r-Dom}}
\end{figure}

See Fig.~\ref{fig:r-Dom}(a) for an illustration of the construction of the $\Cds$-encoding $R_B\in \Cds(\Lambda(G_B))$ described above.

Observe that by construction of $R_B$, $|D| \geqslant f_{G_B}^{\Eds}(R)$. Let $D'$ be a subset of vertices of $G_B'$ of minimum size such that $(G_B',D',R_B) \in \Lcds$, that is, $|D'| = f_{G_B'}^{\Eds}(R_B)$.  As $G_B \sim_{\Eds,g,t} G_B'$, we have $|D'|=  f_{G_B}^{\Eds}(R_B) + \Delta_{\Eds,g,t}(G_B,G_B')$ and therefore $|D' \cup D_H| =  f_{G_B}^{\Eds}(R_A) + \Delta_{\Eds,g,t}(G_B,G_B') + |D_H| \leqslant f_{G}^{\Eds}(R_A) + \Delta_{\Eds,g,t}(G_B,G_B')$.

Let us now prove that $S'=D'\cup D_H$ is a partial $r$-dominating set of $G'$ satisfying $R_A$. According to the definition of $\Eds$, we distinguish vertices in $V(G')\setminus A$ and in $A$.

We start with vertices not in $A$. For any vertex $v \in V(G')\setminus (A \cup S') $, we consider the following iterative process that builds a path of length at most $r$ from $v$ to $S'$ or a path of length at most $r-i$ from $v$ to $a \in A$ such that $R(a)= \uparrow i $. At step $j\geqslant 0$, we identify a vertex $v_j \in B$. We initially set $v_0=v$. If $v_0\in V(G_B')$, we can assume that $d_{G_B'}(v_0,D')>  r$, as otherwise we are done. As $D'$ satisfies $R_B$, this implies that $B$ contains a vertex $v_1$ such that $R_B(v_1)=\uparrow i_1$ and $d_{G_B'}(v_0,v_1)+i_1\leqslant r$. Similarly, if $v_0\in V(H)\setminus B$, we can assume that $d_H(v_0,D_H)> r$ and $d_H(v_0,a)> r-i$ for any $a \in A$ such that $R_A(a) = \uparrow i$, as otherwise we are done. As $S=D\cup D_H$ is a partial $r$-dominating set of $G$ satisfying $R_A$, any shortest path $P$ (of length at most $r$) between $v_0$ and $S$ and any path (of length $r-i$) between $v_0$ and $a \in A$ such that $R_A(a) = \uparrow i$, contains a vertex of $B$ incident to an edge of $G_B$. Let $v_1$ be the first such vertex of $P$. By definition of $R_B$, we have that $R_B(v_1)=\downarrow i_1$ with $d_H(v_0,v_1)+i_1\leqslant r$. Let us now consider $v_j$ with $j\geqslant 1$, and denote by $l_j$ the length of the path we discovered from $v_0$ to $v_j$. We need to prove that $l_j+i_j\leqslant r$ (or $l_j+i_j\leqslant r-i$ in the other case) is an invariant of the process. As we argued, it is true for $j=1$, so assume it holds at step $j$. We consider two cases:
\begin{enumerate}
\item $R_B(v_j)=\downarrow i_j$: We can assume that  $d_{G_B'}(v_j,D')>i_j$, otherwise we are done as by construction it holds that $l_j+i_j\leqslant r$ (or $l_j+i_j \leqslant r-i$ in the other case). So as $D'$ is a partial $r$-dominating set satisfying $R_B$, there exists a vertex $v_{j+1}\in B$ such that $R_B(v_{j+1})=\uparrow i_{j+1}$ and $d_{G_B'}(v_i,v_{j+1})+i_{j+1}\leqslant i_j$. As $l_{j+1}=l_j+d_{G_B'}(v_i,v_{j+1})$, it follows that $l_{j+1}+i_{j+1}\leqslant r$ (or $l_{j+1}+i_{j+1}\leqslant r-i$ in the other case). See Fig.~\ref{fig:r-Dom}(b) for an illustration of this case.
\item $R_B(v_j)=\uparrow i_j$: We can assume that $d_{H}(v_j,D_H)>i_j$ and $d_H(v_j,a)> i_j-i$ for any $a \in A$ such that $R_A(a) = \uparrow i$, otherwise we are done as by construction it holds that $l_j+i_j\leqslant r$ (or $l_j+i_j\leqslant r-i$ in the other case). As by definition of the encoding $R_B$, $d_G(v_j,S)=i_j$ , any shortest path $P$ between $v_j$ and $S$ (or $a \in A$) uses a vertex of $B$ incident to an edge of $G_B$. Let $v_{j+1}$ be the first such vertex of $P$. Then $R_B(v_{j+1})=\downarrow i_{j+1}$ with $d_H(v_j,v_{j+1})+i_{j+1}\leqslant i_j$. As $l_{j+1}=l_j+d_H(v_i,v_{j+1})$, it follows that $l_{j+1}+i_{j+1}\leqslant r$ (or $l_j+i_j\leqslant r-i$ in the other case).
\end{enumerate}
Observe that the process ends, since the parameter $r-(l_j+i_j)$ is strictly decreasing.

We now consider vertices of $A$. Note that as $\partial(G_B) = \partial(G_B')$, it holds that $\partial(G) = \partial(G')$ as well. In particular, any vertex $v \in A$ is also in $H$. If $R_A(v) = 0$ since $S= D \cup D_H$ satisfies $A$, $v \in D_H$ and hence, $v \in S'=D' \cup D_H$. If $R_A(v) = \downarrow i$, the iterative process above built a path from $v$ to $S'$ of length at most $r$, or from $v$ to $a \in A$ with $R_A(a) = \uparrow j$ of length at most $r-i-j$.

It follows that $S'=D'\cup D_H$ is a partial $r$-dominating set of size at most $f_{G}^{\Eds}(R_A)+\Delta_{\Eds,g,t}(G_B,G_B')$ satisfying $R_A$, as we wanted to prove.

Finally, assume that $f_{G}^{\Eds}(R_A) = + \infty$. Then it holds that $f_{G'}^{\Eds}(R_A) = + \infty$ as well. Indeed, suppose that $f_{G'}^{\Eds}(R_A)$ is finite. Then,  given a partial $r$-dominating set of $G'$ satisfying $R_A$,  by the argument above we could construct a partial $r$-dominating set of $G$ satisfying $R_A$, contradicting that $f_{G}^{\Eds}(R_A) = + \infty$.
% as we wanted to prove.

Therefore, we can conclude that $G \sim^*_{\Eds,g,t} G'$, and hence the equivalence relation $\sim^*_{\Eds,g,t,\mc{G}}$ is DP-friendly for $g(t)=t$. \end{proofrDomSet}

\subsection{Construction of the kernel}
\label{sec:Construction-kernel-rDomSet}

We proceed to construct a linear kernel for \textsc{$r$-Dominating Set} when the input graph excludes a fixed apex graph $H$ as a minor. Toward this end, we use the fact that this problem satisfies the  contraction-bidimensionality and separability conditions
required in order to apply the results of Fomin \emph{et al}.~\cite{FLST10}. In the following proposition we specify  the result in~\cite[Lemma~3.3]{FLST10}
for the case of \textsc{$r$-Dominating Set} while making visible the dependance on $r$ and the size $h$ of the excluded apex graph. The polynomial-time algorithm  follows from Fomin \emph{et al}.~\cite[Lemma 3.2]{FominLRS11bidi}, whose proof makes use of the polynomial-time approximation
algorithm for the treewidth of general graphs by~Feige \emph{et al}.~\cite{FeigeHL08impr}.

%
%, but also uses the explicit combinatorial bound of Kawarabayashi and Kobayashi~\cite{KaKo12} on the relation between the treewidth and the largest grid minor on $H$-minor-free graphs, and
%the algorithmic results of Kawarabayashi and  Reed~\cite{KaRe10} in order to obtain the claimed set $X$.
%
%\ig{$\bigstar$ We need to update the following theorem, and explain briefly the main ideas... minor/contraction bidimensional $\bigstar$}

\begin{proposition}\label{thm:protDecrDOMSet}
Let $r \geq 1$ be an  integer, let $H$ be an $h$-vertex apex graph, and let $\textsc{$r$DS}_H$ be the restriction of the \textsc{$r$-Dominating Set} problem to input graphs which exclude $H$ as a minor. If $(G,k) \in \textsc{$r$DS}_H$, then there exists a set $X\subseteq V(G)$ such that {$|X|=O(r \cdot f_{c}(h) \cdot k)$}
 and ${\bf tw}(G-X) = O(r \cdot (f_{c}(h))^{2})$, where $f_{c}$ is the function in Proposition~\ref{prop:tw-contraction}.
Moreover, given an instance  $(G,k)$ of $\textsc{$r$DS}_H$, there is a polynomial-time algorithm  that either finds such a set $X$ or correctly reports that $(G,k)$ is a \textsc{No}-instance.
\end{proposition}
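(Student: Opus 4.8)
The plan is to specialise Lemma~3.3 of Fomin \emph{et al.}~\cite{FLST10} to \textsc{$r$-Dominating Set}, keeping explicit track of how the constants depend on $r$ and on the number $h$ of vertices of the excluded apex graph $H$, and to obtain the polynomial-time version exactly as in~\cite[Lemma~3.2]{FominLRS11bidi}, i.e.\ by running the peeling argument on an approximate tree decomposition produced by the algorithm of Feige, Hajiaghayi and Lee~\cite{FeigeHL08impr}. No use of FII or of any encoder is needed here; only the grid-contraction theorem (Proposition~\ref{prop:tw-contraction}, which is precisely why $H$ must be apex) and two properties of the parameter $\mathbf p(G):=$ ``minimum size of an $r$-dominating set of $G$''.

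First I would verify, with explicit $r$-dependence, \emph{(i) contraction-bidimensionality of $\mathbf p$}: edge contractions do not increase pairwise distances, so the image of an $r$-dominating set under a contraction is an $r$-dominating set of no larger size, hence $\mathbf p$ is non-increasing under contractions; and in $\Gamma_\ell$ every ball of radius $r$ has $O(r^2)$ vertices while the unique high-degree corner can only help dominate the $O(\ell)$ vertices of the outer face, so $\mathbf p(\Gamma_\ell)=\Omega(\ell^2/r^2)$, say $\mathbf p(\Gamma_\ell)\ge \ell^2/(c_1 r^2)$ for $\ell\ge c_2 r$. Combined with Proposition~\ref{prop:tw-contraction} this yields: if $G$ is $H$-minor-free and $\tw(G)\ge f_c(h)\cdot \ell$ then $\Gamma_\ell$ is a contraction of $G$, hence $\mathbf p(G)\ge \ell^2/(c_1 r^2)$; in particular a \YES-instance has $\tw(G)=O(r\, f_c(h)\sqrt{k})$. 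Then I would verify \emph{(ii) separability of $\mathbf p$}: if $Z$ separates $G$ into pieces, then for a piece $P$ a solution $S$ restricted to $P$, together with $\partial P$, is an $r$-dominating set of the torso of $P$ — a length-$\le r$ domination path either stays in $P$ or meets $\partial P$ at distance $<r$, and adding clique edges on $\partial P$ only shrinks distances — so $\mathbf p(\mathrm{torso}(P))\le |S\cap P|+|\partial P|$, while conversely the union of $r$-dominating sets of the pieces $r$-dominates $G$; thus $\mathbf p$ is separable with a separation function linear in the separator size. These estimates must be checked carefully, as the explicit constants in the statement rest on them.

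With (i) and (ii) available, the argument of~\cite[Lemma~3.3]{FLST10} goes through: root a tree decomposition of $G$ and repeatedly detach a minimal subtree whose subgraph still has treewidth at least a suitable threshold $\tau$ (which the analysis lets one take of order $r\,(f_c(h))^2$), add the at most $\tau$ boundary vertices of that subgraph to $X$, replace the detached piece by its torso, and recurse. By Proposition~\ref{prop:tw-contraction} each detached piece contains a large $\Gamma_\ell$-contraction, so by (i) it forces many solution vertices, and by (ii) — after discounting the linear separability loss at its boundary — a bounded amount of the budget $k$ can be charged to each such piece; hence there are $O(k)$ pieces, giving $|X|=O(r\, f_c(h)\, k)$, while by construction $\tw(G-X)\le \tau=O(r\,(f_c(h))^2)$. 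Since $\mathbf p(G)$ cannot be computed, the algorithmic version runs the same peeling on a tree decomposition of width $O(\sqrt{\log n})\cdot \tw(G)$ obtained in polynomial time from~\cite{FeigeHL08impr}, and reports a \textsc{No}-instance the first time a detached piece is found whose boundary cannot be paid for within budget $k$; correctness follows because, by (i)--(ii), every \YES-instance admits the charging, exactly as in~\cite[Lemma~3.2]{FominLRS11bidi}.

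The part I expect to be the real work is the bookkeeping in the last paragraph rather than anything conceptual: one must propagate the $r$-factor coming from the $\Omega(\ell^2/r^2)$ density of (i) and the separator-size factor from (ii), together with the $f_c(h)$-factors from Proposition~\ref{prop:tw-contraction} and from the apex-minor-free structure underlying~\cite{FLST10}, through the charging so that the two bounds come out as $|X|=O(r\, f_c(h)\, k)$ and $\tw(G-X)=O(r\,(f_c(h))^2)$, and one must check that the $O(\sqrt{\log n})$ slack of the treewidth approximation is absorbed into the hidden constants without breaking linearity in $k$.
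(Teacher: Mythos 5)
Your proposal matches the paper's approach: the paper offers no proof of this proposition, merely observing that it is \cite[Lemma~3.3]{FLST10} specialized to \textsc{$r$-Dominating Set} (via contraction-bidimensionality and separability), made algorithmic by \cite[Lemma~3.2]{FominLRS11bidi} together with the treewidth approximation of \cite{FeigeHL08impr}. Your verification of the $\Omega(\ell^2/r^2)$ density of $r$-dominating sets in $\Gamma_\ell$, your separability sketch, and your account of the peeling/charging argument are precisely what those cited lemmas carry out, so the proposal is a correct fleshing-out of the same route.
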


% The following property can be proved by carefully going through the proof of~\cite[Lemma~8]{BFL+09}.
%
%\begin{proposition}[Bodlaender \emph{et al}.~\cite{BFL+09}]\label{prop:protrusionDecrDomSet}
%Let $r\geq 1$ be a positive integer and let $G = (V, E)$ be a graph embedded in a surface $\Sigma$ of Euler genus $\genus$ that contains a vertex set $S$, $|S| = k$, such that each vertex in $V$ is within distance at most $r$ from some vertex in $S$. Then
%$G$ has an $(2400(r+\genus) \max\{r\genus, k\},200r)$-protrusion decomposition.
%\end{proposition}
%
%We would like to stress that the constants in the above proposition can be probably improved with some extra work.

We are now ready to present the linear kernel for \textsc{$r$-Dominating Set}.

\begin{theorem}\label{thm:KernelrDomSet}
Let $r \geq 1$ be an  integer, let $H$ be an $h$-vertex apex graph, and let $\textsc{$r$DS}_H$ be the restriction of the \textsc{$r$-Dominating Set} problem to input graphs which exclude $H$ as a minor.  Then $\textsc{$r$DS}_H$ admits a constructive linear kernel of size at most $f(r,h)\cdot k$, where $f$ is an explicit function depending only on $r$ and $h$, defined in Equation~(\ref{eq:kernelrDomSet}) below.
\end{theorem}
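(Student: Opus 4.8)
The plan is to assemble the kernel from three ingredients already established in the paper: the encoder $\Eds$ for \textsc{$r$-Dominating Set} together with the proof (Lemma~\ref{lem:rDomCharacterizes}) that it is an $r$DS-encoder and that $\sim^*_{\Eds,g,t,\mc{G}}$ is DP-friendly for $g(t)=t$; the structural result of Proposition~\ref{thm:protDecrDOMSet}, which provides in polynomial time a treewidth-modulator $X$ with $|X| = O(r\cdot f_c(h)\cdot k)$ and $\tw(G-X) = O(r\cdot (f_c(h))^2)$; and the protrusion-decomposition machinery of Theorem~\ref{thm:protDec} together with the generic kernelization statement of Corollary~\ref{corollary:withProtrusionDecomposition}. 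First I would run the algorithm of Proposition~\ref{thm:protDecrDOMSet}: either it reports that $(G,k)$ is a \textsc{No}-instance (and we output a trivial constant-size \textsc{No}-instance), or it returns the set $X$. Note that since an apex graph on $h$ vertices is in particular a graph on $h$ vertices, $G$ excludes $H$ as a (topological) minor, so the subsequent results apply.

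Next I would feed $X$ into Theorem~\ref{thm:protDec} with $t_0 := \tw(G-X) = O(r\cdot (f_c(h))^2)$ and $c := |X|/k = O(r\cdot f_c(h))$, obtaining in linear time an $(\alpha_H \cdot t_0 \cdot c \cdot k,\ 2t_0+h)$-protrusion decomposition of $G$, where $\alpha_H \leq 40h^2 2^{5h\log h}$. Set $t := 2t_0 + h = O(r\cdot (f_c(h))^2)$ and $\alpha := \alpha_H\cdot t_0\cdot c = O(h^2 2^{5h\log h}\cdot r^2\cdot (f_c(h))^3)$, so that we have an $(\alpha\cdot k, t)$-protrusion decomposition. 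Then I would invoke Corollary~\ref{corollary:withProtrusionDecomposition} with this protrusion decomposition, the encoder $\mathcal{E} = \Eds$, the function $g(t)=t$, and the graph class $\mc{G}$ of $H$-minor-free graphs (whose membership is MSO-expressible, with $r_{\mc{G},t}\leq 2^{t\log t}\cdot h^t\cdot 2^{h^2}$). The hypotheses of the corollary are met precisely by Lemma~\ref{lem:rDomCharacterizes}. This yields a linear kernel of size at most $(1+b(\Eds,g,t,\mc{G}))\cdot\alpha\cdot k$, where $b(\Eds,g,t,\mc{G}) = 2^{r(\Eds,g,t,\mc{G})+1}\cdot t$ and $r(\Eds,g,t,\mc{G}) = (g(t)+2)^{s_{\Eds}(t)}\cdot 2^t\cdot r_{\mc{G},t}$, with $s_{\Eds}(t)\leq (2r+1)^t$ by Equation~(\ref{eq:sizeEncoderrDomSet}).

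The remaining work is purely bookkeeping: substituting $t = O(r\cdot(f_c(h))^2)$ into these nested expressions and collecting everything into a single closed-form function $f(r,h)$ to be displayed as Equation~(\ref{eq:kernelrDomSet}). Concretely, $s_{\Eds}(t)\leq (2r+1)^{O(r(f_c(h))^2)}$, then $r(\Eds,g,t,\mc{G})$ is a tower of height roughly three in these quantities, and $b$ is one further exponential on top; multiplying by $\alpha$ (which is only singly exponential in $h$ and polynomial in $r$ and $f_c(h)$) does not change the order of magnitude. The honest statement is that $f(r,h)$ is an enormous but fully \emph{explicit} function of $r$ and $h$, and its precise shape is whatever one writes down by back-substitution. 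The only genuine subtlety — not really an obstacle — is making sure the constants fed to Theorem~\ref{thm:protDec} are consistent: $t$ in Corollary~\ref{corollary:withProtrusionDecomposition} must be the width parameter $2t_0+h$ of the protrusion decomposition (not the modulator width $t_0$), since the $t$-protrusions produced have boundary size up to $2t_0+h$; once this is tracked correctly the argument goes through verbatim. I expect the main ``difficulty'' to be notational: keeping the dependence on the (unknown-growth-rate) function $f_c$ visible throughout, as promised in the introduction, rather than hiding it in $O(\cdot)$.
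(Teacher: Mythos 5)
Your proof follows exactly the paper's argument: run the algorithm of Proposition~\ref{thm:protDecrDOMSet} to obtain the treewidth-modulator $X$ (or reject), feed it into Theorem~\ref{thm:protDec} to get a linear protrusion decomposition, and invoke Corollary~\ref{corollary:withProtrusionDecomposition} with the encoder $\Eds$, $g(t)=t$, and Lemma~\ref{lem:rDomCharacterizes}, landing on the same bound $b(\Eds,g,O(r\cdot(f_c(h))^2),\mc{G})\cdot r^2\cdot 2^{O(h\log h)}\cdot(f_c(h))^3\cdot k$. The observations you add — that $H$-minor-free implies $H$-topological-minor-free so Theorem~\ref{thm:protDec} applies, and that the width parameter passed to the corollary must be $2t_0+h$ rather than $t_0$ — are correct and are implicit in the paper's tighter write-up.
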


\begin{proof} Given an instance $(G,k)$ of $\textsc{$r$DS}_H$, we run the polynomial-time algorithm given by Proposition~\ref{thm:protDecrDOMSet} to either conclude that $(G,k)$ is a \textsc{No}-instance or to find a set $X\subseteq V(G)$
such that $|X|=O(r \cdot f_{c}(h) \cdot k)$
and ${\bf tw}(G-X) = O(r \cdot (f_{c}(h))^{2})$. In the latter case, we use the set $X$ as input to the algorithm given by Theorem~\ref{thm:protDec}, which outputs in linear time a
$(r^2 \cdot 2^{O(h \log h)}\cdot (f_{c}(h))^{3} \cdot k, O(r \cdot (f_{c}(h))^{2}))$-protrusion decomposition of $G$. We now consider the encoder $\Eds =(\Cds,L_{\mc{C}_{\textsc{$r$DS}}})$  defined in Subsection~\ref{sec:Description-encoder-rDomSet}. By Lemma~\ref{lem:rDomCharacterizes}, $\Eds$ is an $r\textsc{DS}$-encoder and $\sim^*_{\Eds,g,t,\mc{G}}$ is DP-friendly, where $\mc{G}$ is the class of $H$-minor-free graphs and $g(t)=t$. By Equation~(\ref{eq:sizeEncoderrDomSet}) in Subsection~\ref{sec:Description-encoder-rDomSet}, we have that $s_{\Eds }(t) \ \leq \ (2r+1)^t$. Therefore, we are in position to apply Corollary~\ref{corollary:withProtrusionDecomposition} and obtain a linear kernel for $\textsc{$r$DS}_H$ of size at most
\begin{equation}\label{eq:kernelrDomSet}
r^2 \cdot 2^{O(h \log h)}\cdot (f_{c}(h))^{3} \cdot b\left(\Eds, g,  O(r \cdot (f_{c}(h))^{2}) ,\mathcal{G}\right) \cdot k \ ,
\end{equation}
where $b\left(\Eds, g,  O(r \cdot (f_{c}(h))^{2}) ,\mathcal{G}\right)$ is the function defined in Lemma~\ref{lem:finiteSize}. \end{proof}

\vspace{.2cm}

It can be routinarily checked that, once the excluded apex graph $H$ is fixed, the dependance on $r$ of the multiplicative constant involved in the upper bound of~Equation~\eqref{eq:kernelrDomSet} is of the form $2^{2^{2^{O(r\cdot \log r)}}}$, that is,  it depends triple-exponentially on the integer $r$.

%-------------------------------------------------------------------------------------------------------------------------------
\section{An explicit linear kernel for \textsc{$r$-Scattered Set}}
\label{sec:rScatSet}
Let $r \geq 1$ be a fixed integer. Given a graph $G$ and a set $S \subseteq V(G)$, we say that $S$ is an \emph{$r$-independent set} if any two vertices in $S$ are at distance greater than $r$ in $G$. We define the \textsc{$r$-Scattered Set} problem, which can be seen as a generalization of  \textsc{Independent Set}, as follows.

\vspace{.4cm}
\begin{boxedminipage}{.8\textwidth}
\textsc{$r$-Scattered Set}\vspace{.1cm}

\begin{tabular}{ r l }
\textbf{~~~~Instance:} & A graph $G$  and a non-negative integer $k$. \\
\textbf{Parameter:} & The integer $k$.\\
\textbf{Question:} & Does $G$ have a $2r$-independent set of size at least $k$?\\
\end{tabular}
\end{boxedminipage}\vspace{.4cm}

Our encoder for \textsc{$r$-Scattered Set} (or equivalently, for \textsc{$r$-Independent Set}) is inspired from the proof of Fomin \emph{et al}.~\cite{BFL+09} that the problem has FII, and can be found in Subsection~\ref{sec:Description-encoder-rScatSet}. We then show how to construct the linear kernel in Subsection~\ref{sec:Construction-kernel-rScatSet}.

\subsection{Description of the encoder}
\label{sec:Description-encoder-rScatSet}

Equivalently, we proceed to present an encoder for the \textsc{$r$-Independent Set} problem, which we abbreviate as \textsc{$r$IS}. Let $G$ be a boundaried graph with boundary $\partial(G)$ and denote $I=\Lambda(G)$. The function \Csc maps $I$ to a set $\Csc(I)$ of \Csc-encodings. Each $R\in\Csc(I)$ maps $I$ to an $|I|$-tuple the coordinates of which are in one-to-one correspondence with the vertices of $\partial(G)$. The coordinate $R(v)$ of vertex $v\in \partial(G)$ is a $(|I|+1)$-tuple in $(d_S,d_{{v_1}},\dots ,d_{{v_{|I|}}})\in\{0, 1, \dots,  r,r+1\}^{|I|+1}$. For a subset $S$ of vertices of $G$, we say that $(G,S,R)$ belongs to the language $\Lcsc$ (or that $S$ is a \emph{partial $r$-independent set satisfying $R$}) if:
\begin{itemize}
\item[$\bullet$] for every pair of vertices $v\in S$ and $w\in S$, $d_G(v,w)> r$;
\item[$\bullet$] for every vertex $v\in\partial(G)$: $d_G(v,S)\geqslant d_S$ and for every $w\in\partial(G)$, $d_G(v,w)\geqslant d_w$.
\end{itemize}

As \rind is a maximization problem, by Equation~(\ref{eq:fEmax})
the function $f_G^{\Esc}$ associates to each encoding $R$ the maximum size of a partial $r$-independent set $S$ satisfying $R$. By definition of $\Esc$ it is clear that
\begin{equation}\label{eq:sizeEncoderScatSet}
s_{\Esc}(t) \leqslant (r+2) ^{ t(t+1)}.
\end{equation}

\begin{lemma}\label{lem:encoderrScatSet}
The encoder $\Esc =(\Csc,\Lcsc )$ described above is an $r\textsc{IS}$-encoder. Furthermore, if $\mc{G}$ is an arbitrary class of graphs and $g(t)=2t$, then the equivalence relation $\sim^*_{\Esc,g,t,\mc{G}}$ is DP-friendly.
\end{lemma}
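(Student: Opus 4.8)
The plan is to follow the same two-part strategy used in the proof of Lemma~\ref{lem:rDomCharacterizes}: first verify that $\Esc$ meets the definition of an $r\textsc{IS}$-encoder, and then show that $\sim^*_{\Esc,g,t,\mc{G}}$ is DP-friendly for $g(t)=2t$, using Fact~\ref{fact:DPfriendly} to reduce the DP-friendliness of $\sim^*_{\Esc,g,t,\mc{G}}$ to that of $\sim^*_{\Esc,g,t}$ and to the $g$-confinement of $\Esc$. For the first part: $\Csc(\emptyset)$ consists of the unique $0$-tuple $R_{\emptyset}$, and by the definition of $\Lcsc$ we have $(G,S,R_{\emptyset})\in\Lcsc$ if and only if every two vertices of $S$ are at distance greater than $r$ in $G$, i.e.\ $S$ is an $r$-independent set of $G$, which is exactly what is required.

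For $g$-confinement with $g(t)=2t$: I would compare an arbitrary encoding $R\in\Csc(\Lambda(G))$ with the encoding $R_{\text{far}}$ in which every boundary coordinate forces the ``maximal distance'' value (i.e.\ $d_S=r+1$ and $d_w=r+1$ for all $w$), so that a partial $r$-independent set $S$ satisfying $R_{\text{far}}$ must stay at distance $>r$ from the whole boundary. Starting from an optimal $S$ satisfying an arbitrary $R$, one obtains a set satisfying $R_{\text{far}}$ by deleting at most the boundary vertices of $S$ together with, for each boundary vertex, at most one nearby vertex of $S$ — hence losing at most $2|\partial(G)|\le 2t$ vertices; this gives $f_G^{\Esc}(R)\le f_G^{\Esc}(R_{\text{far}})+2t$. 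Since clearly $f_G^{\Esc}(R_{\text{far}})\le f_G^{\Esc}(R)$ for every $R$ (any set legal for the most restrictive encoding is legal for all of them), we conclude $\max_R f_G^{\Esc}(R)-\min_R f_G^{\Esc}(R)\le 2t=g(t)$, so $f_G^{\Esc,g}=f_G^{\Esc}$ and we may argue entirely with the untruncated function.

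For DP-friendliness proper, set up the notation exactly as in Definition~\ref{def:DPfriendly}: let $G$ have boundary $A$, let $B$ be a separator with $|B|\le t$, let $G_B$ be a union of components of $G-B$ with $A\cap V(G_B)\subseteq B$, let $H$ be the complementary $t$-boundaried graph on boundary $B$, and let $G'=H\oplus G_B'$ where $G_B\sim^*_{\Esc,g,t}G_B'$. I must show $f_G^{\Esc}(R_A)=f_{G'}^{\Esc}(R_A)+\Delta_{\Esc,g,t}(G_B,G_B')$ for every $R_A\in\Csc(\Lambda(G))$. Given an optimal partial $r$-independent set $S=D\uplus D_H$ of $G$ satisfying $R_A$, with $D\subseteq V(G_B)$ and $D_H\subseteq V(H)\setminus B$, I build a $\Csc$-encoding $R_B$ on $B$ whose $v$-coordinate records $\min(r+1,d_{G_B}(v,D))$ in the ``$d_S$'' slot and $\min(r+1,d_{G_B}(v,v'))$ in the ``$d_{v'}$'' slot for each $v'\in B$ — i.e.\ the distances measured \emph{inside $G_B$}. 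The point of storing the full vector of pairwise boundary distances (rather than just the distance to the solution) is that when we swap $G_B$ for $G_B'$ the distances between boundary vertices realized \emph{through} $G_B$ may change, and the encoding must be conservative enough that any walk in the glued graph that would violate the distance-$>r$ constraint is caught; this is precisely why $\Csc$ encodes a $(|I|+1)$-tuple per boundary vertex. Replacing $D$ by an optimal $D'$ with $(G_B',D',R_B)\in\Lcsc$ and $|D'|=|D|+\Delta_{\Esc,g,t}(G_B,G_B')$, one checks that $S'=D'\uplus D_H$ is again a partial $r$-independent set of $G'$ satisfying $R_A$: for every pair of vertices of $S'$ one of them lies in $D'$ or both in $D_H$, and a shortest path between them in $G'$ that is ``too short'' would, upon restricting to the portion inside $G_B'$ and replacing it by the corresponding $G_B$-distance guaranteed by $R_B$, yield a short path in $G$ between two vertices of $S$ — contradiction; the boundary constraints in $R_A$ are handled the same way. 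The reverse inequality is symmetric, and the infinite case is trivial since $f_G^{\Esc}$ never takes the value $+\infty$ (the empty set satisfies the all-$(r{+}1)$ encoding, and in general $\mathcal{C}^*_{G}(I)=\emptyset$ is impossible here), so $f_G^{\Esc}(R_A)=-\infty$ iff $f_{G'}^{\Esc}(R_A)=-\infty$ by the same transfer argument.

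The main obstacle I expect is the bookkeeping in the DP-friendliness step: unlike \textsc{$r$-Dominating Set}, where the ``charge'' propagated across $B$ is a single distance label per boundary vertex, here one must simultaneously control (i) distances from solution vertices on either side to $B$, and (ii) the $B$-to-$B$ distances routed through the replaced piece, and verify that the encoding $R_B$ built from $S$ is strong enough that \emph{every} short path in $G'$ between two vertices of $S'$ is blocked. Making the path-surgery argument precise — cutting a putative short path at its last vertex of $B$ incident to an edge of $G_B'$ and re-routing through the $G_B$-distance certified by $R_B$, possibly iterating as in the $r$-Dominating Set proof — is where the real work lies, and where the choice $g(t)=2t$ (rather than $t$) has to be tracked carefully. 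Everything else is a direct adaptation of the template already established.
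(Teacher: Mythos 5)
Your proof of the first part (that $\Esc$ is an $r\textsc{IS}$-encoder) and the overall skeleton of the DP-friendliness argument (construct $R_B$ on the inner boundary $B$ from an optimal $S = D \uplus D_H$, transfer $D$ to $D'$, and argue via path-surgery that $D'\cup D_H$ is still a partial $r$-independent set satisfying $R_A$) both match the paper. However, there is a genuine gap at the $g$-confinement step, and it propagates into the rest of the argument.

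You assert that $\Esc$ is $g$-confined for $g(t)=2t$, so that $f_{G}^{\Esc,g}=f_{G}^{\Esc}$ and the truncation can be ignored. This is false, and the paper explicitly flags $\Esc$ as its motivating example of an encoder that is \emph{not} confined in the sense of Definition~\ref{def:thin}, which is precisely why the truncated function $f^{\Esc,g}$ is introduced. Concretely, take a boundary vertex $v$ with $n$ pendant paths of length $r$; the $n$ endpoints form an $r$-independent set at distance exactly $r$ from $v$, so they all satisfy the all-zeros encoding $R_0$, but none of them satisfies any encoding forcing $d_S=r+1$ at $v$. Hence the gap between $\max_R f_G^{\Esc}(R)$ and $\min_{R\in\mathcal{C}^*_G} f_G^{\Esc}(R)$ can be as large as $n$. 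Your deletion argument (``at most one nearby vertex of $S$ per boundary vertex'') does not hold: a single boundary vertex can have arbitrarily many solution vertices at distance exactly $r$. A secondary issue with the same step is the choice of reference encoding: since this is a maximization problem, the relevant extremal encoding is the all-zeros one $R_0$ (which any partial $r$-independent set satisfies and which therefore attains $\max_R f_G^{\Esc}(R)$), not the all-$(r{+}1)$ encoding $R_{\text{far}}$, which may in fact be infeasible on the graph-structure side (if two boundary vertices are at distance $\leq r$, no $S$ satisfies $R_{\text{far}}$).

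Because the encoder is not confined, you cannot argue with the untruncated $f^{\Esc}$; the equivalence $G_B \sim^*_{\Esc,g,t} G_B'$ only controls the truncated values $f^{\Esc,g}$. Your transfer step ``replace $D$ by an optimal $D'$ with $|D'| = |D| + \Delta_{\Esc,g,t}(G_B,G_B')$'' is thus unjustified as written: if $f_{G_B}^{\Esc,g}(R_B) = -\infty$, the equivalence gives you nothing about the actual sizes. What is missing is exactly the Fact the paper proves at this point: that the specific encoding $R_B$ extracted from the optimal $S$ satisfies $f_{G_B}^{\Esc,g}(R_B) \neq -\infty$, i.e.\ $f_{G_B}^{\Esc}(R_0) - f_{G_B}^{\Esc}(R_B) \leq 2t$. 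The paper's argument is a pigeonhole/exchange bound: remove from a maximizer $I_0$ of $R_0$ (resp.\ from $I_H$) the at most $t$ vertices within distance $r/2$ of $B$; the resulting $I_0^* \cup I_H^*$ is an $r$-independent set of $G$, hence of size at most $|S|$, which after rearranging yields $|I_0| \leq |I| + 2t$. This is where the choice $g(t) = 2t$ actually comes from, and it must be established before the rest of your transfer argument (which is otherwise in the right spirit) can go through.
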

\begin{proof} We first prove that $\Esc =(\Csc,\Lcsc )$ is a $r$IS-encoder. There is a unique $0$-tuple $ R_{\emptyset} \in \Csc(\emptyset) $, and by definition of $\Lcsc$, $(G,S,R_{\emptyset}) \in \Lcsc$ if and only if $S$ is an $r$-independent set of $G$.

Let $G, G'$ with boundary $A$ and $H, G_B, G_B'$ with boundary $B$ be the graphs as defined in the proof of Lemma~\ref{lem:rDomCharacterizes} (see Fig.~\ref{fig:unglueGlue}).

Let $R_0$ be the encoding satisfying $R_0(v)=(0,0,\dots, 0)$ for every $v\in B$. Observe that if $S$ is a maximum partial $r$-independent set satisfying an encoding $R_B \in \Csc(\Lambda(G_B))$, then $S$ also satisfies $R_0$. It follows that $f_{G}^{\Esc}(R_{0})  = \max_{R_B} f_{G}^{\Esc}(R_B)$ (and thus $f_{G}^{\Esc}(R_{0})=f_{G}^{\Esc,g}(R_{0})$).

We want to show that $G \sim^*_{\Esc,g,t, \mc{G}} G'$ and that $ \Delta_{\Esc,g,t}(G,G') =  \Delta_{\Esc,g,t}(G_B,G_B')$.
According to Fact~\ref{fact:DPfriendly}, it is enough to  consider the relation $\sim^*_{\Esc,g,t} $. To that aim, we will prove that $ f_G^{\Esc,g} (R_A) = f_{G'}^{\Esc,g} (R_A) + \Delta_{\Esc,g,t}(G_B,G_B')$ for all $R_A \in \Csc(\Lambda(G))$ and for $g(t) =2t$.

Let $R_A \in \Csc(\Lambda(G))$ be a $\Csc$-encoding defined on $A$. First assume that $f_{G}^{\Esc,g}(R_A)\neq - \infty$, that is, $R_A \in  \mathcal{C}_{\textsc{$r$IS},G}^{*}$. Let $S = I \cup I_H$ be a partial $r$-independent set of size $f_{G}^{\Esc,g}(R_A)$ of $G$, with $I \subseteq V(G_B)$ and $I_H \subseteq V(H) \setminus B$. An encoding $R_B \in \Csc(\Lambda(G_B))$, satisfied by $S$ is defined as follows. Let $v\in B$, then $R_B(v)=(d_S,d_{v_1},\dots, d_{v_{|B|}})$ where
\begin{itemize}
\item[$\bullet$] $d_S=d_{G_B}(v,I)$; and
\item[$\bullet$] for $i\in \{1,\ldots,|B|\}$, $d_{v_i}=\min \{d_{G_B}(v,v_i),r+1\}$ (remind that $v_i\in \partial(G)$).
\end{itemize}

\begin{fact}For the $R_B$ defined above, it holds that  $f_{G_B}^{\Esc,g}(R)\neq -\infty$, where $g(t)=2t$.
\end{fact}
\begin{proof}
Let $I_0\subseteq V(G_B)$ be a maximum partial $r$-independent set satisfying $R_0$, i.e., $(G_B,I_0,R_0) \in \Lcsc$. Let us define $I^*=I\setminus N_{r/2}(B)$, $I_0^*=I_0\setminus N_{r/2}(B)$ and $I_H^*=I_H\setminus N_{r/2}(B)$. By the pigeon-hole principle, it is easy to see that $|I_0^*| \geqslant |I_0| - t$ (otherwise $B$ would contain a vertex at distance at most $r/2$ from two distinct vertices of $I_0$).
Likewise, $|I_H^*| \geqslant |I_H| - t$. Now observe that $I_0^*\cup I_H^*$ is an $r$-independent set of $G$ and therefore $|I_0^*|+|I_H^*|\leqslant |S|$ (1) (as $S$ was chosen as a maximum $r$-independent set of $G$). As $S$ is the disjoint union of $I$ and $I_H$, we also have that $|S|\leqslant |I|+|I_H^*|+t$ (2). Combining (1) and (2), we obtain that $|I_0^*|\leqslant |I|+t$ and therefore $|I_0|\leqslant |I|+2t$. It follows that $f_{G_B}^{\Esc}(R_B)=f_{G_B}^{\Esc,g}(R_B)$, proving the fact.
\end{proof}

%\smallskip
Observe that by construction of $R_B$, $|I| \leqslant f_{G_B}^{\Esc,g}(R_B)$. Consider a subset of vertices $I'$ of $G_B'$ of maximum size such that $(G_B',I',R_B) \in \Lcsc$, that is $|I'| = f_{G_B'}^{\Esc,g}(R_B)$.  As $G_B \eqec G_B'$, by the above claim, we have $|I'|=  f_{G_B}^{\Esc,g}(R_B) + \Delta_{\Esc,g,t}(G_B,G_B')$ and therefore $|I'\cup I_H| =  f_{G_B}^{\Esc,g}(R_B) + \Delta_{\Esc,g,t}(G_B,G_B') + |I_H| \geqslant f_{G}^{\Esc,g}(R_A) +\Delta_{\Esc,g,t}(G_B,G_B')$.

Let us prove that $S'=I'\cup I_H$ is a partial $r$-independent set of $G'$ satisfying $R_A$. Following the definition of $\Esc$, we have to verify two kinds of conditions: those on vertices in $S'$ and those on vertices in $A$. We start with vertices in $S'$. Let $P$ be a shortest path in $G$ between two vertices $v\in S'$ and $w\in S'$.
%(assume that $w$ is the closest vertex of $S_2$ to $v$).
We partition $P$ into maximal subpaths $P_1,\dots ,P_q$ such that
$P_j$ (for $1\leqslant j\leqslant q$) is either a path of $G_B'$ (called a $G_B'$-path) or of $H$ (called an $H$-path). An illustration of these paths can be found in Fig.~\ref{fig:r-Ind}. If $q=1$, then $d_{G'}(v,w)> r$ follows from the fact that  $I_H$ and $I'$ are respectively $r$-independent sets of $H$ and $G_B'$ (a partial $r$-independent set is an $r$-independent set). So assume that $q>1$. Observe that every $H$-subpath is a path in $G$. By the choice of $S'$, observe that the length of every $G_B'$-subpath is at least the distance in $G_B$ between its extremities. We consider three cases:

\begin{figure}[t]
\centering %\vspace{-.2cm}
\includegraphics[width=0.55\textwidth]{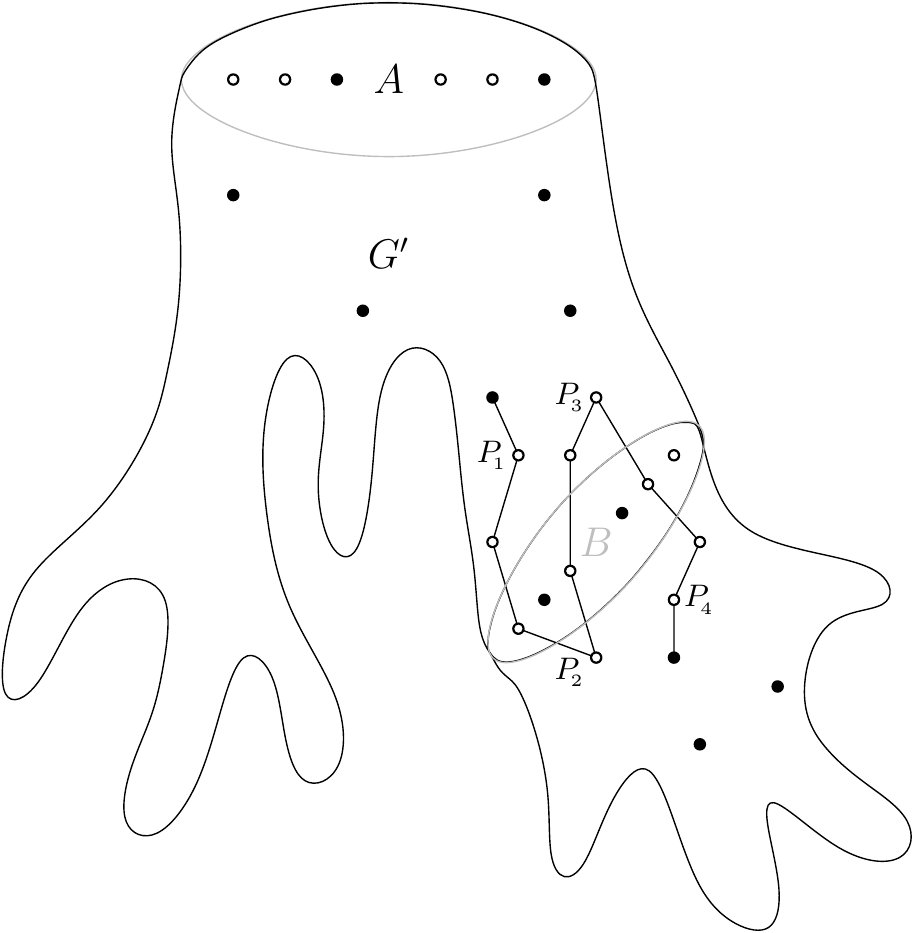}
%\vspace{-.1cm}
\caption{Illustration in the proof of Lemma~\ref{lem:encoderrScatSet}. The black vertices belong to the solution. \label{fig:r-Ind}}
\end{figure}

\begin{itemize}
\item[$\bullet$] $v,w\in V(H)\setminus B$:  By the observations above, the length of $P$ is at least $d_{G}(v,w)$. As $v,w\in I_H$, we obtained that $d_{G'}(v,w)\geqslant d_{G}(v,w)>r$.
\item[$\bullet$] $v\in V(H)\setminus B$ and $w\in V(G_B')$: Let $u$ be the last vertex of $P_{q-1}$. By the same argument as in the previous case we have $d_{G'}(v,u)\geqslant d_{G}(v,u)$. Now by the choice of $S'$, observe that $d_{G_B'}(u,w)\geqslant d_{G_B}(u,I)$. So the length of $P$ is at least the distance in $G$ from $v$ to a vertex $w'\in I$, we can conclude that $d_{G'}(v,w)>r$.

\item[$\bullet$] $v,w\in V(G'_B)$: Let $u_1$ and $u_q$ be respectively the last vertex of $P_1$ and the first vertex of $P_q$. By the same argument as above, we have that $d_{G'}(u_1,u_q)\geqslant d_{G}(u_1,u_q)$. By the choice of $S'$, we have that $d_{G_B'}(u_1,v)\geqslant d_{G_B}(u_1,I)$ and $d_{G_B'}(u_q,w)\geqslant d_{G_B}(u_q,I)$. So the length of $P$ is a least the distance in $G$ between two vertices $v'\in I$ and $w'\in I$. We can therefore conclude that $d_{G'}(v,w)>r$.
\end{itemize}

We now consider vertices of $A$. Let $v \in A$ such that $R_A(v) = (d_S,d_{v_1},\dots,d_{v_{|A|}})$. Let $P$ be a shortest path in $G'$ between vertices $v\in A$ and $w\in S'$, similarly to the previous argumentation (two first items) $d_{G'}(v,w) > d_S $. Now let $P$ be a a shortest path in $G'$ between vertices $v\in A$ and $v_i\in A$ similarly to the previous argumentation (first item) $d_{G'}(v,w) > d_{v_i}$.

It follows that $S'=I'\cup I_H$ is a partial $r$-independent set of size at least $f_{G}^{\Esc,g}(R_A) + \Delta_{\Esc,g,t}(G_B,G_B')$ satisfying $R_A$, as we wanted to prove.

Finally, assume that $f_{G}^{\Esc}(R_A) = - \infty$. Then it holds that $f_{G'}^{\Esc}(R_A) = - \infty$ as well. Indeed, suppose that $f_{G'}^{\Esc}(R_A)$ is finite. Then,  given a partial $r$-independent set of $G'$ satisfying $R_A$,  by the argument above we could construct a partial $r$-independent set of $G$ satisfying $R_A$, contradicting that $f_{G}^{\Esc}(R_A) = - \infty$.

Therefore, we can conclude that $G \sim^*_{\Esc,g,t} G'$, and hence the equivalence relation $\sim^*_{\Esc,g,t,\mc{G}}$ is DP-friendly for $g(t)=2t$.\end{proof}

\subsection{Construction of the kernel}
\label{sec:Construction-kernel-rScatSet}

For constructing a linear kernel, we use the following observation, also noted in~\cite{BFL+09}. Suppose that $(G,k)$ is a \textsc{No}-instance of $r$-\textsc{Scattered Set}. Then,  if for $1 \leq i \leq k$ we greedily choose a vertex $v_i$ in $G - \bigcup_{j < i}N_{2r}[v_j]$, the graph $G - \bigcup_{1 \leq i \leq k}N_{2r}[v_i]$ is empty. Thus, $\{v_1,\ldots,v_k\}$ is a $2r$-dominating set.% in $G$.

\begin{lemma}[Fomin \emph{et al}.~\cite{BFL+09}]\label{lem:rScatrDom}
If $(G,k)$ is a \textsc{No}-instance of the $r$-\textsc{Scattered Set} problem,
then $(G,k)$ is a \textsc{Yes}-instance of the  $2r$-{\textsc{Dominating
Set}} problem.
\end{lemma}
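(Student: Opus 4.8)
The plan is to prove the contrapositive: assume $(G,k)$ is a \textsc{No}-instance of $r$-\textsc{Scattered Set}, and show that $G$ has a $2r$-dominating set of size at most $k$, i.e. that $(G,k)$ is a \textsc{Yes}-instance of $2r$-\textsc{Dominating Set}. The key observation is the greedy procedure already sketched right before the statement: we try to build a large $2r$-scattered set greedily, and if we fail it is precisely because the partial scattered set we have built already $2r$-dominates the whole graph.

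First I would run the following greedy process. Set $G_0 := G$. For $i = 1, 2, \ldots$, as long as $G_{i-1}$ is non-empty, pick an arbitrary vertex $v_i \in V(G_{i-1})$ and set $G_i := G_{i-1} - N_{2r}[v_i]$ (the removal being taken in the \emph{original} graph $G$, equivalently $G_i = G - \bigcup_{j \le i} N_{2r}[v_j]$, which is what matters for the distance argument; one should be mildly careful here that distances used to form $N_{2r}$ are distances in $G$, not in the shrinking subgraph). Let $\ell$ be the first index at which $G_\ell$ is empty; such an $\ell$ exists and $\ell \le |V(G)|$ since each step removes at least the vertex $v_i$ itself.

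Next I would establish two facts about $S := \{v_1, \ldots, v_\ell\}$. (1) $S$ is a $2r$-scattered set of $G$: for $j < i$, the vertex $v_i$ was chosen from $G_{i-1}$, which no longer contains any vertex of $N_{2r}[v_j]$, so $d_G(v_i, v_j) > 2r$; hence all pairwise distances in $S$ exceed $2r$, which is exactly the defining condition of a $2r$-scattered set (recall $r$-\textsc{Scattered Set} asks for a set whose pairwise distances are greater than $r$, so a $2r$-scattered set has pairwise distances greater than $2r$). (2) $S$ is a $2r$-dominating set of $G$: since $G_\ell = G - \bigcup_{i \le \ell} N_{2r}[v_i]$ is empty, every vertex of $G$ lies in some $N_{2r}[v_i]$, i.e. is within distance $2r$ of some vertex of $S$.

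Finally I would close the argument by a size bound. By assumption $(G,k)$ is a \textsc{No}-instance of $r$-\textsc{Scattered Set}, meaning $G$ has no $2r$-scattered set of size at least $k$; since $S$ is a $2r$-scattered set by Fact (1), we must have $|S| = \ell \le k - 1 < k$. Combined with Fact (2), $S$ is a $2r$-dominating set of size at most $k$, so $(G,k)$ is a \textsc{Yes}-instance of $2r$-\textsc{Dominating Set}, as claimed. I do not expect any real obstacle here; the only point needing a little care is being consistent about whether the balls $N_{2r}[\cdot]$ and the distances are taken in $G$ or in the successively pruned subgraphs — everything should be phrased in terms of $G$ so that the ``distance $> 2r$'' conclusion for the scattered-set property and the ``distance $\le 2r$'' conclusion for the domination property both refer to distances in $G$.
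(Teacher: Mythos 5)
Your proof is correct and takes essentially the same greedy approach that the paper sketches immediately before the lemma statement: build a $2r$-scattered set greedily, observe that when the process stalls it has $2r$-dominated the whole graph, and bound its size by the \textsc{No}-instance assumption. A minor terminological slip: you announce a proof by contrapositive, but since you assume the hypothesis (\textsc{No}-instance of $r$-\textsc{Scattered Set}) and derive the conclusion (\textsc{Yes}-instance of $2r$-\textsc{Dominating Set}) directly, what you actually write is a direct proof of the implication.
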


We are ready to present the linear kernel for \textsc{$r$-Scattered Set} on apex-minor-free graphs.

%\ig{$\bigstar$ update the bounds $\bigstar$}

\begin{theorem}\label{thm:KernelrScatSet}
Let $r \geq 1$ be an  integer, let $H$ be an $h$-vertex apex graph, and let $\textsc{$r$SS}_H$ be the restriction of the \textsc{$r$-Scattered Set} problem to input graphs which exclude $H$ as a minor.  Then $\textsc{$r$SS}_H$ admits a constructive linear kernel of size at most $f(r,h)\cdot k$, where $f$ is an explicit function depending only on $r$ and $h$, defined in Equation~(\ref{eq:kernelrScatSet}) below.
\end{theorem}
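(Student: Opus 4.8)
Here is a plan for proving Theorem~\ref{thm:KernelrScatSet}.

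\medskip
\noindent The plan is to mimic, almost line by line, the argument proving Theorem~\ref{thm:KernelrDomSet}. The only point requiring a detour is that, unlike for \textsc{$r$-Dominating Set}, we do not have at hand a statement of the form of Proposition~\ref{thm:protDecrDOMSet} directly tailored to \textsc{$r$-Scattered Set}; instead I would obtain the required treewidth-modulator through its dual, \textsc{$2r$-Dominating Set}, using Lemma~\ref{lem:rScatrDom}. Concretely, given an instance $(G,k)$ of $\textsc{$r$SS}_H$, I would first run the polynomial-time algorithm of Proposition~\ref{thm:protDecrDOMSet} on $(G,k)$ viewed as an instance of \textsc{$2r$-Dominating Set} (i.e., with the parameter $r$ of that proposition instantiated to $2r$). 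If it reports that $(G,k)$ is a \textsc{No}-instance of \textsc{$2r$-Dominating Set}, then by the contrapositive of Lemma~\ref{lem:rScatrDom} we know that $(G,k)$ is a \textsc{Yes}-instance of \textsc{$r$-Scattered Set}, so I would output a fixed trivial \textsc{Yes}-instance (for instance $k$ isolated vertices with parameter $k$), which has size $O(k)$. Otherwise the algorithm returns $X\subseteq V(G)$ with $|X| = O(r\cdot f_c(h)\cdot k)$ and $\tw(G-X) = O(r\cdot(f_c(h))^2)$, where $f_c$ is the function from Proposition~\ref{prop:tw-contraction}.

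In this second case I would proceed exactly as for \textsc{$r$-Dominating Set}: feed $X$ into the linear-time algorithm of Theorem~\ref{thm:protDec} (Kim \emph{et al.}) to obtain an $(\alpha\cdot k, t)$-protrusion decomposition of $G$ with $t = O(r\cdot(f_c(h))^2)$ and $\alpha = r^2\cdot 2^{O(h\log h)}\cdot (f_c(h))^3$, using the bound $\alpha_H\le 40h^2 2^{5h\log h}$ and $|X|\le ck$ with $c=O(r\cdot f_c(h))$. Then I would invoke the encoder $\Esc=(\Csc,\Lcsc)$ of Subsection~\ref{sec:Description-encoder-rScatSet}: by Lemma~\ref{lem:encoderrScatSet}, $\Esc$ is an $r\textsc{IS}$-encoder and, with $\mc{G}$ the class of $H$-minor-free graphs and $g(t)=2t$, the equivalence relation $\sim^*_{\Esc,g,t,\mc{G}}$ is DP-friendly. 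Hence Corollary~\ref{corollary:withProtrusionDecomposition} applies and yields, in polynomial time, an equivalent instance of $\textsc{$r$SS}_H$ of size at most $(1+b(\Esc,g,t,\mc{G}))\cdot\alpha\cdot k$, with $b(\Esc,g,t,\mc{G})$ as in Lemma~\ref{lem:finiteSize}.

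To make the constant explicit I would substitute the bound $s_{\Esc}(t)\le (r+2)^{t(t+1)}$ from Equation~(\ref{eq:sizeEncoderScatSet}), together with $g(t)=2t$ and $r_{\mc{G},t}\le 2^{t\log t}\cdot h^t\cdot 2^{h^2}$, into Lemma~\ref{lem:finite} and then Lemma~\ref{lem:finiteSize}; plugging the resulting value of $b\big(\Esc,g,O(r\cdot(f_c(h))^2),\mc{G}\big)$ into the bound above gives the claimed kernel size, which has the same shape as Equation~\eqref{eq:kernelrDomSet} and is an explicit function of $r$ and $h$ only (and, once $H$ is fixed, depends triple-exponentially on $r$).

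As for where the difficulty lies: essentially none of it is in this theorem. All the substantive work has already been carried out in Lemma~\ref{lem:encoderrScatSet}; in particular, the delicate point there is that $\Esc$ is not naturally confined, so one must argue that the truncated function $f^{\Esc,g}$ with $g(t)=2t$ still retains all the information needed for the dynamic programming, which is exactly what the pigeon-hole/counting argument in that lemma secures. Granted that lemma, the remaining task here is purely bookkeeping: tracking how $r$ and $h$ enter Proposition~\ref{thm:protDecrDOMSet}, Theorem~\ref{thm:protDec}, Lemma~\ref{lem:finite} and Lemma~\ref{lem:finiteSize}, and combining the pieces.
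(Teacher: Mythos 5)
Your proposal matches the paper's proof almost verbatim: run Proposition~\ref{thm:protDecrDOMSet} with $r':=2r$, use Lemma~\ref{lem:rScatrDom} to conclude \textsc{Yes} when the modulator cannot be found, otherwise pass $X$ to Theorem~\ref{thm:protDec} and apply Corollary~\ref{corollary:withProtrusionDecomposition} with the encoder $\Esc$, $g(t)=2t$, and the bound $s_{\Esc}(t)\le (r+2)^{t(t+1)}$. Same approach, correct.
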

\begin{proof} Given an instance $(G,k)$ of $\textsc{$r$SS}_H$, we run on it the algorithm given by Proposition~\ref{thm:protDecrDOMSet} for the \textsc{$r'$-Dominating Set} problem with $r':=2r$. If the algorithm is not able to find a set $X$ of the claimed size, then by Lemma~\ref{lem:rScatrDom} we can conclude that $(G,k) \in \textsc{$r$SS}_H$. Otherwise, we use again the set $X$ as input to the algorithm given by Theorem~\ref{thm:protDec}, which outputs in linear time a
$(r^2 \cdot 2^{O(h \log h)}\cdot (f_{c}(h))^{3} \cdot k, O(r \cdot (f_{c}(h))^{2}))$-protrusion decomposition. We now consider the encoder $\mathcal{E}_{r\textsc{IS}} =(\mathcal{C}_{r\textsc{IS}},L_{\mathcal{C}_{r\textsc{IS}}})$  defined in Subsection~\ref{sec:Description-encoder-rScatSet}. By Lemma~\ref{lem:encoderrScatSet},  $\mathcal{E}_{r\textsc{IS}}$ is an $r\textsc{IS}$-encoder and $\sim^*_{\Eds,g,t,\mc{G}}$ is DP-friendly, where $\mc{G}$ is the class of $H$-minor-free graphs and $g(t)=2t$, and furthermore by Equation~(\ref{eq:sizeEncoderScatSet}) it satisfies $s_{\mathcal{E}_{r\textsc{IS}}}(t) \leq (r+2) ^{t (t+1)}$. Therefore, we are again in position to apply Corollary~\ref{corollary:withProtrusionDecomposition} and obtain a linear kernel for $\textsc{$r$SS}_H$ of size at most
\begin{equation}\label{eq:kernelrScatSet}
r^2 \cdot 2^{O(h \log h)}\cdot (f_{c}(h))^{3} \cdot b\left(\mathcal{E}_{r\textsc{IS}},g,  O(r \cdot (f_{c}(h))^{2}) ,\mathcal{G}\right) \cdot k \ ,
\end{equation}
where $b\left(\mathcal{E}_{r\textsc{IS}},g,  O(r \cdot (f_{c}(h))^{2}) ,\mathcal{G}\right)$ is the function defined in Lemma~\ref{lem:finiteSize}.% and $\mathcal{G}$ is the class of $H$-minor-free graphs.
\end{proof}

%-------------------------------------------------------------------------------------------------------------------------------
\section{An explicit linear kernel for \textsc{Planar-$\mc{F}$-Deletion}}
\label{sec:PlanarFDeletion}
Let $\mc{F}$ be a finite set of graphs. We define the \textsc{$\mc{F}$-Deletion} problem as follows.

\vspace{.4cm}
\begin{boxedminipage}{.75\textwidth}
\textsc{$\mc{F}$-Deletion}\vspace{.1cm}

\begin{tabular}{ r l }
\textbf{~~~~Instance:} & A graph $G$  and a non-negative integer $k$. \\
\textbf{Parameter:} & The integer $k$.\\
\textbf{Question:} & Does $G$ have a set $S\subseteq V(G)$
such that $|S|\leqslant k$\\ &and $G-S$ is $H$-minor-free for every $H\in
\mc{F}$?\\
\end{tabular}
\end{boxedminipage}\vspace{.4cm}

When all the graphs in $\mc{F}$ are connected, the corresponding problem is called \textsc{Connected-$\mc{F}$-Deletion}, and when $\mc{F}$ contains at least one planar graph, we call it \textsc{Planar-$\mc{F}$-Deletion}. When both conditions are satisfied, the problem is called \textsc{Connected-Planar-$\mc{F}$-Deletion}. Note that \textsc{Connected-Planar-$\mc{F}$-Deletion} encompasses, in particular, \textsc{Vertex Cover} and \textsc{Feedback Vertex Set}.

Our encoder for the \textsc{$\mc{F}$-Deletion} problem uses the dynamic programming machinery developed by Adler \emph{et al}.~\cite{ADF+11}, and it is described in Subsection~\ref{sec:Description-encoder-PlanarFDeletion}. The properties of this encoder also guarantee that the equivalence relation $\sim_{\mc{G},t}$ has finite index (see the last paragraph of Subsection~\ref{sec:explicitprotrusionreplacer}). We prove that this encoder is indeed an \textsc{$\mc{F}$-Deletion}-encoder and that the corresponding equivalence relation is DP-friendly, under the constraint that all the graphs in $\mc{F}$ are {\sl connected}. Interestingly, this phenomenon concerning the connectivity seems to be in strong connection with the fact that the \textsc{$\mc{F}$-Deletion} problem has FII if all the graphs in $\mc{F}$ are connected~\cite{BFL+09,FLMS12}, but for some families $\mc{F}$ containing disconnected graphs, \textsc{$\mc{F}$-Deletion} has not FII (see~\cite{KLP+12} for an example of such family).

We then obtain a linear kernel for the problem using two different approaches. The first one, described in Subsection~\ref{sec:Description-encoder-PlanarFDeletion},  follows the same scheme as the one used in the previous sections (Sections~\ref{sec:rDomSet} and~\ref{sec:rScatSet}),
 that is, we first find a treewidth-modulator $X$ in polynomial time, and then we use this set $X$ as input to the algorithm of Theorem~\ref{thm:protDec} to find a linear protrusion decomposition of the input graph. In order to find the treewidth-modulator $X$, we need that the input graph $G$ excludes a fixed graph $H$ as a minor.

 With our second approach, which can be found in Subsection~\ref{sec:Construction-kernel-PlanarFDeletion-H-topological-minor-free}, we obtain a linear kernel on the larger class of graphs that exclude a fixed graph $H$ as a {\sl topological} minor. We provide two variants of this second approach. One possibility is to use the randomized constant-factor approximation for \textsc{Planar-$\mc{F}$-Deletion} by Fomin \emph{et al}.~\cite{FLMS12} as treewidth-modulator, which yields a randomized linear kernel that can be found in uniform polynomial time. The second possibility consists in arguing just about the {\sl existence} of a linear protrusion decomposition in \textsc{Yes}-instances, and then greedily finding large protrusions to be reduced by the protrusion replacer given by Theorem~\ref{thm:protrusionReplacement}. This yields a deterministic linear kernel that can be found in time $n^{f(H,\mc{F})}$, where $f$ is a function depending on $H$ and $\mc{F}$.

\subsection{The encoder for \textsc{$\mc{F}$-Deletion} and the index of $\sim_{\mc{G},t}$}
\label{sec:Description-encoder-PlanarFDeletion}

In this subsection we define an encoder $\EFDall$ for \textsc{$\mc{F}$-Deletion}, and along the way we will also prove that when $\mc{G}$ is the class of graphs excluding a fixed graph on $h$ vertices as a minor, then the index of the equivalence relation $\sim_{\mc{G},t}$ is bounded by $2^{t \log t} \cdot h^t \cdot 2^{h^2}$.

Recall first that a \emph{model} of a graph $F$ in a graph $G$ is a mapping $\phi$, that assigns to every edge $e \in
E(F)$ an edge $\phi(e) \in E(G)$, and to every vertex $v \in V(F)$ a non-empty
connected subgraph $\phi(v) \subseteq G$, such that
\begin{itemize}
       \item[(i)] the graphs $\{\phi(v) \mid v \in V(F)\}$ are mutually
               vertex-disjoint and the edges $\{\phi(e) \mid e \in
E(F)\}$ are pairwise distinct;
       \item[(ii)] for $e = \{u,v\} \in E(F)$, $\phi(e)$ has one
               end-vertex in $V(\phi(u))$ and the other in
$V(\phi(v))$.
\end{itemize}
Assume first for simplicity that $\mc{F}=\{F\}$ consists of a single connected graph $F$. Following~\cite{ADF+11}, we introduce a combinatorial object called \emph{rooted packing}. These objects are originally defined for branch decompositions, but we can directly translate them to tree decompositions. Loosely speaking, rooted packings capture how ``potential
models'' of $F$ intersect the
separators that the algorithm is processing. It is worth mentioning that
the notion of rooted packing is related to the notion of \emph{folio}
introduced by Robertson and Seymour in~\cite{RS95}, but more suited to dynamic programming. See~\cite{ADF+11} for more details.

Formally, let $S_F^* \subseteq V(F)$ be a subset of the vertices of the graph $F$, and let
$S_F \subseteq S_F^*$. Given a bag $B$ of a tree decomposition $(T,\mc{X})$ of the input graph $G$, we define a {\em rooted packing} of
$B$ as a quintuple $\rp =({\cal A},S_F^*,S_F,\psi,\chi)$, where ${\cal A}$ is
a (possible empty) collection of mutually disjoint non-empty subsets of
$B$ (that is, a \emph{packing} of $B$), $\psi: {\cal A} \to S_F$ is
a surjective mapping (called the \emph{rooting}) assigning vertices of $S_F$ to the sets in ${\cal A}$, and $\chi: S_F \times S_F \to \{0,1\}$ is a binary symmetric
function between pairs of vertices in $S_F$.

The intended meaning of a rooted packing $({\cal A},S_F^*,S_F,\psi,\chi)$ is as
follows. In a given separator $B$, a packing ${\cal A}$ represents the
intersection of the connected components of the potential model with
$B$. The subsets $S_F^*,S_F \subseteq V(F)$ and the function $\chi$ indicate
that we are looking in the graph $G_B$ for a potential model of $F[S_F^*]$ containing the edges between vertices in $S_F$ given by the function $\chi$. Namely, the function $\chi$ captures which edges of $F[S_F^*]$ have been
realized so far in the processed graph. Since we allow the vertex-models intersecting $B$ to be disconnected, we need to keep track of their connected components. The subset
$S_F \subseteq S_F^*$ tells us which vertex-models intersect $B$, and the
function $\psi$ associates the sets in ${\cal A}$ with the vertices in $S_F$. We
can think of $\psi$ as a coloring that colors the subsets in ${\cal A}$ with
colors given by the vertices in $S_F$. Note that several subsets in ${\cal A}$
can have the same color $u \in S_F$, which means that the vertex-model of $u$ in
$G_B$ is not connected yet, but it may get connected in further steps of the
dynamic programming. Again, see~\cite{ADF+11} for the details.

It is proved in~\cite{ADF+11} that rooted packings allow to carry out dynamic programming in order to determine whether an input graph $G$ contains a graph $F$ as a minor. It is easy to see that the number of distinct rooted packings at a bag $B$ is upper-bounded by $f(t,F):=2^{t \log t} \cdot r^t \cdot 2^{r^2}$, where $t = \tw(G)$ and $r=|V(F)|$. In particular, this proves that when $\mc{G}$ is the class of graphs excluding a fixed graph $H$ on $h$ vertices as a minor, then the index of the equivalence relation $\sim_{\mc{G},t}$ is bounded by $2^{t \log t} \cdot h^t \cdot 2^{h^2}$.

Nevertheless, in order to solve the \textsc{$\mc{F}$-Deletion} problem, we need a more complicated data structure. The intuitive reason is that it is inherently more difficult to cover {\sl all} models of a graph $F$ with at most $k$ vertices, rather than just finding one. We define $\CFD$ as the function which maps $I \subseteq \{1,\dots, t\}$ to a subspace of $\{0,1\}^{f(|I|,F)}$. That is, each $\CFD$-encoding $R \in \C(I)$ is a vector of $f(|I|,\mc{F})$ bits, which when interpreted as the tables of a dynamic programming algorithm at a given bag $B$ such that $\Lambda(G_B)=I$, prescribes which rooted packings exist in the graph $G_B$ once the corresponding vertices of the desired solution to \textsc{$\mc{F}$-Deletion} have been removed. More precisely, the language $\LFD$ contains the triples $(G,S,R)$ (recall from Definition~\ref{def:encoder} that here $G$ is a boundaried graph with $\Lambda(G) \subseteq I$, $S \subseteq V(G)$, and $R \in \mc{C}(I)$) such that the graph $G-S$ contains precisely the rooted packings prescribed by $R$ (namely, those whose corresponding bit equals 1 in $R$), and such that the graph $G-(\partial G \cup S)$ does {\sl not} contain $F$ as a minor.

When the family $\mc{F}=\{F_1,\ldots,F_{\ell}\}$ may contain more than one graph, let $f(t,\mc{F})= \sum_{i=1}^{\ell} f(t,F_i)$, and we define $\CFD$ as the function which maps $I \subseteq \{1,\dots t\}$ to a subspace of $\{0,1\}^{f(|I|,\mc{F})}$. The language is defined $\LFD$ is defined accordingly, that is, such that the graph  $G-S$ contains precisely the rooted packings of $F_i$ prescribed by $R$, for each $1 \leq i \leq \ell$, and such that the graph $G-(\partial G \cup S)$ does {\sl not} contain any of the graphs in $\mc{F}$ as a minor. By definition of $\EFD$, it clearly holds that
\begin{equation}\label{eq:sizeEncoderFDeletion}
s_{\EFD}(t)\ \leq\ 2^{f(t,F_1)} \cdot 2^{f(t,F_2)} \cdots 2^{f(t,F_{\ell})}\ =\ 2^{f(t,\mc{F})}.
\end{equation}

\noindent Assume henceforth that all graphs in the family $\mc{F}$ are {\sl connected}. This assumption is crucial because for a connected graph $F \in \mc{F}$ and a potential solution $S$, as the graph $G-(\partial G \cup S)$ does not contain $F$ as a minor, we can assume that the packing $\mc{A}$ corresponding to a potential model of $F$ rooted at $\partial G \setminus S$ is {\sl nonempty}. Indeed, as $F$ is connected, a rooted packing which does {\sl not} intersect $\partial G \setminus S$ can never be extended to a (complete) model of $F$ in $G \oplus K$ for any $t$-boundaried graph $K$.  Therefore, we can directly discard these empty rooted packings. We will use this property in the proof of Lemma~\ref{lem:encoderFdeletion} below. Note that this assumption is not safe if $F$ contains more than one connected component. As mentioned before, this phenomenon seems to be in strong connection with the fact that the \textsc{$\mc{F}$-Deletion} problem has FII if all the graphs in $\mc{F}$ are connected~\cite{BFL+09,FLMS12}, but for some families $\mc{F}$ containing disconnected graphs, \textsc{$\mc{F}$-Deletion} has not FII. % (see~\cite{KLP+12} for an example of such family).

\begin{lemma}\label{lem:encoderFdeletion}
The encoder $\EFD$ is a \textsc{Connected-$\mc{F}$-Deletion}-encoder. Furthermore, if $\mc{G}$ is an arbitrary class of graphs and $g(t)=t$, then the equivalence relation $\sim^*_{\EFD,g,t,\mc{G}}$ is DP-friendly.
\end{lemma}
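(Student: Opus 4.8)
The plan is to follow closely the scheme of the proof of Lemma~\ref{lem:rDomCharacterizes}. That $\EFD$ is a \textsc{Connected-$\mc{F}$-Deletion}-encoder is immediate from Definition~\ref{def:Piencoder}(i): when $I=\emptyset$ the boundary is empty, so no non-empty packing exists at the boundary bag and --- \emph{because every graph in $\mc{F}$ is connected} --- the empty rooted packings are discarded; hence $\CFD(\emptyset)$ reduces to a single encoding $R_\emptyset$, and by construction $(G,S,R_\emptyset)\in\LFD$ exactly when $G-(\partial(G)\cup S)=G-S$ excludes every $F\in\mc{F}$ as a minor, i.e. $(G,S)$ certifies a solution of \textsc{Connected-$\mc{F}$-Deletion}. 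Next I would establish a confinement statement for $g(t)=t$: as in the running example for \textsc{Dominating Set}, for a $t$-boundaried graph $G$ with $\Lambda(G)=I$ and any partial solution $S$ satisfying $R\in\CFD(I)$, the set $S\cup\partial(G)$ satisfies the canonical encoding $R^{\emptyset}$ prescribing the empty collection of rooted packings (again well-defined only thanks to connectivity), so $f_G^{\EFD}(R^{\emptyset})\le f_G^{\EFD}(R)+|\partial(G)|\le f_G^{\EFD}(R)+t$, while $R^{\emptyset}$ attains the maximum of $f_G^{\EFD}$ over feasible encodings; this yields $g$-confinement with $g(t)=t$, hence $f_G^{\EFD}=f_G^{\EFD,g}$. (If this strict confinement turned out to fail, the alternative is to work directly with the truncated function $f_G^{\EFD,g}$ and prove, in the style of the auxiliary Fact in the proof of Lemma~\ref{lem:encoderrScatSet}, that the encoding $R_B$ built below stays within the truncation window; the rest of the argument is unchanged.)

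For the DP-friendliness of $\sim^*_{\EFD,g,t,\mc{G}}$, fix $G\in\mc{B}_t$ with $\partial(G)=A$, a separator $B\subseteq V(G)$ with $|B|\le t$, a union $G_B$ of connected components of $G-B$ with $A\cap V(G_B)\subseteq B$, and let $H$ be the $t$-boundaried graph induced by $V(G)\setminus(V(G_B)\setminus B)$ with boundary $B$; take $G_B'$ with $G_B\sim^*_{\EFD,g,t,\mc{G}}G_B'$ and set $G':=H\oplus G_B'$ with boundary $A$ (exactly as in Fig.~\ref{fig:unglueGlue}). By Fact~\ref{fact:DPfriendly} it suffices to prove $G\sim^*_{\EFD,g,t}G'$ with $\Delta_{\EFD,g,t}(G,G')=\Delta_{\EFD,g,t}(G_B,G_B')$, that is $f_G^{\EFD}(R_A)=f_{G'}^{\EFD}(R_A)+\Delta_{\EFD,g,t}(G_B,G_B')$ for every $R_A\in\CFD(\Lambda(G))$. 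Assume first $f_G^{\EFD}(R_A)<+\infty$ and let $S=D\cup D_H$ be a minimum partial solution satisfying $R_A$, with $D\subseteq V(G_B)$ and $D_H\subseteq V(H)\setminus B$ (so $S\cap B\subseteq D$). Let $R_B\in\CFD(\Lambda(G_B))$ be the encoding induced by $D$ on $G_B$, i.e. the one prescribing precisely the rooted packings of $G_B-D$. Then $(G_B,D,R_B)\in\LFD$: the rooted-packing part holds by definition, and $G_B-(B\cup D)$ is a subgraph of $G-(A\cup S)$ --- which is $\mc{F}$-minor-free since $S$ satisfies $R_A$ --- hence $\mc{F}$-minor-free as well. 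Note also that $R_B$ records which boundary vertices lie in $D$ (a vertex $v\in B$ survives in $G_B-D$ iff $v\notin D$, and survival is witnessed by the trivial rooted packing at $v$), so any $D'$ satisfying $R_B$ has $D'\cap B=D\cap B$. Since $G_B\sim_{\EFD,g,t}G_B'$, pick a minimum $D'\subseteq V(G_B')$ satisfying $R_B$, with $|D'|=f_{G_B}^{\EFD}(R_B)+\Delta_{\EFD,g,t}(G_B,G_B')\le|D|+\Delta_{\EFD,g,t}(G_B,G_B')$.

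The heart of the argument --- and the step I expect to be the main obstacle --- is to show that $S':=D'\cup D_H$ satisfies $R_A$ in $G'$; this is essentially the correctness of the rooted-packing dynamic programming of Adler \emph{et al}.~\cite{ADF+11}. First, the rooted packings of $G'-S'=(G_B'-D')\oplus(H-D_H)$ at the separator $A$ are obtained by the standard composition of the rooted packings of $G_B'-D'$ at $B$ with the (unchanged) boundaried graph $H-D_H$; since $G_B'-D'$ and $G_B-D$ carry the same encoding $R_B$, this composition yields exactly the rooted packings of $G-S$ at $A$, namely $R_A$. Second, $G'-(A\cup S')=(G_B'-D')\oplus(H-(A\cup D_H))$ is $\mc{F}$-minor-free: a model of a \emph{connected} $F\in\mc{F}$ in it either avoids $B$, and then --- since deleting $B$ disconnects the graph --- it lies entirely in $G_B'-(B\cup D')$ or entirely in $(H-B)-(A\cup D_H)$, both $\mc{F}$-minor-free (the former because $(G_B',D',R_B)\in\LFD$, the latter as a subgraph of $G-(A\cup S)$); or it meets $B$, and then, being connected, it induces a non-empty rooted packing present in $G_B'-D'$, hence in $G_B-D$, which can be completed on the $H$-side in exactly the same way, producing a model of $F$ in $G-(A\cup S)$ --- a contradiction. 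Hence $(G',S',R_A)\in\LFD$ and $f_{G'}^{\EFD}(R_A)\le|S'|\le f_G^{\EFD}(R_A)+\Delta_{\EFD,g,t}(G_B,G_B')$; exchanging the roles of $G_B$ and $G_B'$ (and using $\Delta_{\EFD,g,t}(G_B',G_B)=-\Delta_{\EFD,g,t}(G_B,G_B')$) gives the reverse inequality, so equality holds, and the same construction shows that $f_G^{\EFD}(R_A)=+\infty$ forces $f_{G'}^{\EFD}(R_A)=+\infty$. The reason connectivity of $\mc{F}$ is indispensable --- and why the statement would fail for general $\mc{F}$ --- is precisely that it lets us discard empty rooted packings (so that $R^{\emptyset}$ exists and the DP tables are finite and faithful): for a disconnected $F$ an entire connected component of a model could sit inside $V(G_B)\setminus B$ without being seen by any rooted packing at $B$, so $G_B-D$ and $G_B'-D'$ could agree on all rooted packings and yet differ in whether they contain $F$ as a minor, which would break both conclusions above.
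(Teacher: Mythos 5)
Your proof is correct and follows the same scheme as the paper's (encoder check, $g$-confinement via the all-zeros encoding exploiting connectivity of $\mc{F}$, then DP-friendliness in the setting of Fig.~\ref{fig:unglueGlue}); you are in fact more explicit than the paper in the DP-friendliness step, spelling out the construction of $R_B$, the observation that $R_B$ determines $D\cap B$, and the two-case analysis of a model of a connected $F$ in $G'-(A\cup S')$, whereas the paper defers the correctness of the rooted-packing composition to Adler \emph{et al}.~\cite{ADF+11}. The one place you are slightly less careful than the paper is the assertion that $R^{\emptyset}$ attains the maximum of $f_G^{\EFD}$: the paper justifies this via the explicit monotonicity property of Equation~(\ref{Fdeletion:monotone}), which you do not state, though your parenthetical fallback to the truncated function $f_G^{\EFD,g}$ would cover this anyway.
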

\begin{proof} The fact that $\EFDall$
is a \textsc{Connected-$\mc{F}$-Deletion}-encoder follows easily from the above discussion, as if $G$ is a $0$-boundaried graph, then $\CFD(\emptyset)$ consists of a single $\CFD$-encoding $R_{\emptyset}$, and $(G,S,R_{\emptyset}) \in \LFD$ if and only if the graph $G - S$ contains none of the graphs in $\mc{F}$ as a minor.
It remains to prove that the equivalence relation $\sim^*_{\EFD,g,t,\mc{G}}$ is DP-friendly for $g(t)=t$.

The proof is similar to the proofs for \textsc{$r$-Dominating Set} and \textsc{$r$-Scattered Set}, so we will omit some details. As in the proof of Lemma~\ref{lem:rDomCharacterizes}, we start by proving that the encoder $\EFD$ for \textsc{Connected-$\mc{F}$-Deletion} is $g$-confined for the identity function $g(t)=t$. Similarly to the encoder we presented for \textsc{$r$-Dominating Set}, $\EFDall$ has the following monotonicity property. For $R_1,R_2 \in \mc{C}_{\mc{F}}(I)$ such that  $f_{G}^{\CFD}(R_1) < \infty$ and $f_{G}^{\CFD}(R_2) < \infty$,
\begin{equation}\label{Fdeletion:monotone}
 \text{if }  R_1^{-1}(0) \subseteq  R_2^{-1}(0),\ \text{ then }\ f_{G}^{\CFD}(R_1) \leq f_{G}^{\CFD}(R_2),
\end{equation}
where for $i \in \{ 1,2\}$, $R_i^{-1}(0)$ denotes the set of  rooted packings whose corresponding bit equals $0$ in $R_i$. Indeed, Equation~(\ref{Fdeletion:monotone}) holds because any solution $S$ in $G$ that covers all the rooted packings forbidden by $R_2$ also covers those forbidden by $R_1$ (as by hypothesis $R_1^{-1}(0) \subseteq  R_2^{-1}(0)$), so it holds that $f_{G}^{\CFD}(R_1) \leq f_{G}^{\CFD}(R_2)$. Let $R_0=\{0,0,\ldots,0\}$ be the $\CFD(I)$-encoding will all the bits set to $0$. The key observation is that, since  each graph in $\mc{F}$ is \emph{connected}, by the discussion above the lemma we can assume that each packing $\mc{A}$ in a rooted packing is nonempty. This implies that if $R \in \CFD(I)$ such that  $(G,S,R) \in \LFD$ for some set $S \subseteq V(G)$, then $(G,S \cup \partial G,R_0) \in L_{\mc{C}_{\mc{F}}}$. In other words, any solution $S$ for an arbitrary $\CFD$-encoder $R$ can be transformed into a solution for $R_0$ by adding a set of vertices of size at most $|\partial(G)| \leq t$. As by Equation~(\ref{Fdeletion:monotone}), for any $\CFD$-encoding $R$ with $f_{G}^{\CFD}(R) < \infty$, it holds that $f_{G}^{\CFD}(R) \leq f_{G}^{\CFD}(R_0)$, it follows that for any graph $G$ with $\Lambda(G)=I$, %\vspace{-.2cm}
$$
\max_{R \in \mathcal{C}_{\mc{F}\textsc{D},G}^{*}(I)}f_{G}^{\EFD}(R)\  -\   \min_{R \in \mathcal{C}_{\mc{F}\textsc{D},G}^{*}(I)}f_{G}^{\EFD}(R)  \  \leq \ t,\  \  \ \text{as we wanted to prove.}$$%\vspace{-.9cm}$$

Once we have that $\EFDall$ is $g$-confined, the proof goes along the same lines of that of Lemma~\ref{lem:rDomCharacterizes}. That is, the objective is to show that, in the setting depicted in Fig.~\ref{fig:unglueGlue}, $G \sim^*_{\EFD,g,t, \mc{G}} G'$ (due to Fact~\ref{fact:DPfriendly}) and  $\Delta_{\EFD,g,t}(G,G') =  \Delta_{\EFD,g,t}(G_B,G_B')$. Due to the $g$-confinement, it suffices to prove that $ f_G^{\EFD} (R_A) = f_{G'}^{\EFD} (R_A) + \Delta_{\EFD,g,t}(G_B,G_B')$ for all $R_A \in \CFD(\Lambda(G))$. Since $G_B \sim^*_{\EFD,g,t} G_B'$, the definition of $\EFD$ it implies that the graphs $G_B$ and $G_B'$ contain exactly the same set of rooted packings, so their behavior with respect to $H$ (see Fig.~\ref{fig:unglueGlue}) in terms of the existence of models of graphs in $\mc{F}$ is exactly the same. For more details, it is proved in~\cite{ADF+11} that using the  encoder $\EFDall$, the tables of a given bag in a tree- or branch-decomposition can indeed be computed from the tables of their children. Therefore, we have that $G \sim^*_{\EFD,g,t, \mc{G}} G'$.  Finally, the fact that $f_G^{\EFD} (R_A) = f_{G'}^{\EFD} (R_A) + \Delta_{\EFD,g,t}(G_B,G_B')$ can be easily proved by noting that any set $S \in V(G)$ satisfying $R_A$ can be transformed into a set $S' \in V(G')$ satisfying $R_A$ such that $|S'| \leq |S| - \Delta_{\EFD,g,t}(G_B,G_B')$ (by just replacing $S \cap V(G_B)$ with the corresponding set of vertices in $V(G_B')$, using that $G_B \sim^*_{\EFD,g,t} G_B'$), and vice versa.
\end{proof}

%\newpage

\subsection{Construction of the kernel on $H$-minor-free graphs}
\label{sec:Construction-kernel-PlanarFDeletion-H-minor-free}

The objective of this subsection is to prove the following theorem.

\begin{theorem}\label{thm:KernelFDeletion}
Let $\mc{F}$ be a finite set of connected graphs containing at least one $r$-vertex planar graph $F$, let $H$ be an $h$-vertex graph, and let $\textsc{CP}\mc{F}\textsc{D}_H$ be the restriction of the \textsc{Connected-Planar-$\mc{F}$-Deletion} problem to input graphs which exclude $H$ as a minor.  Then $\textsc{CP}\mc{F}\textsc{D}_H$ admits a constructive linear kernel of size at most  $f(r,h)\cdot k$, where $f$ is an explicit function depending only on $r$ and $h$, defined in Equation~(\ref{eq:kernelFDeletion}) below.
\end{theorem}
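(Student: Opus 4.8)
The strategy is to follow the same recipe as in the proofs of Theorems~\ref{thm:KernelrDomSet} and~\ref{thm:KernelrScatSet}: first obtain a treewidth-modulator $X$ of linear size, then feed it to Theorem~\ref{thm:protDec} to get a linear protrusion decomposition, and finally apply Corollary~\ref{corollary:withProtrusionDecomposition} with the encoder $\EFDall$. The first step, however, cannot reuse Proposition~\ref{thm:protDecrDOMSet}, since \textsc{Connected-Planar-$\mc{F}$-Deletion} is not a bidimensional problem in the same sense as \textsc{$r$-Dominating Set}. Instead I would invoke the fact that, since $\mc{F}$ contains an $r$-vertex planar graph $F$, every $F$-minor-free graph has treewidth bounded by a constant $c_F$ depending only on $r$ (by the Excluded Grid Theorem / Robertson--Seymour, or more precisely by Proposition~\ref{prop:tw-minor} applied with $H=F$: a graph of treewidth at least $f_m(r)\cdot r$ contains an $(r\times r)$-grid, hence $F$, as a minor). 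Therefore $\tw(G-S)\le c_F$ for any solution $S$, i.e.\ every \textsc{Yes}-instance admits a treewidth-modulator of size at most $k$. To \emph{find} such a set in polynomial time on $H$-minor-free graphs, I would use the known constant-factor approximation for \textsc{Planar-$\mc{F}$-Deletion} (e.g.\ the single-exponential FPT-approximation of Fomin \emph{et al.}~\cite{FLMS12}, or the polynomial-time constant-factor approximation that holds on $H$-minor-free graphs): this yields in polynomial time a set $X\subseteq V(G)$ with $|X|=O(k)$ (the constant depending on $r$ and $h$) and $\tw(G-X)\le c_F$, or a correct report that $(G,k)$ is a \textsc{No}-instance.

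\textbf{Applying the protrusion machinery.} Given $X$ with $|X|=\alpha_1\cdot k$ and $\tw(G-X)\le t_0$ where $t_0=O(c_F)$, Theorem~\ref{thm:protDec} (with $c=\alpha_1$, $t=t_0$, and the $h$-vertex excluded minor $H$) produces in time $O(n)$ an $((\alpha_H\cdot t_0\cdot\alpha_1)\cdot k,\ 2t_0+h)$-protrusion decomposition of $G$, where $\alpha_H\le 40h^2 2^{5h\log h}$. Set $t:=2t_0+h$ and $\alpha:=\alpha_H\cdot t_0\cdot\alpha_1$. By Lemma~\ref{lem:encoderFdeletion}, $\EFD$ is a \textsc{Connected-$\mc{F}$-Deletion}-encoder and, with $g(t)=t$, the equivalence relation $\sim^*_{\EFD,g,t,\mc{G}}$ is DP-friendly, where $\mc{G}$ is the class of $H$-minor-free graphs. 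We also need the bound on $s_{\EFD}(t)$: by Equation~(\ref{eq:sizeEncoderFDeletion}), $s_{\EFD}(t)\le 2^{f(t,\mc{F})}$ where $f(t,\mc{F})=\sum_{i}2^{t\log t}\cdot r_i^t\cdot 2^{r_i^2}$, which is a fixed function of $r$ once $t$ is fixed. Finally, since $\mc{G}$ is the class of $H$-minor-free graphs, the last paragraph of Subsection~\ref{sec:explicitprotrusionreplacer} gives $r_{\mc{G},t}\le 2^{t\log t}\cdot h^t\cdot 2^{h^2}$, so all the quantities entering $b(\EFD,g,t,\mc{G})$ in Lemma~\ref{lem:finiteSize} are explicit functions of $r$ and $h$. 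Now Corollary~\ref{corollary:withProtrusionDecomposition} yields, in polynomial time, a linear kernel of size at most
\begin{equation}\label{eq:kernelFDeletion}
\bigl(1+b(\EFD,g,t,\mc{G})\bigr)\cdot\alpha_H\cdot t_0\cdot\alpha_1\cdot k\ ,
\end{equation}
where $t_0=O(c_F)$ with $c_F\le f_m(r)\cdot r$, $\alpha_1$ is the approximation ratio constant, and $b(\EFD,g,t,\mc{G})=2^{r(\EFD,g,t,\mc{G})+1}\cdot t$ with $r(\EFD,g,t,\mc{G})=(g(t)+2)^{s_{\EFD}(t)}\cdot 2^t\cdot r_{\mc{G},t}$. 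All factors are explicit functions of $r$ and $h$, so we may write the bound as $f(r,h)\cdot k$, proving the theorem.

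\textbf{Main obstacle.} The delicate point is producing the treewidth-modulator $X$ of size $O(k)$ in polynomial time \emph{on $H$-minor-free graphs} and making the dependence on $r$ and $h$ explicit: unlike the apex-minor-free bidimensionality argument used for \textsc{$r$-Dominating Set} (Proposition~\ref{thm:protDecrDOMSet}), here one must rely on an approximation algorithm for \textsc{Planar-$\mc{F}$-Deletion}, and one has to be careful that (i) a constant-factor approximation exists and runs in (uniform) polynomial time in this setting, and (ii) its ratio, together with the treewidth bound $c_F\le f_m(r)\cdot r$, is tracked explicitly through to Equation~(\ref{eq:kernelFDeletion}). A secondary subtlety is that Theorem~\ref{thm:protrusionReplacement} and Corollary~\ref{corollary:withProtrusionDecomposition} require the replacement to preserve the class $\mc{G}$ of $H$-minor-free graphs; this is handled automatically because $\sim^*_{\EFD,g,t,\mc{G}}$ already refines $\sim_{\mc{G},t}$ by definition, and $r_{\mc{G},t}$ is finite and explicitly bounded as recalled above. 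The remaining computations are routine substitutions into the bounds of Lemmas~\ref{lem:finite} and~\ref{lem:finiteSize}.
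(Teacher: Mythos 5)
Your overall scheme — treewidth-modulator, Theorem~\ref{thm:protDec}, encoder $\EFDall$, then Corollary~\ref{corollary:withProtrusionDecomposition} — is exactly right, and your remark that $\tw(G-X)$ can be bounded by the Excluded Grid Theorem because $G-X$ is $F$-minor-free is a sound (and in fact tighter) alternative to the paper's treewidth bound. But the proof has a genuine gap precisely where you flag the ``main obstacle'': you never produce a \emph{deterministic} polynomial-time algorithm that outputs the treewidth-modulator $X$ of size $O(k)$ on $H$-minor-free graphs. The two options you offer are either the randomized constant-factor approximation of Fomin \emph{et al.}~\cite{FLMS12} — which yields a randomized kernel and corresponds to the paper's Theorem~\ref{thm:KernelFDeletionRandomized} on the (larger) class of $H$-topological-minor-free graphs, not to Theorem~\ref{thm:KernelFDeletion} — or an unnamed, uncited ``polynomial-time constant-factor approximation that holds on $H$-minor-free graphs,'' which you do not exhibit. (Indeed, the paper explicitly notes in Section~\ref{sec:further} that a deterministic constant-factor approximation on $H$-topological-minor-free graphs is \emph{not} known.)

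The missing ingredient is that \textsc{Planar-$\mc{F}$-Deletion} \emph{is} bidimensional — minor-bidimensional, not contraction-bidimensional as \textsc{$r$-Dominating Set} is — and separable, so the machinery of Fomin \emph{et al.}~\cite{FLST10} (specifically Lemma~3.3 there, together with the algorithmic version in~\cite[Lemma~3.2]{FominLRS11bidi}) applies on $H$-minor-free graphs. This is exactly what the paper packages as Proposition~\ref{thm:protDecFDeletion}: on a Yes-instance it deterministically finds $X$ with $|X|=O(r\cdot f_m(h)\cdot k)$ and $\tw(G-X)=O(r\cdot(f_m(h))^2)$ in polynomial time, or correctly reports a No-instance. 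Your opening claim that the problem ``is not a bidimensional problem in the same sense'' led you to discard the tool you actually needed; switching to Proposition~\ref{prop:tw-minor} in its minor-bidimensionality guise closes the gap and makes the rest of your proof (which mirrors the paper's application of Theorem~\ref{thm:protDec}, Lemma~\ref{lem:encoderFdeletion}, and Corollary~\ref{corollary:withProtrusionDecomposition}) go through.
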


Similarly to the strategy that we presented in Subsection~\ref{sec:Construction-kernel-rDomSet} for \textsc{$r$-Dominating Set}, in order to construct a linear kernel for \textsc{Connected-Planar-$\mc{F}$-Deletion} when the input graph excludes a fixed graph $H$ as a minor, we use the fact that this problem satisfies the  minor-bidimensionality and separability conditions
required in order to apply the results of Fomin \emph{et al}.~\cite{FLST10}. Namely, in the following proposition we specify  the result in~\cite[Lemma~3.3]{FLST10}
for the case of \textsc{Planar-$\mc{F}$-Deletion} while making visible the dependance on $r$ and the size $h$ of the excluded graph. Again, the polynomial-time algorithm  follows from Fomin \emph{et al}.~\cite[Lemma 3.2]{FominLRS11bidi}.

\begin{proposition}\label{thm:protDecFDeletion}
Let $\mc{F}$ be a finite set of graphs containing at least one $r$-vertex planar graph $F$, let $H$ be an $h$-vertex graph, and let $\textsc{P}\mc{F}\textsc{D}_H$ be the restriction of the \textsc{Planar-$\mc{F}$-Deletion} problem to input graphs which exclude $H$ as a minor. If $(G,k) \in \textsc{P}\mc{F}\textsc{D}_H$, then there exists a set $X\subseteq V(G)$
 such that {$|X|=O(r \cdot f_{m}(h) \cdot k)$}
 and ${\bf tw}(G-X) = O(r \cdot (f_{m}(h))^{2})$, where $f_{m}$ is the function in Proposition~\ref{prop:tw-minor}.
Moreover,  given an instance  $(G,k)$ of $\textsc{P}\mc{F}\textsc{D}_H$, there is a polynomial-time algorithm  that either finds such a set $X$ or correctly reports that $(G,k)$ is a \textsc{No}-instance.

%
%
%
%
% $|X|=r \cdot 2^{O(h^2)} \cdot k$
%and ${\bf tw}(G-X)=r \cdot 2^{O(h^2)}$.
%Moreover, given an instance  $(G,k)$ of $\textsc{P}\mc{F}\textsc{D}_H$ with $|V(G)|=n$, there is an algorithm running in time $O(n^{3})$ that either finds such a set $X$ or correctly reports that $(G,k)$ is a \textsc{No}-instance.
\end{proposition}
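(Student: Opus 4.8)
The plan is to specialize the bidimensionality-based kernelization machinery of Fomin \emph{et al.}~\cite{FLST10} to \textsc{Planar-$\mc{F}$-Deletion} and to track how the constants depend on $r=|V(F)|$ and $h=|V(H)|$. Three ingredients are needed: (i) that \textsc{Planar-$\mc{F}$-Deletion} is minor-bidimensional, with a bidimensionality constant controlled by $r$; (ii) that it is separable, with a separability constant that is absolute; and (iii) that the treewidth-modulator produced existentially by~\cite[Lemma~3.3]{FLST10} can be computed in polynomial time, which is exactly~\cite[Lemma~3.2]{FominLRS11bidi}, relying on the polynomial-time treewidth approximation of Feige \emph{et al.}~\cite{FeigeHL08impr}.

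For (i), I would use that every planar graph on $r$ vertices is a minor of the $(q\times q)$-grid with $q=O(r)$. Hence the $(\ell\times\ell)$-grid contains $\Theta((\ell/q)^2)$ pairwise vertex-disjoint $(q\times q)$-subgrids, each containing $F$ as a minor, so any $S$ such that the grid minus $S$ is $F$-minor-free satisfies $|S|=\Omega((\ell/r)^2)$; since \textsc{$\mc{F}$-Deletion} is minor-monotone (a routine check: deleting a vertex of $S$ only decreases the optimum, while deleting an edge or a vertex outside $S$, or contracting an edge, keeps $S$ or a trivially adjusted set feasible because any minor of the resulting graph minus the solution is already a minor of $G-S$) and contracting an edge of a grid changes this optimum by at most an additive constant, minor-bidimensionality follows. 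For (ii), separability holds for every vertex-deletion problem: for a set $W$ with boundary $\mathbf{bd}(W)$ of size $s$, an optimal solution of $G$ restricted to $W$ together with $\mathbf{bd}(W)$ is feasible for $G[W]$, and conversely a solution for $G[W]$ together with $\mathbf{bd}(W)$ glues with one for $G-(W\setminus\mathbf{bd}(W))$, so local optima add up to within $O(s)$, independently of $r$ and $k$.

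For (iii), the structural core is the following: if $(G,k)$ is a \textsc{Yes}-instance, fix an optimal solution $S^*$ with $|S^*|\le k$; then $G-S^*$ is $F$-minor-free, hence has no $(q\times q)$-grid minor with $q=O(r)$, hence by Proposition~\ref{prop:tw-minor} satisfies $\tw(G-S^*)<f_m(h)\cdot q=O(r\cdot f_m(h))$. Thus $S^*$ is already a treewidth-modulator of size at most $k$ of ``width'' $O(r\cdot f_m(h))$, which is the existential content of~\cite[Lemma~3.3]{FLST10} in our setting. To obtain such a set in polynomial time I would invoke~\cite[Lemma~3.2]{FominLRS11bidi}: that algorithm repeatedly computes an approximate tree decomposition with~\cite{FeigeHL08impr}, recurses on its pieces, and --- using minor-bidimensionality together with Proposition~\ref{prop:tw-minor} --- declares a \textsc{No}-instance as soon as it meets a region whose treewidth exceeds $c\cdot r\cdot f_m(h)\cdot\sqrt{k}$ for a suitable constant $c$; the approximation loss and the recursion inflate the modulator to size $O(r\cdot f_m(h)\cdot k)$ and its width to $O(r\cdot (f_m(h))^2)$. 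The main obstacle is precisely this last bookkeeping: tracing that $f_m$ enters once through the bound $\tw(G-S^*)=O(r\cdot f_m(h))$ on the $F$-minor-free graph $G-S^*$ and a second time through the approximation-plus-recursion step of~\cite{FominLRS11bidi}, that the fact ``an $r$-vertex planar graph is a minor of an $O(r)\times O(r)$ grid'' is the sole source of the linear factor in $r$, and that separability contributes no further $r$-dependence; the remainder is a direct instantiation of~\cite[Lemma~3.3]{FLST10}.
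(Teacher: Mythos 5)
Your proposal matches the paper's own (very thin) proof, which simply instantiates~\cite[Lemma~3.3]{FLST10} for the structural part (minor-bidimensionality plus separability yield a linear-size treewidth-modulator) and~\cite[Lemma~3.2]{FominLRS11bidi} for the polynomial-time computation via the treewidth approximation of~\cite{FeigeHL08impr}; the extra detail you supply on why the constants scale as $O(r\cdot f_m(h))$ is consistent with how the paper tracks the dependence on $r$ and $h$.

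One caveat worth flagging: your assertion that ``separability holds for every vertex-deletion problem'' is too strong. Your gluing step --- that a solution on $G[W]$, a solution on $G-(W\setminus\mathbf{bd}(W))$, and $\mathbf{bd}(W)$ together form a solution on $G$ --- relies on the residual class ($\mc{F}$-minor-free graphs) being closed under disjoint union, which holds precisely when every graph in $\mc{F}$ is connected; if some $F\in\mc{F}$ is disconnected, the two sides may each be $F$-minor-free while their disjoint union is not. The proposition as stated allows arbitrary $\mc{F}$, so this is arguably a latent imprecision in the paper as well (the surrounding prose already speaks of \textsc{Connected-Planar-$\mc{F}$-Deletion}, and the only application, Theorem~\ref{thm:KernelFDeletion}, assumes connectivity), but your proof as written would not extend to the disconnected case.
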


We are ready to present a linear kernel for \textsc{Connected-Planar-$\mc{F}$-Deletion} when the input graph excludes a fixed graph $H$ as a minor.\vspace{.35cm}

%\ig{$\bigstar$ Update the constants with the function $f_m$? $\bigstar$}

\begin{proofDeletion}
The proof is very similar to the one of Theorem~\ref{thm:KernelrDomSet}. Given an instance $(G,k)$, we run the polynomial-time algorithm given by Proposition~\ref{thm:protDecFDeletion} to either conclude that $(G,k)$ is a \textsc{No}-instance or to find a set $X\subseteq V(G)$
such that $|X|=O(r \cdot f_{m}(h) \cdot k)$
and ${\bf tw}(G-X) = O(r \cdot (f_{m}(h))^{2})$. In the latter case, we use the set $X$ as input to the algorithm given by Theorem~\ref{thm:protDec}, which outputs in linear time a
$(r^2 \cdot 2^{O(h \log h)}\cdot (f_{m}(h))^{3} \cdot k, O(r \cdot (f_{m}(h))^{2}))$-protrusion decomposition of $G$. We now consider the encoder  $\EFDall$  defined in Subsection~\ref{sec:Description-encoder-PlanarFDeletion}. By Lemma~\ref{lem:encoderFdeletion}, $\EFD$ is a $\textsc{CP}\mc{F}\textsc{D}_H$-encoder and $\sim^*_{\EFD,g,t,\mc{G}}$ is DP-friendly, where $g(t)=t$ and $\mathcal{G}$ is the class of $H$-minor-free graphs. An upper bound on $s_{\EFD}(t)$ is given in Equation~(\ref{eq:sizeEncoderFDeletion}). Therefore, we are in position to apply Corollary~\ref{corollary:withProtrusionDecomposition} and obtain a linear kernel for $\textsc{CP}\mc{F}\textsc{D}_H$ of size at most
\begin{equation}\label{eq:kernelFDeletion}
r^2 \cdot 2^{O(h \log h)}\cdot (f_{m}(h))^{3} \cdot b\left(\EFD,g, O(r \cdot (f_{m}(h))^{2}) ,\mathcal{G}\right) \cdot k \ ,
\end{equation}
where $b\left(\EFD,g, O(r \cdot (f_{m}(h))^{2}) ,\mathcal{G}\right)$ is the function defined in Lemma~\ref{lem:finiteSize}. %and $\mathcal{G}$ is the class of $H$-minor-free graphs.
\end{proofDeletion}

%\vspace{-.25cm}

\subsection{Linear kernels on $H$-topological-minor-free graphs}
\label{sec:Construction-kernel-PlanarFDeletion-H-topological-minor-free}

In this subsection we explain how to obtain linear kernels for \textsc{Planar-$\mc{F}$-Deletion} on graphs excluding a topological minor. We first describe a uniform randomized kernel and then a nonuniform deterministic one. We would like to note that in the case that $\mc{G}$ is the class of graphs excluding a fixed $h$-vertex graph $H$ as a topological minor, by using a slight variation of the rooted packings described in Subsection~\ref{sec:Description-encoder-PlanarFDeletion} it can be proved, using standard dynamic techniques, that the index of the equivalence relation $\sim_{\mc{G},t}$ is also upper-bounded by $2^{t \log t} \cdot h^t \cdot 2^{h^2}$.

Before presenting the uniform randomized kernel, we need the following two results.

\begin{theorem}[Fomin \emph{et al}.~\cite{FLMS12}]\label{thm:approx}
The optimization version of the \textsc{Planar-$\mc{F}$-Deletion} problem admits a randomized constant-factor approximation.
\end{theorem}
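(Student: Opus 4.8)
The plan is to reduce the problem to approximating a \emph{bounded-width modulator} and then to solve \textsc{$\mc{F}$-Deletion} exactly on the resulting low-treewidth graph. Since $\mc{F}$ contains a planar graph $F$, the Excluded Grid Theorem of Robertson and Seymour gives a constant $\eta=\eta(\mc{F})$ such that every $\mc{F}$-minor-free graph has treewidth at most $\eta$. Hence every $\mc{F}$-deletion set of a graph $G$ is in particular a set $X$ with $\tw(G-X)\le \eta$; conversely, given any such $X$, the graph $G-X$ has bounded treewidth, so an optimum $\mc{F}$-deletion set of $G-X$ can be computed in time $f(\eta,\mc{F})\cdot n$ by dynamic programming (for instance via the encoder of Subsection~\ref{sec:Description-encoder-PlanarFDeletion}, or via Courcelle's theorem), and $X$ together with this set is an $\mc{F}$-deletion set of $G$ of size at most $|X|+\mathrm{OPT}(G)$, where $\mathrm{OPT}(G)$ is the optimum. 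It therefore suffices to compute, in randomized polynomial time, a set $X$ with $\tw(G-X)\le \eta$ and $|X|=O(\mathrm{OPT}(G))$.

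To build such an $X$, I would run a recursive scheme driven by the treewidth of the current graph. If $\tw(G)$ does not exceed a suitable constant $c_0=c_0(\eta)$, then $X=\emptyset$ already works. Otherwise $\tw(G)$ is large and the structural consequences of large treewidth come into play: either $G$ has a huge clique minor --- which, since $F$ has a bounded number of vertices, already contains many vertex-disjoint $F$-models and thus forces $\mathrm{OPT}(G)$ to be large --- or, after deleting a bounded-size ``apex'' set, $G$ contains a large \emph{flat wall} (Robertson--Seymour). A large flat wall that is disjoint from a solution again forces $\mathrm{OPT}(G)$ to be large, because a planar $F$ embeds into a large enough grid, so a large wall carries many vertex-disjoint $F$-models. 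Inside the wall one performs a reduction step whose cost is charged against these disjoint models; the randomization enters precisely here, replacing the FPT ``irrelevant-vertex'' machinery by a polynomial-time Monte Carlo step (random separation / color coding) that, with constant probability, either pins down a vertex irrelevant for \emph{every} small solution --- which can then be deleted from the instance, strictly decreasing its size --- or exposes a well-linked separator cutting off a protrusion, which is replaced by a bounded-size gadget in the spirit of Theorem~\ref{thm:protrusionReplacement}. Boosting the success probability and iterating $O(\log n)$ rounds drives the treewidth below $c_0$ while the total number of vertices ever deleted or charged stays $O(\mathrm{OPT}(G))$.

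Combining the ingredients: run the above procedure to obtain $X$ with $\tw(G-X)\le \eta$ and $|X|=O(\mathrm{OPT}(G))$, compute an optimum $\mc{F}$-deletion set $S'$ of $G-X$ by dynamic programming, and return $X\cup S'$; its size is $|X|+\mathrm{OPT}(G-X)\le O(\mathrm{OPT}(G))+\mathrm{OPT}(G)=O(\mathrm{OPT}(G))$, and the whole algorithm runs in randomized polynomial time.

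The hard part will be the reduction step of the second paragraph. Forcing it to run in \emph{polynomial} rather than FPT time rules out the clean irrelevant-vertex argument and makes randomization essential, and at the same time one must certify that every deleted or charged vertex can be amortized against a genuine family of vertex-disjoint $F$-models inside the wall. Keeping this bookkeeping consistent across the recursion --- handling the apex set produced by the flat-wall theorem, and ensuring that the protrusion replacements preserve both the treewidth bound and the $O(1)$ ratio --- is where essentially all the technical weight of the argument lies.
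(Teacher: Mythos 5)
This theorem is quoted from Fomin \emph{et al.}~[FLMS12]; the paper contains no proof of its own, so the only question is whether your sketch would establish the result. Your outer reduction is sound: since some $F\in\mc{F}$ is planar, every $\mc{F}$-minor-free graph has treewidth bounded by a constant $\eta=\eta(\mc{F})$, so it suffices to compute in randomized polynomial time a set $X$ with $\tw(G-X)\le\eta$ and $|X|=O(\mathrm{OPT})$, solve $\mc{F}$-\textsc{Deletion} exactly on the bounded-treewidth graph $G-X$, and return $X\cup S'$; the combining step at the end is correct as written.

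The entire weight of the theorem, however, is in producing such an $X$, and your second paragraph does not supply a credible argument for it. You propose to replace the FPT irrelevant-vertex machinery with a polynomial-time Monte Carlo step (``random separation / color coding'') that, with constant probability, either ``pins down a vertex irrelevant for \emph{every} small solution'' or exposes a protrusion behind a well-linked separator. No polynomial-time primitive of this kind is known for $\mc{F}$-\textsc{Deletion}: random separation and color coding are tools for \emph{finding} bounded-size witnesses with noticeable probability, not for certifying the universally quantified statement that a vertex is dispensable in every small solution. The amortization claim --- that across $O(\log n)$ rounds the total number of vertices deleted or charged stays $O(\mathrm{OPT}(G))$ --- is asserted with no charging scheme; a scheme that deletes one vertex per round for $O(\log n)$ rounds has no obvious relation to $\mathrm{OPT}$. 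And ``irrelevant vertex or protrusion behind a well-linked separator'' is not an established dichotomy for large-treewidth instances of this problem. In short, you have correctly reduced the theorem to the construction of a randomized $O(\mathrm{OPT})$-size treewidth modulator, which is indeed the crux of [FLMS12], but you have not given a viable route to that modulator; the argument in [FLMS12] is built around a recursion with an explicit charging mechanism and proceeds along substantially different lines than the irrelevant-vertex scheme you outline.
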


\begin{theorem}[Leaf and Seymour~\cite{LeSe12}]\label{thm:boundtw}
For every simple planar graph $F$ on $r$ vertices, every $F$-minor-free graph $G$ satisfies $\tw(G) \leq 2^{15r + 8r \log r}$.
\end{theorem}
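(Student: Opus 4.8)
The plan is to prove the contrapositive: if a graph $G$ has $\tw(G)\ge 2^{15r+8r\log_2 r}$, then $G$ contains $F$ as a minor. This separates into a purely combinatorial statement about planar graphs and a quantitative excluded-grid statement, which are then glued together.

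\textbf{Step 1 (realising $F$ inside a small grid).} First I would show that $F$, being simple and planar on $r$ vertices, is a minor of a moderately sized grid. A crude version is easy: fix a planar embedding of $F$ and ``thicken'' it, replacing each vertex, edge and face by a patch of a sufficiently fine grid and routing the incidences along grid lines; since $|E(F)|\le 3r-6$, this yields $F\minor\Gamma$ for the $(g\times g)$-grid $\Gamma$ with $g$ linear in $r$, say $g\le 6r$. For the sharp bound one needs the stronger fact that $F$ is a minor of some $a\times b$ grid with $a\le b$ and $ab=O(r)$ — equivalently of a wall whose short side is only $O(\sqrt r)$ — which follows from the same thickening argument applied to a recursively balanced layout obtained from the planar separator theorem, so that the total ``area'' used stays linear in $r$.

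\textbf{Step 2 (from large treewidth to a grid minor).} Since $G$ is $F$-minor-free and $F\minor\Gamma$, the graph $G$ is $\Gamma$-minor-free as well, so it suffices to bound $\tw(G)$ in terms of $g$ (more precisely, in terms of $a$ and $b$). A quick bound already drops out of the tools quoted above: applying Proposition~\ref{prop:tw-minor} with excluded graph $\Gamma$ (an $O(r^2)$-vertex graph) together with the Kawarabayashi--Kobayashi estimate $f_m(n)=2^{O(n^2\log n)}$ gives $\tw(G)=2^{O(r^2\log r)}$. To bring this down to $2^{15r+8r\log_2 r}$ one has to run the excluded-grid argument directly and carefully for the (possibly thin) $a\times b$ grid: extract from $G$ a bramble — equivalently a tangle, or a highly linked set — of order $\tw(G)+1$, and then route an $a\times b$ grid inside it. Because the short side is only $a=O(\sqrt r)$, the doubly-exponential loss is in $\sqrt r$ rather than in $r$ (it contributes a factor $2^{O(a^2\log a)}=2^{O(r\log r)}$), while the long side $b=O(r)$ only costs an extra polynomial factor; pushing the constants through the routing and the bramble order then produces exactly the exponent $15r+8r\log_2 r$.

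\textbf{Main obstacle.} The heart of the matter is Step 2 with explicit constants. The general excluded-grid theorem is notoriously lossy, and obtaining a singly-exponential bound — rather than the $2^{O(r^2\log r)}$ that Proposition~\ref{prop:tw-minor} immediately yields — forces one to run the bramble/tangle routing tailored to thin grids and to be scrupulous with every constant; this is where essentially all the work lies. A secondary difficulty is the sharp form of Step 1, i.e.\ realising an arbitrary planar graph on $r$ vertices as a minor of an $a\times b$ grid with $ab=O(r)$ and a good constant. Once both are in place, the final step — multiplying the two estimates and checking that the constants collapse to $15$ and $8$ — is routine arithmetic.
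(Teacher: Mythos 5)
This theorem is quoted from Leaf and Seymour~\cite{LeSe12} and stated without proof in the paper, so your proposal has to be measured against their argument rather than against anything in this manuscript. As written, both steps of your plan have gaps that cannot be filled, and the route you sketch is not the one Leaf and Seymour take.

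The sharper form of Step~1 --- that every simple $r$-vertex planar graph $F$ is a minor of some $a\times b$ grid with $a\le b$, $a=O(\sqrt r)$ and, crucially, $ab=O(r)$ --- is unsupported and almost certainly false as a general statement. What is actually known, and what Robertson, Seymour and Thomas use, is that $F$ is a minor of an $O(r)\times O(r)$ grid, i.e.\ area $O(r^2)$. The recursive-separator layout you invoke does not give linear area: balancing the recursion already contributes a $\log r$ factor, and for VLSI-style layouts of planar graphs the tight area bound is $\Theta(n\log^2 n)$, not $O(n)$, so there is no reason to expect a linear-area grid-minor representation either. Step~2 rests on an ``excluded thin-grid theorem'' asserting a loss of $2^{O(a^2\log a)}\cdot\mathrm{poly}(b)$; this is not a theorem in the literature, and you give no argument for why the exponential blow-up in the excluded-grid machinery would localize to the short side of the grid. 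Even granting both unproved ingredients, recovering exactly the constants $15$ and $8$ from ``routine arithmetic'' is pure optimism. More fundamentally, Leaf and Seymour's proof does not route $F$ through a grid at all: embedding $F$ in an $O(r)\times O(r)$ grid and invoking the Kawarabayashi--Kobayashi estimate can only yield $2^{O(r^2\log r)}$, and no thin-grid shortcut is available to repair this. Their argument constructs a minor model of $F$ directly inside a graph of large treewidth, and the $r\log r$ in the exponent comes from the bookkeeping of that direct construction, not from a product of an embedding area with an excluded-grid constant.
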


\begin{theorem}\label{thm:KernelFDeletionRandomized}
Let $\mc{F}$ be a finite set of connected graphs containing at least one $r$-vertex planar graph $F$, let $H$ be an $h$-vertex graph, and let $\textsc{CP}\mc{F}\textsc{D}_{H\mbox{\footnotesize{\emph{-top}}}}$ be the restriction of the \textsc{Connected-Planar-$\mc{F}$-Deletion} problem to input graphs which exclude $H$ as a topological minor.  Then $\textsc{CP}\mc{F}\textsc{D}_{H\mbox{\footnotesize{\emph{-top}}}}$ admits a linear randomized kernel of size at most  $f(r,h)\cdot k$, where $f$ is an explicit function depending only on $r$ and $h$, defined in Equation~(\ref{eq:kernelFDeletionRandom}) below.
\end{theorem}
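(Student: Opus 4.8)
The plan is to follow the same protrusion-replacement strategy as in Theorem~\ref{thm:KernelFDeletion}, but to replace the deterministic polynomial-time treewidth-modulator algorithm (Proposition~\ref{thm:protDecFDeletion}, which requires excluding $H$ as a \emph{minor}) with the randomized constant-factor approximation of Fomin \emph{et al.}\ (Theorem~\ref{thm:approx}), which works on \emph{arbitrary} graphs and hence in particular on $H$-topological-minor-free graphs. First I would run the randomized approximation algorithm on the instance $(G,k)$ to obtain, with constant probability, a set $S$ with $G-S$ being $F$-minor-free (for the planar $F\in\mc{F}$) and $|S|=O(k)$ — if the algorithm outputs a set of size more than $c\cdot k$ for the appropriate constant, we report \textsc{No}. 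Then by Theorem~\ref{thm:boundtw}, since $G-S$ is $F$-minor-free for an $r$-vertex planar graph $F$, we have $\tw(G-S)\leq 2^{15r+8r\log r}=:t_0$, so $S$ is a treewidth-modulator of constant width $t_0$ with $|S|=O(k)$.

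Next I would feed $S$ into the protrusion-decomposition algorithm of Theorem~\ref{thm:protDec} (Kim \emph{et al.}), which applies precisely to $H$-topological-minor-free graphs: this yields in linear time an $(\alpha\cdot k, 2t_0+h)$-protrusion decomposition of $G$, where $\alpha=O(\alpha_H\cdot t_0\cdot c)$ and $\alpha_H\leq 40h^22^{5h\log h}$. Setting $t:=2t_0+h$, I would then invoke Lemma~\ref{lem:encoderFdeletion}: the encoder $\EFD$ is a \textsc{Connected-}$\mc{F}$\textsc{-Deletion}-encoder and $\sim^*_{\EFD,g,t,\mc{G}}$ is DP-friendly for $g(t)=t$, where $\mc{G}$ is now the class of $H$-topological-minor-free graphs (here one uses the remark at the start of the subsection that $r_{\mc{G},t}\leq 2^{t\log t}\cdot h^t\cdot 2^{h^2}$ also holds for the topological-minor case, via the slight variation of rooted packings). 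With the bound $s_{\EFD}(t)\leq 2^{f(t,\mc{F})}$ from Equation~(\ref{eq:sizeEncoderFDeletion}), I would apply Corollary~\ref{corollary:withProtrusionDecomposition} to replace each of the $\ell=O(k)$ protrusions by a representative of size at most $b(\EFD,g,t,\mc{G})$ from Lemma~\ref{lem:finiteSize}, obtaining an equivalent instance of size at most
\begin{equation}\label{eq:kernelFDeletionRandom}
r^2\cdot 2^{O(h\log h)}\cdot 2^{O(r\log r)}\cdot b\!\left(\EFD,g,2^{O(r\log r)}+h,\mathcal{G}\right)\cdot k\ ,
\end{equation}
which is the claimed explicit function $f(r,h)\cdot k$; the randomization and the uniform polynomial running time are inherited solely from Theorem~\ref{thm:approx}, since every subsequent step is deterministic and (linear-)time constructive.

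The main subtlety, rather than a genuine obstacle, is bookkeeping: one must check that the constant $t_0$ from Theorem~\ref{thm:boundtw} depends only on $r$ (it does, being $2^{15r+8r\log r}$), that Theorem~\ref{thm:protDec} is indeed stated for the topological-minor-free setting (it is), and that the DP-friendliness argument of Lemma~\ref{lem:encoderFdeletion} — whose only structural input is the connectivity of the graphs in $\mc{F}$ together with the correctness of the rooted-packing dynamic programming of Adler \emph{et al.}\ — does not use minor-freeness of the host class at all, so it transfers verbatim. The one point requiring the short argument sketched above is that $r_{\mc{G},t}$ is still explicitly bounded when $\mc{G}$ excludes $H$ only as a topological minor; granting the variant of rooted packings mentioned in the text, the bound $2^{t\log t}\cdot h^t\cdot 2^{h^2}$ carries over, and plugging it into Lemma~\ref{lem:finite} and Lemma~\ref{lem:finiteSize} yields the explicit value of $b(\EFD,g,t,\mc{G})$ appearing in Equation~(\ref{eq:kernelFDeletionRandom}).
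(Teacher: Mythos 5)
Your proposal is essentially identical to the paper's own proof: run the randomized constant-factor approximation of Fomin \emph{et al.} to obtain a treewidth-modulator $X$ (declaring \textsc{No} if it is too large), use Leaf--Seymour to bound $\tw(G-X)$ by $2^{15r+8r\log r}$, feed $X$ into Theorem~\ref{thm:protDec} to get a linear protrusion decomposition on the $H$-topological-minor-free graph, and then apply Lemma~\ref{lem:encoderFdeletion} and Corollary~\ref{corollary:withProtrusionDecomposition} with the encoder $\EFD$; you also correctly note, as the paper does, that the bound on $r_{\mc{G},t}$ carries over to the topological-minor setting. The only divergence is cosmetic: the paper's Equation~(\ref{eq:kernelFDeletionRandom}) reads $\bigl(1+b(\EFD,g,2^{15r+8r\log r+1}+h,\mathcal{G})\bigr)\cdot\bigl(c_{\mc{F}}\cdot 40h^2\cdot 2^{15r+8r\log r+5h\log h}\bigr)\cdot k$, without the spurious $r^2$ factor you carried over from the $H$-minor-free case (there $\alpha$ picks up two factors of $r$ from Proposition~\ref{thm:protDecFDeletion}, whereas here $c_{\mc{F}}$ is a constant and the $t_0$-dependence is $2^{O(r\log r)}$), but this is absorbed and does not affect correctness.
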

\begin{proof}
Given an instance $(G,k)$ of $\textsc{CP}\mc{F}\textsc{D}_{H\mbox{\footnotesize{\emph{-top}}}}$, we first run the randomized polynomial-time approximation algorithm given by Theorem~\ref{thm:approx}, which achieves an expected constant ratio $c_{\mc{F}}$. If we obtain a solution $X \subseteq V(G)$ such that $|X| > c_{\mc{F}} \cdot k$, we declare that $(G,k)$ is a \textsc{No}-instance.  Otherwise, if $|X| \leq c_{\mc{F}} \cdot k$, we use the set $X$ as input to the algorithm given by Theorem~\ref{thm:protDec}. As by Theorem~\ref{thm:boundtw}  we have that $\tw(G-X) \leq 2^{15r + 8r \log r}$, we obtain in this way a $\left( c_{\mc{F}} \cdot 40h^2 \cdot 2^{15r + 8r\log r+5h\log h}\cdot k,   2^{15r + 8r\log r +1} + h \right)$-protrusion decomposition of $G$.
We now consider again the encoder  $\EFDall$  defined in Subsection~\ref{sec:Description-encoder-PlanarFDeletion}, and by Corollary~\ref{corollary:withProtrusionDecomposition} we obtain a kernel of size at most
\begin{equation}\label{eq:kernelFDeletionRandom}
\left(1 +  b\left(\EFD, g,2^{15r + 8r\log r +1} + h,\mathcal{G}\right)\right) \cdot  \left(c_{\mc{F}} \cdot 40h^2 \cdot 2^{15r + 8r\log r+5h\log h}\right)\cdot k \ ,
\end{equation}
where $b\left(\EFD,g, 2^{15r + 8r\log r +1} + h ,\mathcal{G}\right)$ is the function defined in Lemma~\ref{lem:finiteSize} and $\mathcal{G}$ is the class of $H$-topological-minor-free graphs.\end{proof}

We finally present a deterministic kernel, whose drawback is that the running time is nonuniform on $\mc{F}$ and $H$.

\begin{theorem}\label{thm:KernelFDeletionNonUniform}
Let $\mc{F}$ be a finite set of connected graphs containing at least one $r$-vertex planar graph $F$, let $H$ be an $h$-vertex graph, and let $\textsc{CP}\mc{F}\textsc{D}_{H\mbox{\footnotesize{\emph{-top}}}}$ be the restriction of the \textsc{Connected-Planar-$\mc{F}$-Deletion} problem to input graphs which exclude $H$ as a topological minor.  Then $\textsc{CP}\mc{F}\textsc{D}_{H\mbox{\footnotesize{\emph{-top}}}}$ admits a linear kernel of size at most  $f(r,h)\cdot k$, where $f$ is an explicit function depending only on $r$ and $h$, defined in Equation~(\ref{eq:kernelFDeletionNonUniform}) below.
\end{theorem}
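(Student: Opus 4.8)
The plan is to follow the template of Theorems~\ref{thm:KernelrDomSet} and~\ref{thm:KernelFDeletion}, except that on $H$-topological-minor-free graphs we cannot afford a polynomial-time treewidth-modulator; so instead of \emph{computing} a linear protrusion decomposition we only argue that one \emph{exists} on \textsc{Yes}-instances, and then locate large protrusions directly by brute force. Throughout, let $\mc{G}$ be the class of $H$-topological-minor-free graphs, let $\EFDall$ be the encoder of Subsection~\ref{sec:Description-encoder-PlanarFDeletion} (by Lemma~\ref{lem:encoderFdeletion} a \textsc{Connected-$\mc{F}$-Deletion}-encoder for which $\sim^*_{\EFD,g,t,\mc{G}}$ is DP-friendly for all $t$, with $g(t)=t$), and set $w:=2^{15r+8r\log r}$, $t':=2w+h$, $\alpha_{H}:=40h^{2}\cdot 2^{5h\log h}$, and $\beta:=b\bigl(\EFD,g,2t',\mc{G}\bigr)$ with $b$ the function of Lemma~\ref{lem:finiteSize}; recall that $r_{\mc{G},t}\le 2^{t\log t}\cdot h^{t}\cdot 2^{h^{2}}$, so all these quantities are explicit.

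First I would record that \textsc{Yes}-instances admit a linear protrusion decomposition. If $(G,k)$ is a \textsc{Yes}-instance, fix a solution $S$ with $|S|\le k$; since $F\in\mc{F}$, the graph $G-S$ is $F$-minor-free, so $\tw(G-S)\le w$ by Theorem~\ref{thm:boundtw}. Feeding $X:=S$ (hence $c:=1$, $t:=w$) to Theorem~\ref{thm:protDec} yields an $\bigl((\alpha_{H}w)k,\,t'\bigr)$-protrusion decomposition $Y_{0}\uplus Y_{1}\uplus\cdots\uplus Y_{\ell}$ of $G$ with $\ell\le\alpha_{H}wk$ and $|Y_{0}|\le\alpha_{H}wk$. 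Thus, if $|V(G)|>(1+\beta)\,\alpha_{H}w\,k$ then $\sum_{i\ge 1}|Y_{i}|=|V(G)|-|Y_{0}|>\beta\,\alpha_{H}wk$, so some $Y_{i^{*}}$ has $|Y_{i^{*}}|>\beta$ and $W_{i^{*}}:=Y_{i^{*}}\cup N_{Y_{0}}(Y_{i^{*}})$ is a $t'$-protrusion of $G$ with $|W_{i^{*}}|>\beta$.

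The algorithm then iterates the following while $|V(G)|>(1+\beta)\alpha_{H}wk$. It enumerates all $B\subseteq V(G)$ with $|B|\le t'$ ($n^{O(t')}$ candidates) and, for each $B$, forms $W_{B}:=B\cup\bigcup\{C:\ C\text{ a component of }G-B,\ \tw(G[B\cup C])\le t'-1\}$, each treewidth test being linear-time for fixed $t'$ by Bodlaender's algorithm. Gluing along a common bag containing $B$ the tree decompositions of the graphs $G[B\cup C]$ (widened to contain $B$) shows $\tw(G[W_{B}])\le 2t'-1$ and $\mathbf{bd}(W_{B})\subseteq B$, so $W_{B}$ is a $2t'$-protrusion. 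If some $W_{B}$ has $|W_{B}|>\beta$, apply Theorem~\ref{thm:protrusionReplacement} to this $2t'$-protrusion, replacing it by $W_{B}'$ with $|W_{B}'|\le\beta$ and obtaining an equivalent instance $\bigl((G-(W_{B}-\partial(W_{B})))\oplus W_{B}',\,k'\bigr)$ with $k'\le k$; since $|W_{B}'|<|W_{B}|$, this strictly shrinks $G$, and we iterate. If \emph{no} $W_{B}$ has size more than $\beta$, we output a trivial \textsc{No}-instance: this is correct, because were $(G,k)$ a \textsc{Yes}-instance, the protrusion $W_{i^{*}}$ of the previous paragraph would force, on the guess $B=\mathbf{bd}(W_{i^{*}})$, every component of $G-B$ contained in $W_{i^{*}}$ to have treewidth at most $\tw(G[W_{i^{*}}])\le t'-1$, hence to lie in $W_{B}$, giving $|W_{B}|\ge|W_{i^{*}}|>\beta$, a contradiction. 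When the loop halts we output $(G,k)$, a kernel of size at most
\begin{equation}\label{eq:kernelFDeletionNonUniform}
\left(1+b\left(\EFD,g,\,2^{15r+8r\log r+2}+2h,\,\mc{G}\right)\right)\cdot 40h^{2}\cdot 2^{5h\log h}\cdot 2^{15r+8r\log r}\cdot k,
\end{equation}
with $g(t)=t$; each iteration runs in time $n^{O(t')}$ and strictly decreases $|V(G)|$, so the total running time is $n^{O(t')}=n^{f(H,\mc{F})}$.

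I expect the crux to be the \emph{effective} search for a large protrusion without a modulator: the boundary-guessing plus greedy union of low-treewidth components must be shown to recover a protrusion at least as large as the one guaranteed by the decomposition, and the resulting (at most doubled) width $2t'$ must be tracked so that Theorem~\ref{thm:protrusionReplacement} still applies with the bound $b(\EFD,g,2t',\mc{G})$ — this width blow-up and the $n^{O(t')}$ enumeration are exactly what turn the procedure from \FPT\ into the nonuniform $n^{f(H,\mc{F})}$ running time. Everything else (correctness of the \textsc{No}-declaration, the size bound from the stopping condition, and the fact that the replacement preserves membership in $\mc{G}$) is routine given Lemmas~\ref{lem:encoderFdeletion} and~\ref{lem:finiteSize} together with Theorems~\ref{thm:protDec},~\ref{thm:boundtw}, and~\ref{thm:protrusionReplacement}.
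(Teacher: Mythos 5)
Your proposal is correct and follows essentially the same route as the paper: argue that a \textsc{Yes}-instance admits a linear $(\alpha_H w k, t')$-protrusion decomposition via Theorems~\ref{thm:boundtw} and~\ref{thm:protDec}, then greedily locate and replace large protrusions by brute-force boundary enumeration, declaring \textsc{No} when the graph is still too big but no large protrusion exists. The only divergence is that the paper invokes a procedure from~\cite{BFL+09} that finds a $t'$-protrusion of size $>x$ in $n^{O(t')}$ time \emph{without} increasing the width, whereas your self-contained construction (guess $B$, take the union of all low-treewidth components, glue decompositions with $B$ added to every bag) outputs a $2t'$-protrusion; this factor-2 width blow-up propagates into $b(\EFD,g,\cdot,\mc{G})$ and hence yields a somewhat larger, but still explicit, kernel bound — which is all the theorem statement requires.
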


\begin{proof} The main observation is that if $(G,k) \in \textsc{CP}\mc{F}\textsc{D}_{H\mbox{\footnotesize{-top}}}$, then there exists a set $X \subseteq V(G)$ with $|X| \leq k$ such that $G-X$ is ${\mc{F}}$-minor-free. In particular, by Theorem~\ref{thm:boundtw} it holds that
$\tw(G-X) \leq 2^{15r + 8r \log r}$. Therefore, we know by Theorem~\ref{thm:protDec} that if $(G,k) \in \textsc{CP}\mc{F}\textsc{D}_{H\mbox{\footnotesize{-top}}}$, then $G$ admits a $(40 \cdot h^2 \cdot 2^{15r + 8r \log r +5h \log h}\cdot k, 2^{15r + 8r \log r+1}+ h)$-protrusion decomposition. Nevertheless, we do not have tools to efficiently find such linear decomposition. However, we use that, as observed in~\cite{BFL+09},
a $t$-protrusion of size more than a prescribed number $x$  in an $n$-vertex graph can be found in $n^{O(t)}$ steps, it if exists. Our kernelization algorithm proceeds as follows. We try to find a $(2^{15r + 8r \log r+1}+ h)$-protrusion $Y$ of size strictly larger than $x:=b(\EFD,g,2^{15r + 8r \log r+1}+ h,\mathcal{G})$, where $\EFD$ is the encoder for \textsc{$\mc{F}$-Deletion} described in Subsection~\ref{sec:Description-encoder-PlanarFDeletion}, $b\left(\EFD, g,2^{15r + 8r\log r +1} + h ,\mathcal{G}\right)$ is the function defined in Lemma~\ref{lem:finiteSize}, and $\mathcal{G}$ is the class of $H$-topological-minor-free graphs. If we succeed, we apply the protrusion replacement algorithm given by Theorem~\ref{thm:protrusionReplacement} and replace $Y$ with another $t$-boundaried graph $Y'$ such that $|Y'| \leq b\left(\EFD, g,2^{15r + 8r\log r +1} + h ,\mathcal{G}\right)$.
The algorithm continues as far as we are able to find such large protrusion. At the end of this procedure, we either obtain an equivalent instance of size at most
\begin{equation}\label{eq:kernelFDeletionNonUniform}
b\left(\EFD, g,2^{15r + 8r\log r +1} + h ,\mathcal{G}\right) \cdot  40 \cdot h^2 \cdot 2^{15r + 8r \log r +5h \log h}\cdot k\ ,
\end{equation}
or otherwise we can correctly declare that $(G,k)$ is a \textsc{No}-instance. This kernelization algorithm runs in time  $n^{O(2^{15r + 8r \log r+1}+ h)}$.\end{proof}

%\ig{$\bigstar$ Update this paragraph with the function $f_m$ $\bigstar$} 
To conclude this section, we would like to note that the recent results of  Chekuri and Chuzhoy\ci{ChCh13} show that in Theorem~\ref{thm:boundtw},  the inequality $\tw(G) \leq 2^{15r + 8r \log r}$ can be replaced with $\tw(G) = r^{O(1)}$. This directly implies that in Equations~(\ref{eq:kernelFDeletionRandom}) and~(\ref{eq:kernelFDeletionNonUniform}), as well as in the running time of the algorithm of Theorem~\ref{thm:KernelFDeletionNonUniform}, the term $2^{15r + 8r \log r}$ can be replaced with $r^{O(1)}$. Nevertheless, we decided to keep the current bounds in order to be able to give explicit constants.

%-------------------------------------------------------------------------------------------------------------------------------
\section{Further research}
\label{sec:further}
% !TEX root = KernelsViaDP_arXiv4.tex
The methodology for performing explicit protrusion replacement via dynamic programming that we have presented is quite general, and it could also be used to obtain polynomial kernels (not necessarily linear).
%we believe that it may have applications in other contexts that involve dynamic %programming, other than kernelization.
We have restricted ourselves to vertex-certifiable problems, but is seems plausible that our approach could be also extended to edge-certifiable problems or  to problems on directed graphs.

%\begin{itemize}
%\item[$\bullet$] Develop ``branchwidth-protrusions''.
%\item[$\bullet$] Combined problem with parameters $r$ and $|H|$.
%\end{itemize}

We have presented in Section~\ref{sec:PlanarFDeletion} a linear kernel for \textsc{Connected-Planar-$\mc{F}$-Deletion} when the input graph excludes a fixed graph $H$ as a (topological) minor.  The \textsc{Planar-$\mc{F}$-Deletion} problem is known to admit a polynomial kernel on general graphs~\cite{FLMS12}. Nevertheless, this kernel has size $O(k^c)$, where $c$ is a constant depending on $\mc{F}$ that is upper-bounded by $2^{2^{r^{10}}}$, where $r$ is the size of a largest graph in $\mc{F}$. The existence of a \emph{uniform} polynomial kernel (that is, a polynomial kernel whose degree does not depend on the family $\mc{F}$) for \textsc{Planar-$\mc{F}$-Deletion} on general graphs remains open.

As mentioned above, our linear kernel for \textsc{Planar-$\mc{F}$-Deletion}
%and \textsc{Planar-$\mc{F}$-Packing}
requires that all graphs in the family $\mc{F}$ are {\sl connected}. It would be interesting to get rid of this assumption. On the other hand, in the linear kernel for \textsc{Connected-Planar-$\mc{F}$-Deletion} on $H$-topological-minor-free graphs given in Theorem~\ref{thm:KernelFDeletionRandomized}, the randomization appears because we use the randomized constant-factor approximation for \textsc{Planar-$\mc{F}$-Deletion} on general graphs~\cite{FLMS12}, but for our kernel to be deterministic, it would be enough with a constant-factor approximation on $H$-topological-minor-free graphs, which is not known.

All the applications examined in this paper concerned parameterized problems tuned
by a secondary parameter, i.e., $r$ for the case of \textsc{$r$-Dominating Set} and \textsc{$r$-Scattered Set}
and the size of the graphs in ${\cal F}$ for the case of ${\cal F}$-\textsc{Deletion}.
%and ${\cal F}$-\textsc{Packing}.
In all kernels derived for these problems,
the dependance on this secondary parameter is triple-exponential, while
the dependance on the excluded graph $H$ involves the functions $f_m$ and $f_c$ defined in Section~\ref{sec:prelim}. Two questions arise:
\begin{itemize}
\item[$\bullet$] Extend our results to larger graph classes and more general problems. Also, improve the dependance of the size of the kernels on the ``meta-parameters'' associated with the problems (that is, $r$, $\mc{F}$, and $H$). Probably the recent results of  Chekuri and Chuzhoy\ci{ChCh13} can be used in this direction. Moreover, provide refinements of this framework that can lead to reasonable explicit bounds for the kernels for particular problems.
\item[$\bullet$] Examine to what extent this exponential dependance is unavoidable under some
assumptions based on automata theory or (parameterized) complexity theory.
We suspect that the unification
between dynamic programming and kernelization that we propose in this paper might
offer a common understanding of the lower bounds
in the running time of dynamic programming algorithms for certain problems (see~\cite{LokshtanovMS11slig,LokshtanovMS11know})
and the sizes of their corresponding kernels (see for instance~\cite{HolgerM10sati,BodlaenderJK11cros,Bodlaender09kern,BodlaenderDFH08onpr}). Finally, we refer the reader to~\cite{ACPS93} for constructibility issues of algebraic graph reduction.

% erased refs: BodlaenderJK11prep, Kratsch11cono

%\item[$\bullet$] \red{Are there problems that fit into our framework that are NOT expressible in MSOL?}
%
%\item[$\bullet$]
%In several cases, typically for problems that are not \emph{strongly monotone}~\cite{BFL+09} \red{(see Sections ...)}, we need to ``force the confinement'' of an encoder $\mc{E}$ that may not be confined according to Definition~\ref{def:thin}...  \xtof{Do we mean that strong monotonicity implies confinement ?}
\end{itemize}

\vspace{.15cm}

{\small \noindent \textbf{Acknowledgement.} We wish to thank Fedor V. Fomin and Saket Saurabh for their advises and comments  on this work.  Also, we are particularly indebted to Daniel Lokshtanov for his insightful remarks and suggestions. 
Finally, we would  like to thank the anonymous referees of the conference version of this paper for helpful remarks that improved the presentation of the manuscript. } 

%-------------------------------------------------------------------------------------
%{\small
\bibliographystyle{abbrv}
%\bibliography{linearKernels}

\begin{thebibliography}{10}

\bibitem{ADF+11}
I.~Adler, F.~Dorn, F.~V. Fomin, I.~Sau, and D.~M. Thilikos.
\newblock Faster parameterized algorithms for minor containment.
\newblock {\em Theoretical Comput. Science}, 412(50):7018--7028, 2011.

\bibitem{AFN04}
J.~Alber, M.~Fellows, and R.~Niedermeier.
\newblock {Polynomial-Time Data Reduction for Dominating Set}.
\newblock {\em Journal of the ACM}, 51(3):363--384, 2004.

\bibitem{ACPS93}
S.~Arnborg, B.~Courcelle, A.~Proskurowski, and D.~Seese.
\newblock An algebraic theory of graph reduction.
\newblock {\em Journal of the ACM}, 40(5):1134--1164, 1993.

\bibitem{Bod96}
H.~L. Bodlaender.
\newblock A linear-time algorithm for finding tree-decompositions of small
  treewidth.
\newblock {\em SIAM Journal on Computing}, 25(6):1305--1317, 1996.

\bibitem{Bodlaender09kern}
H.~L. Bodlaender.
\newblock {Kernelization: New Upper and Lower Bound Techniques}.
\newblock In {\em Proc. of the 4th International Workshop on Parameterized and
  Exact Computation (IWPEC)}, volume 5917 of {\em LNCS}, pages 17--37, 2009.

\bibitem{BodlaenderDFH08onpr}
H.~L. Bodlaender, R.~G. Downey, M.~R. Fellows, and D.~Hermelin.
\newblock {On Problems without Polynomial Kernels (Extended Abstract)}.
\newblock In {\em Proc. of the 35th International Colloquium on Automata,
  Languages and Programming (ICALP)}, pages 563--574, 2008.

\bibitem{BFL+09}
H.~L. Bodlaender, F.~V. Fomin, D.~Lokshtanov, E.~Penninkx, S.~Saurabh, and
  D.~M. Thilikos.
\newblock {(Meta) Kernelization}.
\newblock In {\em Proc. of the 50th IEEE Symposium on Foundations of Computer
  Science (FOCS)}, pages 629--638. IEEE Computer Society, 2009.

\bibitem{BodlaenderJK11cros}
H.~L. Bodlaender, B.~M.~P. Jansen, and S.~Kratsch.
\newblock Cross-composition: A new technique for kernelization lower bounds.
\newblock In {\em Proc. of the 28th International Symposium on Theoretical
  Aspects of Computer Science (STACS)}, volume~9 of {\em LIPIcs}, pages
  165--176, 2011.

\bibitem{BvF01}
H.~L. Bodlaender and B.~van Antwerpen-de Fluiter.
\newblock Reduction algorithms for graphs of small treewidth.
\newblock {\em Information and Computation}, 167(2):86--119, 2001.

\bibitem{ChCh13}
C.~Chekuri and J.~Chuzhoy.
\newblock {Polynomial Bounds for the Grid-Minor Theorem}.
\newblock {\em CoRR}, abs/1305.6577, 2013.

\bibitem{Cou90}
B.~Courcelle.
\newblock {The Monadic Second-Order Logic of Graphs. I. Recognizable Sets of
  Finite Graphs}.
\newblock {\em Information and Computation}, 85(1):12--75, 1990.

\bibitem{HolgerM10sati}
H.~Dell and D.~van Melkebeek.
\newblock Satisfiability allows no nontrivial sparsification unless the
  polynomial-time hierarchy collapses.
\newblock In {\em Proc. of the 42nd ACM symposium on Theory of computing
  (STOC)}, pages 251--260. ACM, 2010.

\bibitem{DFHT05}
E.~D. Demaine, F.~V. Fomin, M.~T. Hajiaghayi, and D.~M. Thilikos.
\newblock Fixed-parameter algorithms for $(k, r)$-center in planar graphs and
  map graphs.
\newblock {\em ACM Transactions on Algorithms}, 1(1):33--47, 2005.

\bibitem{DemaineH08line}
E.~D. Demaine and M.~Hajiaghayi.
\newblock Linearity of grid minors in treewidth with applications through
  bidimensionality.
\newblock {\em Combinatorica}, 28(1):19--36, 2008.

\bibitem{Die05}
R.~Diestel.
\newblock {\em {Graph Theory}}, volume 173.
\newblock Springer-Verlag, 2005.

\bibitem{DF99}
R.~G. Downey and M.~R. Fellows.
\newblock {\em {Parameterized Complexity}}.
\newblock Springer, 1999.

\bibitem{FeigeHL08impr}
U.~Feige, M.~Hajiaghayi, and J.~R. Lee.
\newblock Improved approximation algorithms for minimum weight vertex
  separators.
\newblock {\em SIAM Journal on Computing}, 38(2):629--657, 2008.

\bibitem{FlGr06}
J.~Flum and M.~Grohe.
\newblock {\em Parameterized Complexity Theory}.
\newblock Springer Verlag, 2006.

\bibitem{FGT11}
F.~V. Fomin, P.~A. Golovach, and D.~M. Thilikos.
\newblock Contraction obstructions for treewidth.
\newblock {\em Journal of Combinatorial Theory, Series B}, 101(5):302--314,
  2011.

\bibitem{FLMS12}
F.~V. Fomin, D.~Lokshtanov, N.~Misra, and S.~Saurabh.
\newblock {Planar $\mathcal{F}$-Deletion: Approximation, Kernelization and
  Optimal FPT Algorithms}.
\newblock In {\em Proc. of the 53rd IEEE Symposium on Foundations of Computer
  Science (FOCS)}, pages 470--479, 2012.

\bibitem{FominLRS11bidi}
F.~V. Fomin, D.~Lokshtanov, V.~Raman, and S.~Saurabh.
\newblock {B}idimensionality and {EPTAS}.
\newblock In {\em Proc. of the 22nd ACM-SIAM Symposium on Discrete Algorithms
  (SODA)}, pages 748--759. SIAM, 2011.

\bibitem{FLST10}
F.~V. Fomin, D.~Lokshtanov, S.~Saurabh, and D.~M. Thilikos.
\newblock Bidimensionality and kernels.
\newblock In {\em Proc. of the 21st ACM-SIAM Symposium on Discrete Algorithms
  (SODA)}, pages 503--510. SIAM, 2010.

\bibitem{GuNi07}
J.~Guo and R.~Niedermeier.
\newblock Linear problem kernels for {NP}-hard problems on planar graphs.
\newblock In {\em Proc. of the 34th International Colloquium on Automata,
  Languages and Programming (ICALP)}, volume 4596 of {\em LNCS}, pages
  375--386, 2007.

\bibitem{KaKo12}
K.~ichi Kawarabayashi and Y.~Kobayashi.
\newblock {Linear min-max relation between the treewidth of $H$-minor-free
  graphs and its largest grid}.
\newblock In {\em Proc. of the 29th International Symposium on Theoretical
  Aspects of Computer Science (STACS)}, volume~14 of {\em LIPIcs}, pages
  278--289, 2012.

\bibitem{KLP+12}
E.~J. Kim, A.~Langer, C.~Paul, F.~Reidl, P.~Rossmanith, I.~Sau, and S.~Sikdar.
\newblock Linear kernels and single-exponential algorithms via protrusion
  decompositions.
\newblock In {\em Proc. of the 40th International Colloquium on Automata,
  Languages and Programming (ICALP)}, volume 7965 of {\em LNCS}, pages
  613--624, 2013.
\newblock Full version available at \emph{CoRR}, abs/1207.0835.

\bibitem{Klo94}
T.~Kloks.
\newblock {\em Treewidth. Computations and Approximations}.
\newblock Springer-Verlag, LNCS, 1994.

\bibitem{LeSe12}
A.~Leaf and P.~D. Seymour.
\newblock Treewidth and planar minors.
\newblock Technical report, Princeton University, 2012.

\bibitem{LokshtanovMS11know}
D.~Lokshtanov, D.~Marx, and S.~Saurabh.
\newblock Known algorithms on graphs of bounded treewidth are probably optimal.
\newblock In {\em Proc. of the 22nd ACM-SIAM Symposium on Discrete Algorithms
  (SODA)}, pages 777--789. SIAM, 2011.

\bibitem{LokshtanovMS11slig}
D.~Lokshtanov, D.~Marx, and S.~Saurabh.
\newblock Slightly superexponential parameterized problems.
\newblock In {\em Proc. of the 22nd ACM-SIAM Symposium on Discrete Algorithms
  (SODA)}, pages 760--776. SIAM, 2011.

\bibitem{RS95}
N.~Robertson and P.~D. Seymour.
\newblock {Graph Minors. XIII. The Disjoint Paths Problem}.
\newblock {\em Journal of Combinatorial Theory, Series B}, 63(1):65--110, 1995.

\bibitem{vanF97}
B.~van Antwerpen-de Fluiter.
\newblock {\em Algorithms for graphs of small treewidth}.
\newblock PhD thesis, Utrecht University, 1997.

\end{thebibliography}

%}

\end{document}